\definecolor{Saddle}{HTML}{8b4513}
\definecolor{NewGrey}{HTML}{85888D}
\definecolor{green1}{rgb}{0,0.5,0} 
\numberwithin{equation}{section}
\newcommand{\sk}{s_{\kappa}}
\newcommand{\sQ}{s_\textit{\tiny Q }}
\newcommand{\sG}{s_\textit{\tiny G}}
\renewcommand{\epsilon}{\varepsilon}
\renewcommand{\leq}{\leqslant}
\renewcommand{\geq}{\geqslant}
\newcommand{\be}{\begin{equation}}
\newcommand{\ee}{\end{equation}}
\newcommand{\eq}[1]{\eqref{#1}}
\renewcommand{\phi}{\varphi}
\renewcommand{\epsilon}{\varepsilon}
\newcommand{\cO}{\mathcal{O}}
\newcommand{\Nhomeo}{N^{h}}
\newcommand{\Gonehomeo}{G_1^{h}}
\newcommand{\Gtwohomeo}{G_2^{h}}
\newcommand{\Gprod}{G_{\!\textit{prod}}}
\newcommand{\VN}{V_{\!N_{\!M\!}}}
\newcommand{\aNM}{a_{N_{\!M}}}
\newcommand{\tauNM}{\tau_{N_{\!M\!}}}
\newcommand{\gammaNM}{\gamma_{N_{\!M}}}
\newcommand{\bNP}{b_{N_{\!P}}}
\newcommand{\tauNP}{\tau_{N_{\!P}}}
\newcommand{\etaNP}{\eta_{N_{\!P}}}
\newcommand{\etaNPhomeo}{\eta_{N_{\!P}}^{h}}
\newcommand{\tauQ}{\tau_Q}
\newcommand{\tauN}{\tau_N}
\newcommand{\kappaNhomeo}{\kappa^{h}}
\newcommand{\kappaN}{\kappa}
\newcommand{\NR}{N_{\!R}}
\newcommand{\NRhomeo}{\NR^{h}}
\newcommand{\Ntott}{[\NR(t)+N(t)]}
\newcommand{\gammaNR}{\gamma_{N_{\!R}}}
\newcommand{\ftrans}{\phi_{N_{\!R}}}
\newcommand{\ftranshomeo}{\phi_{N_{\!R}}^{h}}
\newcommand{\kANC}{k_\textit{\tiny ANC}}
\newcommand{\kP}{k_\textit{\tiny P}}
\newcommand{\kcirc}{k_\textit{circ}}
\newcommand{\EDrug}{E_\textit{Drug}}
\newcommand{\TG}{t_\textit{\tiny G}}
\newcommand{\TN}{t_\textit{\tiny N}}
\newcommand{\brackthat}{(\hspace{.05em}\hat{t}\hspace{.05em})}
\newcommand{\that}{\hat{t}}
\newcommand{\TimeDeriv}{\frac{\textrm{d}}{\textrm{d}t}}
\newcommand{\TimeDerivD}{\dfrac{\textrm{d}}{\textrm{d}t}}
\journalname{\small Journal of Pharmacokinetics and Pharmacodynamics}
\begin{document}

\title{Transit and lifespan in neutrophil production: implications for drug intervention
}


\author{Daniel~C\^amara~De~Souza* \and
	Morgan~Craig*        \and
	Tyler~Cassidy \and
         Jun~Li \and
         Fahima~Nekka \and
         Jacques~B\'elair \and
          Antony~R~Humphries
}

\authorrunning{Daniel C\^amara De Souza, Morgan Craig et al.} 

\institute{Daniel C{\^a}mara De Souza \at
            *Co-first author\\
              Department of Mathematics \& Statistics, McGill University, Montreal, QC, Canada, H3A 0B9\\
              \email{daniel.desouza@mail.mcgill.ca}
           \and
	Morgan Craig \at
	             *Co-first author\\
              Program for Evolutionary Dynamics, Harvard University, Cambridge, MA, USA, 02138\\
              \email{morganlainecraig@fas.harvard.edu}          
           \and
           Tyler Cassidy \at
              Department of Mathematics \& Statistics, McGill University, Montreal, QC, Canada, H3A 0B9\\
              \email{tyler.cassidy@mail.mcgill.ca}
            \and
           Jun Li \at
              Facult{\'e} de Pharmacie, Universit{\'e} de Montr{\'e}al, Montr{\'e}al, QC, Canada, H3C 3J7\\
              \email{jun.li.2@umontreal.ca}
              \and
           Fahima Nekka \at
              Facult{\'e} de Pharmacie, Universit{\'e} de Montr{\'e}al, Montr{\'e}al, QC, Canada, H3C 3J7\\
              \email{fahima.nekka@umontreal.ca}
            \and
           Jacques B{\'e}lair \at
              D{\'e}partement de math{\'e}matiques et de statistique, Universit{\'e} de Montr{\'e}al, Montr{\'e}al, QC, Canada, H3T 1J4\\
              \email{jbelair@dms.umontreal.ca}
               \and
           Antony R Humphries \at
              Departments of Mathematics \& Statistics, and Physiology, McGill University, Montreal, QC, Canada, H3A 0B9\\
              \email{tony.humphries@mcgill.ca}
}

\date{}

\maketitle

\begin{abstract}
We compare and contrast the transit compartment ordinary differential equation modelling approach with distributed and discrete delay differential equation models.
We focus on Quartino's extension to the Friberg transit compartment model of myelosuppression, widely relied upon in the pharmaceutical sciences to predict the neutrophil response after chemotherapy, and on a QSP delay differential equation model of granulopoiesis. We extend the Quartino model by considering a general number of transit compartments and introduce an extra parameter which allows us to decouple the maturation time from the production rate of cells,
and review the well established linear chain technique from the delay differential equation (DDE) literature which can be used to reformulate transit compartment models with constant transit rates as distributed delay DDEs. We perform a state-dependent time rescaling of the Quartino model in order to apply the linear chain technique and rewrite the Quartino model as a distributed delay DDE, which yields a discrete delay DDE model in a certain parameter limit. We then perform stability and bifurcation analyses on the models to situate such studies in a mathematical pharmacology context.

We show that both the original Friberg and the Quartino extension model incorrectly define the mean maturation time, essentially treating the proliferative pool as an additional maturation compartment, which can have far reaching consequences on the development of future models of myelosuppression in PK/PD.

\keywords{Granulopoiesis \and mathematical pharmacology \and delay differential equations \and bifurcation analyses \and transit compartment models \and linear chain technique}
\end{abstract}

%
%

\section{Introduction}
\label{sec:intro}

In the pharmaceutical sciences, the concept of lag time, or the delay between the administration and the absorption of a drug, is a
well-established phenomenon which is often accounted for \cite{Steimer1982}.
Physiologically-based pharmacokinetic models incorporating absorption models like the ACAT or ADAM  \cite{Agoram2001,Jamei2009} were indeed conceived and developed in part to account for the enterohepatic circulation that contributes to the delay in drug concentrations in the blood after oral administration. However, regardless of the administration of a xenobiotic, various forms of delays are present throughout physiological systems. In addition to pharmacokinetic lags, systems-level delays play an important role in determining the pharmacodynamic response to treatment. As examples, intracellular and intrinsic viral delays contribute to more complicated viral load decay in patients with human immunodeficiency virus being treated with antiretroviral drugs \cite{Dixit2004}, and the hematopoietic system displays multiple delays along the pathways from the pluripotent hematopoietic stem cells (HSCs) to terminally differentiated circulating cells \cite{Mackey1990}.

Granulopoiesis, the process of neutrophil production, in particular, exhibits multiple delays and has been studied in depth owing to the role neutrophils play in the innate (and adaptive) immune response \cite{Mantovani2011}. Within the pharmaceutical context, neutropenia is a toxic side effect of chemotherapy, and impacts heavily on treatment success and overall survival outcomes \cite{Sternberg2006,Gruber2011}.
There is therefore an established interest in mathematical models that can predict the response to chemotherapeutic drugs \cite{Friberg2002,Foley2008,Schirm2014} and accurately represent the feedback mechanisms regulating neutrophil homeostasis \cite{Craig2016c,Krinner2013}.

To maintain basal circulating neutrophil concentrations, multipotent progenitor HSCs in the bone marrow differentiate into the myeloid lineage on their way to becoming circulating neutrophils. After commitment, cells proliferate and undergo several divisions during a phase where cell numbers increase exponentially. After proliferation, neutrophil progenitors no longer divide. Instead, they grow in size and number of receptors before being sequestered into a marrow reservoir \cite{Rankin2010a}, where they either die through apoptosis or transit into circulation \cite{Christopher2007}. Once they exit from the bone marrow,  neutrophils circulate very transiently, with a half-removal time on the order of 7-10 hours \cite{VonVietinghoff2008}, as they either rapidly die or marginate into tissues \cite{Rankin2010a}. Granulopoiesis is controlled by various cytokines, of which granulocyte colony-stimulating factor (G-CSF) is the principal actor \cite{Ward1999a}. By binding to receptors on the neutrophil membranes, G-CSF regulates the rate at which neutrophils are released into circulation, and modulates up-stream factors (differentiation into the myeloid lineage, proliferation of upstream neutrophil progenitors, speed of maturation) to replenish and regulate the concentration of neutrophils in the bone marrow reservoir. G-CSF is then internalised by the neutrophils and removed from circulation. In the case of elevated circulating concentrations, G-CSF is also cleared via a linear, renal pathway \cite{Lyman2011} and these dual routes of elimination are important determinants of the PKs of G-CSF \cite{Craig2016c}. An overview of the process of neutrophil production is given in Figure~\ref{fig:NeutrophilDevelopment}.
\begin{figure}[!ht]
\begin{center}
\includegraphics[scale=0.4]{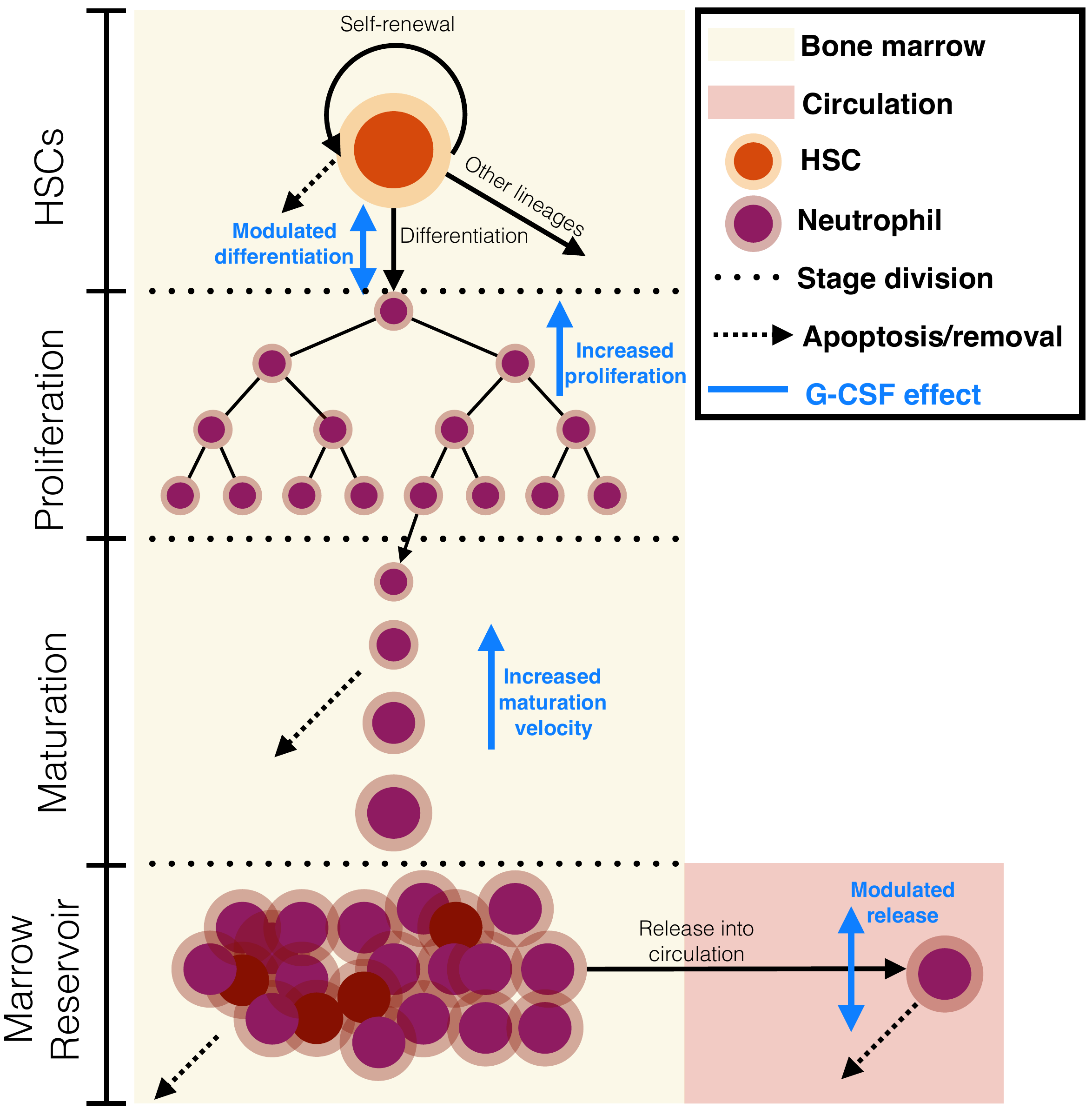}
\caption{An overview of granulopoiesis. As with all blood cells, neutrophils begin as hematopoietic stem cells (HSCs--orange circle) in the bone marrow (pale yelow background), where they develop. HSCs are capable of self-renewal and are subject to cell death (dashed arrows). HSCs may also differentiate into one of the blood cell lines, including the neutrophils (purple circles). After commitment to the neutrophil lineage, cells undergo a period of proliferative expansion after which they no longer divide. Post-mitotic neutrophils then mature, growing in size and gaining receptors. At the end of the maturation process, cells are then stored in the bone marrow reservoir from which they egress to reach circulation (pale red background) before removal (by margination or death). G-CSF acts to modulate the rate of exit from the marrow reservoir, increase the rates of maturation and proliferation, and to modulate the rate of differentiation into the neutrophil lineage (G-CSF actions represented by blue vertical arrows). Figure reproduced from ``Towards quantitative systems pharmacology models of chemotherapy-induced neutropenia'',  \textit{CPT: Pharmacometrics and Systems Pharmacology}, 2017 (to appear), Craig, M., \cite{Craig2017} with the permission of Wiley.}
\label{fig:NeutrophilDevelopment}
\end{center}
\end{figure}

Mathematical representations of granulopoiesis (and other similar physiological delay systems) fall into three classes: transit compartment models where delays are represented via a chain of first-order ordinary differential equations (ODEs), distributed delay systems where integro-differential equations represent a delay that takes a range of values determined through some probability distribution \cite{Campbell2009,Hearn1998,Adimy2007,Vainstein2005}, or delay differential equation (DDE) systems where the present state depends on past states via fixed or state-dependent delays \cite{Foley2008,Brooks2012,Craig2016c} (for more detailed discussions on the various models used in modelling hematopoiesis and chemotherapy-induced neutropenia, see \cite{Pujo-Menjouet2016} and \cite{Craig2017}, respectively).

Here we focus on two models of granulopoiesis in particular: the Quartino model \cite{Quartino2014} and the
Quantitative systems pharmacology (QSP) model of Craig~\cite{Craig2016c}.
The Quartino transit compartment ODE model accounts for the effects and PKs of endogenous G-CSF
and is an extension of the widely-used Friberg model \cite{Friberg2002,Friberg2003}, while the
QSP granulopoiesis model of~\cite{Craig2016c} is a state-dependant delay DDE model that incorporates the concentrations of unbound G-CSF and G-CSF bound to its neutrophil receptors. We will show that the
Quartino model \cite{Quartino2014} can be reformulated as a distributed delay DDE, which becomes a discrete-delay DDE in a certain parameter limit. This reformulation of the Quartino model leads to some additional insight on parameter choices and will lead us to generalise this model.

Since the maintenance of homeostasis or the pathogenic shift towards disease-states depend on the longterm behaviour of a given system's steady states, stability is an integral concept in physiology. In what follows, we will study the stability of these three major granulopoiesis model-types (transit compartment, distributed and discrete delay, and QSP) by demonstrating the relationships and equivalencies between all three formalisms and analysing the resulting distributed delay model to provide a better understanding of the role model selection plays within a treatment context. Accordingly, we will discuss how these stability results can impact the incorporation and delineation of the effects of interindividual variability. We will also provide a historical context for the origins of transit compartment models from distributed delay models and DDEs.

This paper is divided as follows. We begin in Section~\ref{sec:QuartinoModel} with an extension to the common ODE transit compartment model before introducing the more general distributed and discrete delay models in Section~\ref{sec:GammaDelayModels}. Therein, we discuss the linear chain technique (Section~\ref{sec:lct}) used to recover a transit compartment model from distributed delay systems. We then briefly introduce our previously published QSP model including endogenous G-CSF negative feedback in Section~\ref{sec:sddde}. The stability of the transit compartment/distributed/discrete delay models and the QSP model is analysed in Section~\ref{sec:stability} before we undertake bifurcation analyses (Section~\ref{sec:bifurcation}), which we discuss within the pharmaceutical sciences context in Section~\ref{sec:variability}. We conclude by discussing our results in Section~\ref{sec:discussion}. Many of the proofs are provided in the appendices at the end.


\section{Modelling granulopoiesis: three different approaches to handling delays}
\subsection{Transit compartment model with endogenous G-CSF}
\label{sec:QuartinoModel}

The Friberg model \cite{Friberg2002} is perhaps the most well-known model of myelosuppression after chemotherapy in the pharmaceutical sciences \cite{Craig2017}. Five compartments are used to represent the HSCs and early progenitors, circulating neutrophils, and the transit between the proliferative and circulative states. A feedback mechanism on the rate of proliferation determines the extent of myelosuppression of the chemotherapeutic agent. The model has been shown to generically represent a variety of chemotherapeutic drugs \cite{Friberg2003} and has been widely adopted in PK/PD studies of anti-cancer drugs. We write a generalised version of this model as
\begin{align} \notag
\frac{\textrm{d}P}{\textrm{d}t}&=\left(\kP(1-\EDrug)\left(\frac{N_0}{N(t)}\right)^{\!\gamma}-k_{tr}\right)P\\ \label{gdd2}
\frac{\textrm{d}T_1}{\textrm{d}t}&=k_{tr}P-aT_1\\ \notag
\frac{\textrm{d}T_j}{\textrm{d}t}&=a(T_{j-1}-T_j), \qquad j=2,\ldots,n\\ \notag
\frac{\textrm{d}N}{\textrm{d}t}&= aT_n-\kcirc N,
\end{align}
which reduces to the Friberg model if we set $\kP=k_{tr}=a$ and $n=3$.
Here, $P$ is the concentration of proliferating progenitors, $T_j$ is the $j^{\text{th}}$ post-mitotic transit compartment, and $N$ is the circulating neutrophil concentration (all in units of 10$^9$ cells/L), while $\kP$ is the rate of proliferation in the progenitor cell pool, $k_{tr}$ and $a$ are the transit rates between the maturation compartments, and $\kcirc$ is the rate of neutrophil exit from circulation (all in units of h$^{-1}$).

\begin{figure}[t]
\begin{center}
\includegraphics[width=\textwidth]{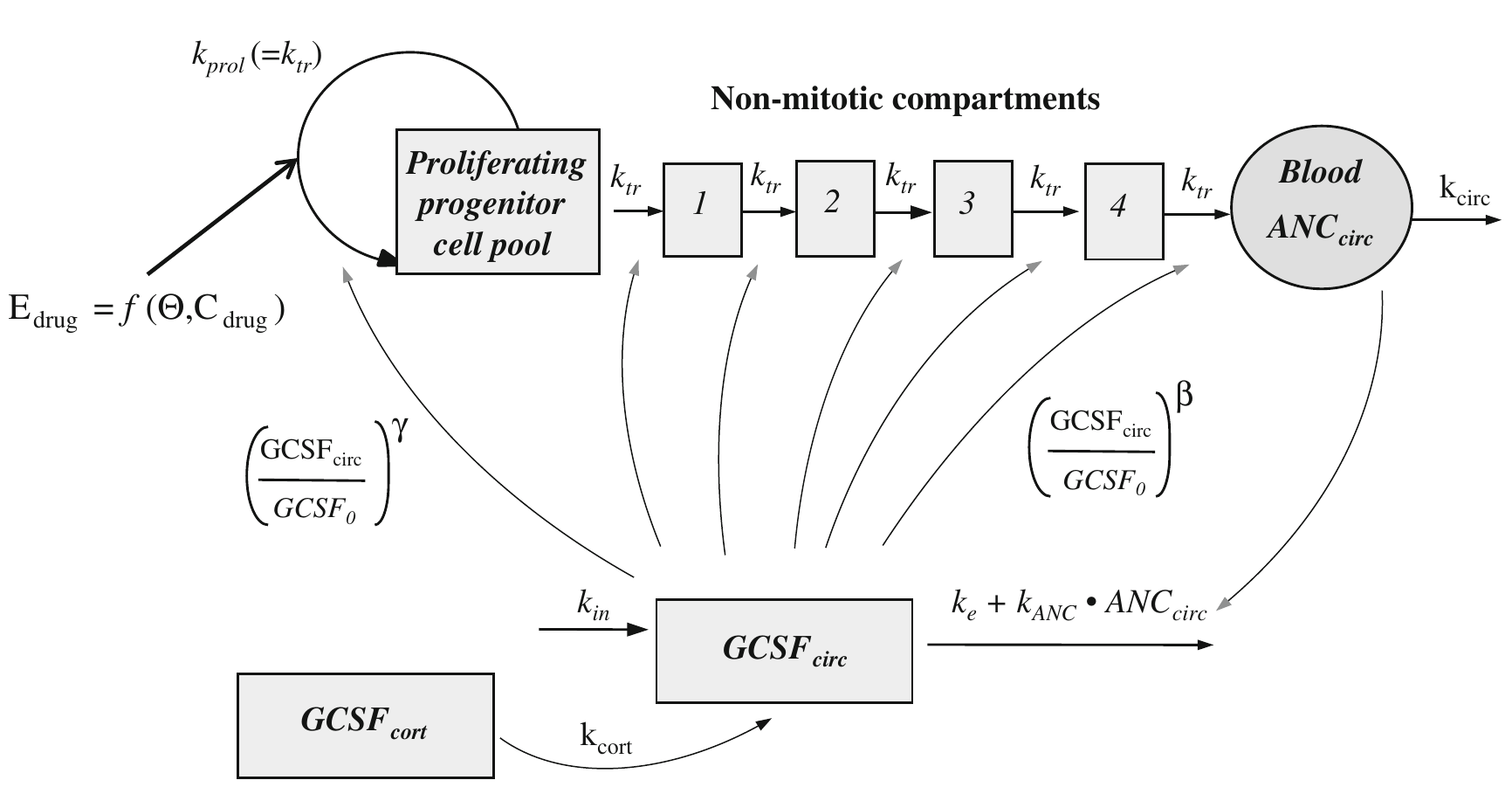}
\caption{The integrated G-CSF-myelosuppression model describing the dynamics of endogenous G-CSF and neutrophils following chemotherapy. For the myelosuppression model the parameters are baseline neutrophil count (ANC$_0$), mean maturation time (MMT=5/k$_{\text{tr}}$), the half-life of neutrophils in circulation (t$_{\nicefrac{1}{2}_{\text{circ}}}$=$\ln$(2)/$\kcirc$), the feedback parameters of G-CSF on neutrophil proliferation ($\gamma$) and transit time ($\beta$) and the drug related effect (E$_{\text{drug}}$). The estimated parameters for the G-CSF turnover model are baseline G-CSF (GCSF$_0$), nonspecific elimination rate constant (k$_\text{e}$) and ANC-dependent elimination rate constant ($\kANC$) and cortisol-induced G-CSF release (DOSE$_\text{cort}$) and the half-life of cortisol-induced G-CSF release (t$_{\nicefrac{1}{2}_{\text{cort}}}$=$\ln(2)/k_{\text{cort}}$). Figure reproduced from ``Characterization of endogenous G-CSF and the inverse correlation to chemotherapy-induced neutropenia in patients with breast cancer using population modeling'',  \textit{Pharmaceutical Research, 31}, 2014, pp. 3396, Quartino, A.L. et~al., \cite{Quartino2014} with the permission of Springer.}
\label{fig:QuartinoFigure}
\end{center}
\end{figure}

An extension to the Friberg model, which we will refer to as the Quartino model, is presented in \cite{Quartino2014} and models the myelosuppressive effects of chemotherapy on progenitor and circulating neutrophils, the endogenous G-CSF response, and the effect of the administration of a glucocorticoid to induce a rapid increase in G-CSF.
A model schematic is given in Figure~\ref{fig:QuartinoFigure}.
For our purposes, we can discount the administration of the glucocorticoid prior to chemotherapy and ignore the corresponding model terms. We write a generalised version of the model as
\begin{subequations} \label{eq:QuartinoModelLimited}
\begin{align} \label{eq:QuartinoModelLimiteda}
\frac{\textrm{d}P}{\textrm{d}t}&=P\left(\kP(1-\EDrug)\left (\frac{G}{G_{0}}\right)^\gamma-k_{tr}\left(\frac{G}{G_{0}}\right)^\beta\right)\\ \label{eq:QuartinoModelLimitedb}
\frac{\textrm{d}T_1}{\textrm{d}t}&=\left(\frac{G}{G_{0}}\right)^\beta (k_{tr}P-aT_1)\\ \label{eq:QuartinoModelLimitedc}
\frac{\textrm{d}T_j}{\textrm{d}t}&=a\left(\frac{G}{G_{0}}\right)^\beta (T_{j-1}-T_j), \qquad j=2,\ldots,n\\ \label{eq:QuartinoModelLimitedd}
\frac{\textrm{d}N}{\textrm{d}t}&=a\left (\frac{G}{G_{0}}\right)^\beta T_n-\kcirc N\\ \label{eq:QuartinoModelLimitede}
\frac{\textrm{d}G}{\textrm{d}t}&=k_{in}-(k_{e}+\kANC N)G,
\end{align}
\end{subequations}
where $G$ is the circulating G-CSF concentration (ng/L), $\kANC$ is the neutrophil-dependent rate of G-CSF elimination (h$^{-1}$), $k_e$ is the G-CSF nonspecific elimination rate (h$^{-1}$), $(G/G_0)^\gamma$ is the feedback on the 
proliferation rate
from circulating G-CSF concentrations, and $(G/G_0)^\beta$ reflects the G-CSF feedback on the
maturation rate.
In most of the current work we do not consider the chemotherapeutic agent and set $\EDrug=0$, unless otherwise stated.

We let $P_0$, $N_0$ and $G_0$ denote the homeostasis values of $P$, $N$ and $G$ respectively, obtained by setting
\be \label{Qdteq0}
\frac{\textrm{d}P}{\textrm{d}t}=\frac{\textrm{d}T_j}{\textrm{d}t}=\frac{\textrm{d}N}{\textrm{d}t}=\frac{\textrm{d}G}{\textrm{d}t}=0
\ee
in \eq{eq:QuartinoModelLimited}. In both the Friberg and Quartino models, it is a modelling assumption that
\be \label{eq:kinc}
\kP=k_{tr}.
\ee
The condition \eq{eq:kinc} is required in the Friberg model \eq{gdd2} to ensure that $N=N_0$ at homeostasis, and in the Quartino model \eq{eq:QuartinoModelLimited} to ensure that $G=G_0$ at homeostasis.
If $G_0$ were not the homeostasis value of $G$, it would be hard to justify the $(G/G_0)^\beta$ terms
appearing throughout the model, and the model ought to take a different form. Consequently we enforce the condition \eq{eq:kinc}
throughout, and always assume that
$k_{tr}=\kP$ as in \cite{Quartino2014}.

To see why we generalise the model by including a new parameter $a$, note that at homeostasis the rate of production of proliferating cells, the rate that cells leave proliferation to enter the first transit compartment, the rate they leave the last transit compartment to enter circulation and the rate that they leave circulation must all be equal. In both
models this results in
\be \label{eq:homeoprod}
\kP P_0 = k_{tr} P_0 = \kcirc N_0.
\ee
The production rate in \eq{eq:homeoprod} is completely independent of the maturation time of the cells; provided cells both enter and leave maturation at the rate given by \eq{eq:homeoprod}, changing the maturation time $\tau$ would only change the total number of cells that are in maturation (which is $\tau\kP P_0$), but will not change the production rate in \eq{eq:homeoprod}. We will see below that at homeostasis the maturation time $\tau$ for both models is given by
\be \label{mattime}
\tau=\frac{n}{a}.
\ee
Fixing $a=k_{tr}$ leads to two related modelling problems. First if we regard $a=k_{tr}$ as known then, since $n$ is an integer, equation \eq{mattime} only allows for certain discrete values of the delay $\tau$. On the other hand, if as is more usual we suppose that $\tau$ is known then choosing $n$ an integer and imposing that $a=k_{tr}$ in \eq{mattime} uniquely determines the value of $k_{tr}$ in \eq{mattime}, which in turn determines the production rate in \eq{eq:homeoprod}. But we already noted that the production rate at homeostasis $\kP P_0$ and the maturation time $\tau$ are independent.

\begin{table}[t]
\centering
\begin{tabular}{|c|c|c|}
\hline
\textbf{Parameter} & \textbf{Units}   & \textbf{Typical estimate (\% RSE)} \\
\hline
$N_0$              & 10$^9$ cells/L   & 3.53 (5)                           \\
$k_{tr}=\kP$       & h$^{-1}$         & 0.03759                            \\
$\gamma$           & -                & 0.444 (4)                           \\
$\beta$            & -                & 0.234 (8)                          \\
$G_0$              & ng/L             & 24.3 (8)                           \\
$k_e$              & h$^{-1}$         & 0.592 (32)                         \\
$\kANC$            & h/10$^9$ cells/L & 5.64                               \\
$\kcirc$           & h$^{-1}$         & 0.099     \\
$k_{in}$           & h$^{-1}$               &  498.1792       \\
\hline
\end{tabular}
\caption{Parameter values from \cite{Quartino2014} for the parameters of interest in this study. The mean value of the distributed delay in \eqref{lctsd3} is given by $\tau=n/a$, or $\tau=106.41$ hours.}
\label{tab:QuartinoValues}
\end{table}

In Quartino \cite{Quartino2014} a mean maturation time is defined by $MMT=(n+1)/k_{tr}$. Presumably the authors counted $n$ transit compartments plus one proliferation compartment. By showing the equivalence of the generalised Quartino model \eq{eq:QuartinoModelLimited} to a distributed delay DDE in Section~\ref{sec:GammaDelayModels} we will find both the mean and variance of the delay, and show that even if $a=k_{tr}$ the correct formula for the
mean maturation time should be $MMT=n/k_{tr}$, corresponding to \eq{mattime}, and not the formula used in Quartino \cite{Quartino2014}.

In the following sections we will consider general values of the parameters $a$ and $n$, but will take the values of the remaining parameters from \cite{Quartino2014}; these values are tabulated in Table~\ref{tab:QuartinoValues}.
To satisfy the homeostasis conditions \eq{Qdteq0} we obtain
\be \label{TjPss}
T_j=\frac{N_0 \kcirc}{a},\; j=1,\ldots,n, \qquad  P_0=\frac{N_0\kcirc}{k_{tr}},
\ee
and the parameter constraint
\be \label{eq:kinb}
k_{in}=G_0(k_e+ \kANC N_0).
\ee
At homeostasis the total number of cells in the $n$ maturation compartments is $N_0\kcirc n/a$. Dividing this by the production rate given by \eq{eq:homeoprod} gives the average maturation time $\tau=n/a$ as stated in \eq{mattime}.

Notice that if $a=k_{tr}$ as in \cite{Quartino2014} then at steady state we have $T_j=P$ for all the transit compartments. In \cite{Quartino2014} the model \eq{eq:QuartinoModelLimited} is considered with initial conditions at time $t=0$ equal to the steady-state values (which is natural for a chemotherapy study before the chemotherapeutic agent is administered), but we will consider the behaviour of the model for general non-negative initial conditions. Proof of the positivity of solutions to the Quartino system~\eqref{eq:QuartinoModelLimited} can be found in Appendix~\ref{sec:QuartinoPositivity}.
%



%
%

\subsection{Gamma-distributed delay representation of the transit compartment granulopoiesis models}
\label{sec:GammaDelayModels}

Distributed delay DDEs come in many varieties, but a reasonably general form is
\begin{align} \label{lct1}
\frac{\textrm{d}N}{\textrm{d}t}
&=f\!\left(t,N(t),\!\int_{-\infty}^t \hspace{-0.7em}P(s)g_a^p(t-s)ds\right)\\
&=f\!\left(t,N(t),\!\int_0^\infty \hspace{-0.7em}P(t-u)g_a^p(u)du\right). \notag
\end{align}
In simpler examples $P(t)\equiv N(t)$, but $P(t)$ can also be a separate variable defined by its own differential equation (as is the case in the granulopoiesis models considered in this work). The function $g_a^p(u)$ is a probability density with
\be \label{lct1c}
\int_0^\infty \hspace{-0.7em}g_a^p(u)du=1.
\ee
So, rather than the dynamics of $N(t)$ being determined by the current value of $P(t)$, the integral distributes the effect of $P$ across its previous values. In this work we will restrict attention to the gamma distribution, though other distributions do arise, in particular the uniform distribution.
We write the probability density function $g_a^p$ of the gamma distribution as
\be \label{lct2}
g_a^p(t)=\frac{a^pt^{p-1}e^{-at}}{\Gamma(p)},
\ee
where $\Gamma(p)$ is the gamma function. When $n$ is a positive integer $\Gamma(n)=(n-1)!$, and the gamma function
generalises the factorial function to real numbers $p$ with
$\Gamma(p)=(p-1)\Gamma(p-1)$ for any $p>0$. The real positive parameters $a$ and $p$ determine the shape and rate of the distribution with the mean delay $\tau$ given by
\be \label{gapmean}
\tau=p/a,
\ee
and standard deviation $\sigma^2=p/a^2$. If $p$ and $a$ are taken to infinity with their ratio $\tau$ held constant then
the variance decreases to zero and
the probability density function $g_a^p(t)$ becomes narrower and taller and approaches the $\delta$-function $\delta(t-\tau)$. In this limit the distributed delay DDE \eq{lct1} reduces to a discrete delay DDE
\be \label{lct.int}
\frac{\textrm{d}N}{\textrm{d}t}=f(t,N(t),P(t-\tau)).
\ee
So discrete delay DDEs can be thought of as a limiting case of distributed delay DDEs.  We will see below that when $p=n$ an integer, we can rewrite a gamma distributed DDE as an ODE, so gamma distributed DDEs provide a link between ODEs and discrete delay DDE models.

\subsubsection{The Linear Chain Technique}
\label{sec:lct}

The linear chain technique is used to convert some distributed delay differential equations (DDEs) into a corresponding system of ordinary differential equations (ODEs), or \emph{vice versa}. The technique dates back at least to the work of Vogel in the 1960s \cite{Vogel63,Vogel65}, and first appears in the English literature in the work of MacDonald \cite{MacDonald78,MacDonald89} who called the method the \emph{linear chain trick}.
Most authors continue to use that name, but we prefer \emph{linear chain technique}, because, as we will see,
there is a true equivalency between the differential equation systems, and no trick is involved. It is usually more convenient to formulate problems as ODEs for numerical simulation, but sometimes more convenient to formulate them as DDEs for analysis. The linear chain technique is well-known and used in population biology and mathematical epidemiology, but is as yet not as well-known in other fields. The method has been independently rediscovered several
times over the decades, being referred to as the fixed boxcartrain method by Goudriaan~\cite{Goudriaan1986},
and recently used by Krzyzanski~\cite{Krzyzanski2011} in a pharmaceutical sciences setting.
There are several variants on this technique, and descriptions can be found in many places including
\cite{Jacquez2002,MacDonald89,Smith2011},
but the simplest application is for a gamma distributed delay, for which we will detail the steps here.

The probability density function \eq{lct2}
has the property that for $p\ne1$
\begin{align} \notag
\TimeDeriv g_a^p(t)&=\frac{(p-1)a^pt^{p-2}e^{-at}}{\Gamma(p)}-\frac{a^{p+1}t^{p-1}e^{-at}}{\Gamma(p)}\\ \notag
&=a\left(\frac{a^{p-1}t^{p-2}e^{-at}}{\Gamma(p-1)}-\frac{a^{p}t^{p-1}e^{-at}}{\Gamma(p)}\right)\\
&=a(g_a^{p-1}(t)-g_a^p(t)). \label{lct3}
\end{align}
While for $p=1$
\be \label{lct3a}
\TimeDeriv g_a^1(t)=\TimeDeriv (ae^{-at})=-a g_a^1(t).
\ee
Models of the form \eq{lct1},\eq{lct2} can in principle be considered for any real positive value of $p$, but in practice nearly all authors only consider $p=n$ a positive integer (one exception is \cite{Campbell2009}), because
then equations \eq{lct3},\eq{lct3a} allow the distributed DDE to be reduced to an ODE. To do this let
\be \label{lct4}
T_j(t)=\int_{-\infty}^t \hspace{-0.5em}P(s)g_a^j(t-s)\,ds = \int_0^\infty \hspace{-0.5em}P(t-u)g_a^j(u)\,du,
\qquad j=1,\ldots,n.
\ee
Then equation \eq{lct1} can be rewritten as an ODE
\be \label{lct1a}
\frac{\textrm{d}N}{\textrm{d}t}=f\left(t,N(t),T_n(t)\right).
\ee
Differentiating \eq{lct4}, using Leibniz rule for $j>1$ (noting that $g_a^j(0)=0$ for $j>1$) we obtain
\begin{align}
\frac{\textrm{d}T_j}{\textrm{d}t}  \notag
&=P(t)g_a^j(0)+\int_{-\infty}^t \hspace{-0.5em} P(s)\TimeDeriv g_a^j(t-s)\,ds\\ \notag
&=\int_{-\infty}^t \hspace{-0.5em}P(s)a(g_a^{j-1}(t-s)-g_a^j(t-s))\,ds\\
&=a(T_{j-1}(t)-T_j(t)), \quad j=\{2,3,\ldots,n\}.  \label{lct5}
\end{align}
While for $j=1$ (noting that $g_a^1(0)=a$)
\be \label{lct5a}
\frac{\textrm{d}T_1}{\textrm{d}t}
= P(t)g_a^1(0)+\int_{-\infty}^t\hspace{-0.5em} P(s)\TimeDeriv g_a^1(t-s)\,ds \\
 = a(P(t)-T_1(t)).
 \ee
Together equations \eq{lct1a},\eq{lct5},\eq{lct5a} redefine the (nonlinear) distributed delay DDE \eq{lct1} as a system of $n+1$ ODEs. General DDEs can be posed as infinite dimensional dynamical systems, which introduces considerable mathematical difficulties, so being able to reduce some DDE models to finite-dimensional ODEs is mathematically very advantageous.

To complete the relationship between the distributed delay DDE \eq{lct1} and the system of ODEs \eq{lct1a},\eq{lct5},\eq{lct5a} we should take some care with the initial conditions. The distributed DDE \eq{lct1} has infinite memory, and so to solve as an initial value problem from time $t=0$ we need to define a history function $P(t)$ for all $t\leq0$, so that the right hand-side of \eq{lct1} can be evaluated. With $P(t)$ so defined, for the DDE and ODE reduction to have equivalent
solutions, by \eq{lct4} the ODE must have initial conditions
\be \label{lct40}
T_j(0)=\int_{-\infty}^0 \hspace{-0.5em}P(s)g_a^j(-s)\,ds
=\int_0^{\infty} \hspace{-0.5em}P(s)g_a^j(s)\,ds
\qquad j=1,\ldots,n.
\ee
If it is assumed that $P(t)=P_0$, a constant for all $t\leq0$ then, using \eq{lct1c}, we see that
\eq{lct40} reduces to
\be \label{lct40a}
T_j(0)=P_0\int_0^\infty \hspace{-0.5em}g_a^j(s)\,ds=P_0,
\qquad j=1,\ldots,n.
\ee

It is natural to ask if we can also go the other way; does a solution of the system of ODEs \eq{lct1a},\eq{lct5},\eq{lct5a}, define a solution of the distributed DDE \eq{lct1}? It follows immediately from \eq{lct40a} that a solution of the ODE system with initial conditions $P(0)=T_j(0)$ for $j=1,\ldots,n$ does define a solution of \eq{lct1}.
The equivalence for more general initial conditions for the ODE has also been established; in that case the ODE initial conditions define a finite number of constraints on the history function $P(t)$ for $t\leq0$, which do not uniquely define $P(t)$, and the
ODE defines a solution of the distributed DDE \eq{lct1} for all choices of $P(t)$ that satisfy the constraints
\cite{Cooke1982,MacDonald89}.

\subsubsection{Gamma-distributed and discrete delay representations of transit compartment granulopoiesis models}
\label{sec:GammaDelayQuartino}

The linear chain technique of Section~\ref{sec:lct} can be applied to establish the equivalence between transit compartment ODE models and corresponding distributed delay DDEs.
Consider first the distributed DDE system
\be \label{gdd1}
\begin{array}{l}
\displaystyle\frac{\textrm{d}P}{\textrm{d}t}=\left(\kP(1-\EDrug)\left(\frac{N_0}{N(t)}\right)^{\!\gamma}-k_{tr}\right)P\\
\displaystyle\frac{\textrm{d}N}{\textrm{d}t}=-\kcirc N+k_{tr}\!\int_{-\infty}^t \hspace{-0.25em}P(s)g_a^n(t-s)ds.\rule[-1mm]{0mm}{8mm}
\end{array}
\ee
We define $T_j(t)$ by
\be \label{lct4a}
T_j(t)=\int_{-\infty}^t \hspace{-0.25em}\frac{k_{tr}}{a}P(s)g_a^j(t-s)\,ds,
\qquad j=1,\ldots,n,
\ee
which corresponds to \eq{lct4} with $k_{tr}P(s)/a$ replacing $P(s)$. Writing the equation for $N(t)$ as
\begin{equation}
\label{eq:FribergDist}
\frac{\textrm{d}N}{\textrm{d}t}=-\kcirc N+a\!\int_{-\infty}^t \hspace{-0.25em}\frac{k_{tr}}{a}P(s)g_a^n(t-s)ds,
\end{equation}
and applying the linear chain technique of Section~\ref{sec:lct} we obtain
the generalised Friberg transition compartment model of myelosuppression \eq{gdd2}.
Taking $a=k_{tr}$ and $n=3$ gives the Friberg model as stated in \cite{Friberg2002}, as has already
been noted in \cite{Belair2015}.

While it is necessary to set $a=k_{tr}$ in \eq{gdd1} to recover the model as stated in \cite{Friberg2002}, equation \eq{gdd2} defines a transit compartment model for other values of $a$ also, and both the system of ODEs \eq{gdd2} and the distributed DDE \eq{gdd1} can be considered for general values $a>0$.

The extended Quartino endogenous G-CSF model \cite{Quartino2014} as stated in \eq{eq:QuartinoModelLimited} and discussed in Section~\ref{sec:QuartinoModel} cannot be stated simply as a distributed delay DDE via the linear chain technique. The maturation time in the Quartino model instead of being constant is state-dependent with the rate constants for the passage through each transit compartment given by
$$a\left(\frac{G(t)}{G_0}\right)^\beta,$$
which varies as $G(t)$ varies; it reduces to the same value 
as for the Friberg model 
only if $G(t)=G_0$.
In contrast, the derivation of \eq{lct3}, which is essential in the linear chain technique, requires that the rate constant $a$
(and the power $p$) be constant, so to apply the linear chain technique the profile of the probability density function must remain constant and cannot vary with time or the solution. Thus, while it might be tempting to consider a distributed DDE of the form
\begin{equation}
\label{eq:DistributedDelayModel}
\begin{split}
\frac{\textrm{d}P}{\textrm{d}t}&=P\left(\kP\left (\frac{G}{G_{0}}\right)^\gamma-k_{tr}\left(\frac{G}{G_{0}}\right)^\beta\right)\\
\frac{\textrm{d}N}{\textrm{d}t}&=-\kcirc N + k_{tr}\left(\frac{G}{G_{0}}\right)^\beta\int_{-\infty}^t \hspace{-0.5em}P(s)g_a^n(t-s)\,ds\\
\frac{\textrm{d}G}{\textrm{d}t}&=k_{in}-(k_{e}+\kANC N)G,
\end{split}
\end{equation}
if we set $a = k_{tr}(G(t))/G_0)^\beta$, then it is not possible to reduce this model to a system of ODEs because the derivation of \eq{lct3} fails when $a$ is time-dependent. Instead we could consider the model
\begin{equation} \label{lctq1}
\begin{split}
\frac{\textrm{d}P}{\textrm{d}t}&=P\left(\kP\left (\frac{G}{G_{0}}\right)^\gamma-k_{tr}\left(\frac{G}{G_{0}}\right)^\beta\right)\\
\frac{\textrm{d}N}{\textrm{d}t}&=-\kcirc N + k_{tr}\int_{-\infty}^t \hspace{-0.25em}\left(\frac{G(s)}{G_{0}}\right)^\beta P(s)g_a^n(t-s)\,ds\\
\frac{\textrm{d}G}{\textrm{d}t}&=k_{in}-(k_{e}+\kANC N)G,
\end{split}
\end{equation}
and apply the linear chain technique with
\be \label{lctq2}
T_j(t)=\int_{-\infty}^t \hspace{-0.5em}\frac{k_{tr}}{a}\left(\frac{G(s)}{G_{0}}\right)^\beta  P(s)g_a^j(t-s)\,ds, \qquad j=1,\ldots,n.
\ee
to obtain the transit compartment model
\be \label{lctq3}
\begin{split}
\frac{\textrm{d}P}{\textrm{d}t}&=P\left(\kP\left (\frac{G}{G_{0}}\right)^\gamma-k_{tr}\left(\frac{G}{G_{0}}\right)^\beta\right)\\
\frac{\textrm{d}T_1}{\textrm{d}t} & = k_{tr}\left(\frac{G}{G_{0}}\right)^\beta P(t)-aT_1(t)\\
\frac{\textrm{d}T_j}{\textrm{d}t}&=a(T_{j-1}(t)-T_j(t)), \quad j=\{2,3,\ldots,n\}\\
\frac{\textrm{d}N}{\textrm{d}t}&=a T_n(t)-\kcirc N\\
\frac{\textrm{d}G}{\textrm{d}t}&=k_{in}-(k_{e}+\kANC N)G,
\end{split}
\ee
which is similar to the Quartino model \eq{eq:QuartinoModelLimited}, but missing the $(G(t)/G_0)^\beta$ factors in
all the $T_j$ transit terms, and consequently does not model the effect of G-CSF on the maturation rate.
%

To write the Quartino model \eq{eq:QuartinoModelLimited} as a distributed DDE, we first remove the state-dependency of the delays by rescaling time.
Define a new time $\hat{t}(t)$ by
\be \label{lctsd1}
\frac{\textrm{d}\hat{t}}{\textrm{d}t}=\left(\frac{G(t)}{G_{0}}\right)^\beta, \qquad \hat{t}(0)=0.
\ee
By Theorem~\ref{ODETheorem}
the right-hand side of \eq{lctsd1} is strictly positive for $t>0$ so $\frac{\textrm{d}\hat{t}}{\textrm{d}t}>0$ and the new time variable $\hat{t}(t)$ is a strictly monotonic increasing function of $t$. Then we see that
$$\frac{\textrm{d}T_j}{\textrm{d}\hat{t}}=\frac{\textrm{d}t}{\textrm{d}\hat{t}}\frac{\textrm{d}T_j}{\textrm{d}t}=\left(\frac{G_0}{G\brackthat}\right)^\beta
a\left(\frac{G\brackthat}{G_{0}}\right)^\beta(T_{j-1}-T_j)=a(T_{j-1}-T_j).$$
Strictly speaking we should define new variables $\widetilde{G}\brackthat=G(t)$, but following common practice we suppress the tildes and reuse the same variable names.
Applying the same time-rescaling to all the equations we rewrite the Quartino model \eq{eq:QuartinoModelLimited} as
\be \label{lctsd2}
\begin{split}
\frac{\textrm{d}P}{\textrm{d}\hat{t}}&=\left(\kP\left (\frac{G\brackthat}{G_{0}}\right)^{\gamma-\beta}\hspace{-.5em}-\,k_{tr}\right)P\brackthat\\
\frac{\textrm{d}T_1}{\textrm{d}\hat{t}}&= k_{tr}P\brackthat-aT_1\brackthat\\
\frac{\textrm{d}T_j}{\textrm{d}\hat{t}}&=a(T_{j-1}\brackthat-T_j\brackthat), \qquad j=2,\ldots,n\\
\frac{\textrm{d}N}{\textrm{d}\hat{t}}&=a T_n\brackthat-\left(\frac{G_0}{G\brackthat}\right)^\beta\kcirc N\brackthat\\
\frac{\textrm{d}G}{\textrm{d}\hat{t}}&= \left(\frac{G_0}{G\brackthat}\right)^\beta\Bigl(k_{in}-(k_{e}+\kANC N\brackthat)G\brackthat\Bigr).
\end{split}
\ee
We refer to \eq{lctsd2} as the time-rescaled Quartino model. Since the time rescaling satisfies $\that(0)=0$, the initial conditions for the Quartino model \eq{eq:QuartinoModelLimited} at $t=0$ and the
time-rescaled Quartino model \eq{lctsd2} at time $\that=0$ are the same, and these two equations given equivalent solutions.

The time-rescaled Quartino model \eq{lctsd2} has constant transition rates between the transit compartments, and consequently we can apply the linear chain technique to derive  \eq{lctsd2} from
\be \label{lctsd3}
\begin{split}
\frac{\textrm{d}P}{\textrm{d}\hat{t}}&=\left(\kP\left (\frac{G\brackthat}{G_{0}}\right)^{\gamma-\beta}-k_{tr}\right)P\brackthat\\
\frac{\textrm{d}N}{\textrm{d}\hat{t}}&=-\left(\frac{G_0}{G\brackthat}\right)^\beta\kcirc N\brackthat
+ k_{tr}\int_{-\infty}^{\hat{t}} \hspace{-0.5em}P(s)g_a^n(\hat{t}-s)\,ds\\
\frac{\textrm{d}G}{\textrm{d}\hat{t}}&= \left(\frac{G_0}{G\brackthat}\right)^\beta\Bigl(k_{in}-(k_{e}+\kANC N\brackthat)G\brackthat\Bigr),
\end{split}
\ee
by letting
\be \label{lctsd4}
T_j\brackthat=\int_{-\infty}^{\hat{t}} \hspace{-0.5em}\frac{k_{tr}}{a}P(s)g_a^j(\hat{t}-s)\,ds, \qquad j=1,\ldots,n.
\ee
To define an initial value problem for the distributed delay DDE \eq{lctsd3} we need to specify $N(0)$, $G(0)$ and $P\brackthat$ for $\that\leq0$. This in turn defines initial conditions for both the time-rescaled Quartino model \eq{lctsd2}
and the Quartino model \eq{eq:QuartinoModelLimited}
with $T_j(0)$ given by evaluating \eq{lctsd4} with $\that=0$.
If $P\brackthat$ is constant for $\that\leq0$ then \eq{lctsd4} implies that $T_j(0)=k_{tr}P(0)/a$, so there is an immediate equivalence between all three models for such initial conditions. Even if the Quartino model \eq{eq:QuartinoModelLimited} were considered with different initial conditions, there is still a direct equivalence to the time-rescaled Quartino model \eq{lctsd2}, and as noted at the end of Section~\ref{sec:lct}, also to the distributed DDE model \eq{lctsd3}. Consequently we have three equivalent forms of the same model, with a direct correspondence between the solutions of the differential equation systems \eq{eq:QuartinoModelLimited} and \eq{lctsd2} and \eq{lctsd3}.

Recalling \eq{gapmean} the mean value of the distributed delay in \eq{lctsd3} is $\tau=n/a$. The time rescaling \eq{lctsd1} is trivial at homeostasis when $G(t)=G_0$, so this also implies that the mean maturation delay is $\tau=n/a$ in the
Quartino model \eq{eq:QuartinoModelLimited} (and fact that we already derived by a different argument in \eq{mattime}).
Fixing $a=k_{tr}$ only allows a very granular control of the mean delay in the ODE model by varying the integer $n$.
Mathematically it is more convenient to fix the delay $\tau>0$ and use $n$ and $a$ to control the shape of the distribution. For the distributed DDE model \eq{lctsd3} we do not even need $n$ to be an integer. Recalling \eq{lct1c}, in the limit as $n\to\infty$ and
$a\to\infty$ with $\tau=n/a$ fixed, the distributed delay DDE \eq{lctsd3} reduces to the discrete delay DDE
\be \label{lctsd5}
\begin{split}
\frac{\textrm{d}P}{\textrm{d}\hat{t}}&=\left(\kP\left (\frac{G\brackthat}{G_{0}}\right)^{\gamma-\beta}-k_{tr}\right)P\brackthat\\
\frac{\textrm{d}N}{\textrm{d}\hat{t}}&=-\left(\frac{G_0}{G\brackthat}\right)^\beta\kcirc N\brackthat
+ k_{tr}P(\hat{t}-\tau)\\
\frac{\textrm{d}G}{\textrm{d}\hat{t}}&= \left(\frac{G_0}{G\brackthat}\right)^\beta\Bigl(k_{in}-(k_{e}+\kANC N\brackthat)G\brackthat\Bigr).
\end{split}
\ee
We remark that in the discrete delay DDE \eq{lctsd5} the delay $\tau$ is constant in the rescaled time-variable $\hat{t}$, just as the (same) mean delay $\tau=n/a$ is constant in the distributed delay DDE~\eq{lctsd3}. In contrast the mean maturation time $\alpha(t)$ in the Quartino model \eq{eq:QuartinoModelLimited} varies with $G(t)$ and satisfies
$$\hat{t}(t-\alpha(t))=\hat{t}(t)-\tau,$$
where $\hat{t}(t)$ satisfies \eq{lctsd1}. If $G$ is held constant (but not necessarily equal to $G_0$), this gives a
mean maturation time $\alpha$ in the  Quartino model \eq{eq:QuartinoModelLimited} of
$$\alpha=\frac{n}{a(G/G_0)^\beta}=\frac{\tau}{(G/G_0)^\beta}.$$
For the case of time-varying $G(t)$, the evolution of the mean maturation delay $\alpha(t)$
is defined by a differential equation \eq{alphaev},
which we derive in Appendix~\ref{app.timerescale},
where we also show the similarities between this state-dependency and the explicit state-dependency in the QSP model \eq{eq:DDEmodel}.
But, in the current work, the time-rescaling equation \eq{lctsd1} will be sufficient for our purposes.

Notice that while the derivation of \eq{lctsd5} makes sense when considering the limit of the shape
of the probability density functions
$g_{n/\tau}^n(t)$ as they approach the $\delta$-function $\delta(t-\tau)$ when $n\to\infty$, it is problematical
if one interprets the distributed DDE \eq{lctsd3} via the ODE system \eq{lctq3} or \eq{lctsd2} since then the limiting process would correspond to taking the number of compartments $n$ to infinity while increasing the rate constants $a$ to infinity also.

Since the system \eqref{lctsd2} corresponds to the Quartino model \eq{eq:QuartinoModelLimited} with time rescaled by \eq{lctq1}, positivity of solutions is guaranteed by Theorem~\ref{ODETheorem}.
The correspondence between the
distributed delay system \eqref{lctsd3} and the system \eqref{lctsd2} ensures positivity of solutions of \eqref{lctsd3} for integer $n$ only, but actually positivity can be established for all real $n>0$. The proof of the positivity of solutions to \eqref{lctsd2} is given in Theorem~\ref{DDETheorem}
in Appendix~\ref{sec:QuartinoPositivity}.

\subsection{A QSP model of granulopoiesis and its regulation by G-CSF}
\label{sec:sddde}

As previously mentioned, DDEs are frequently relied upon to model granulopoiesis given the delays inherent to hematopoiesis. Here we focus on the Quantitative systems pharmacology
model of \cite{Craig2016c}, which has been shown to account for the dynamics of neutrophil production and its negative feedback relationship with G-CSF--both bound to receptors on the surface of neutrophils and freely circulating--in a variety of scenarios. The model is written as
\begin{subequations}\label{eq:DDEmodel}
\begin{align}
\TimeDerivD Q(t) &= -\bigl(\kappaN(G_1(t)) +\kappa_\delta + \beta(Q(t))\bigr)Q(t) \notag \\
&+ A_Q(t)\beta\left(Q(t-\tauQ)\right)Q(t-\tauQ)   \label{eq:HSCs} \\
\TimeDerivD \NR(t)&=  A_N(t) \kappaN(G_1(t-\tauN(t))) Q(t-\tauN(t))
 \frac{\VN(G_1(t))}{\VN(G_1(t-\tauNM(t)))} \notag \\
&-\bigl (\gammaNR +\ftrans(G_{BF}(t))\bigr )\NR(t) \label{eq:Reservoir} \\
\TimeDerivD N(t) &=   \ftrans(G_{BF}(t))\NR(t)-\gamma_N N(t), \label{eq:Neutrophils}\\
\TimeDerivD G_{1\hspace{-0.1em}}(t) & = I_G(t)
+\Gprod-k_{ren}G_{1\hspace{-0.1em}}(t)\\ \label{eq:FreeGCSF}
& -k_{12}(\Ntott V-G_{2\hspace{-0.05em}}(t))G_{1\hspace{-0.1em}}(t)^{\sG}\hspace{-0.1em}
+k_{21}G_{2\hspace{-0.05em}}(t) \\ \label{eq:BoundGCSF}
\TimeDerivD G_{2\hspace{-0.05em}}(t) & = -k_{int}G_{2\hspace{-0.05em}}(t)+k_{12}\bigl(\hspace{-0.05em}
[\NR(t) \hspace{-0.1em}+\hspace{-0.05em}N\hspace{-0.1em}(t)] V\hspace{-0.4em}-G_{2\hspace{-0.05em}}(t)\hspace{-0.1em}\bigr)G_{1\hspace{-0.1em}}(t)^{\sG}\hspace{-0.3em}-k_{21}G_{2\hspace{-0.05em}}(t)
\end{align}
\end{subequations}
Here $Q(t)$ is the concentration of HSCs ($10^6$ cells/kg), $\NR(t)$ the concentration of neutrophils in the bone marrow reservoir ($10^9$ cells/kg), $N(t)$ the concentration of circulating neutrophils ($10^9$ cells/kg), $G_1(t)$ the circulating G-CSF concentration (ng/mL), and $G_2(t)$ the bound G-CSF concentration (ng/mL). Here, and throughout, the superscript $^{h}$ denotes the homeostasis value of a quantity. The system~\eqref{eq:DDEmodel} is subject to the initial conditions (ICs) and history functions
\begin{equation}
\label{FullModel}
\begin{aligned}
Q(s) = & \phi_1(s) \quad \text{for } s \in [-\tauQ,0] \\
\NR(0) = & N_{R,0}  \\
N(0) = & N_0 \\
G_1(s) = & \phi_2(s) \quad \text{for } s \in [-\tau,0] \\
G_2(0) = & G_{2,0}, \\
\end{aligned}
\end{equation}
where $\phi_{1,2}(t) \in \mathcal{C}^0$ and
 \begin{equation}
 \tau = \sup \limits_{t \geq 0} \tauN(t).
 \end{equation}
$I_G(t)$ models the administration of exogenous G-CSF. As described in \cite{Craig2016c}, the self-renewal $\beta(Q)$ and amplification factor $A_Q(t)$ of the HSCs are given by
$$
\beta(Q) = f_Q\frac{\theta_2^{\sQ}}{\theta_2^{\sQ} + Q^{\sQ}},\qquad\quad
A_Q(t) =  A_Q^{h} = 2e^{-\gamma_Q\tauQ},
$$ 
and the rate at which HSCs differentiate into neutrophil precursors is determined by the circulating concentration of G-CSF
\begin{equation*} \label{eq:NewKappa}
\kappaN(G_1) = \kappaNhomeo + (\kappaNhomeo-\kappaN^\textit{min})\left[\frac{G_1^{\sk}-(\Gonehomeo)^{\sk}}{G_1^{\sk}+(\Gonehomeo)^{\sk}}\right].
\end{equation*}
The rate at which the neutrophil progenitors proliferate is given by
\be \label{eq:etaNP}
\etaNP(G_1(t)) =  \etaNPhomeo + (\etaNPhomeo-\etaNP^\textit{min})\frac{\bNP}{\Gonehomeo}
\left(\frac{G_1(t)-\Gonehomeo}{G_1(t)+\bNP}\right),
\ee
where $\tauNP$ days is the time it takes for proliferation. After exiting proliferation, cells mature with rate
\begin{equation*}
\VN(G_1(t)) = 1+(V_{max}-1)\frac{G_1(t)-\Gonehomeo}{G_1(t)-\Gonehomeo+b_V},
\end{equation*}
where the maximal age of maturing neutrophils is $\aNM$. Given $\VN(G_1(t))$ depends on the circulating concentration of G-CSF, the time it takes neutrophils to mature satisfies
\begin{equation} \label{tauNMthres}
\int_{t-\tauNM(t)}^{t}\VN(G_1(s))ds   =   \aNM,
\end{equation}
and the total time for the process of granulopoiesis is then the sum of the time to completion of each process, given by
\begin{equation*}
\label{eq:tauN}
\tauN (t)=\tauNP+\tauNM(t).
\end{equation*}
Maturing neutrophils are assumed to be subject to a constant death rate $\gammaNM$, and their amplification factor $A_N(t)$ is given by the integral equation
\begin{equation} \label{eq:AN}
A_N(t)
= \exp \left[\int_{t-\tauN(t)}^{t-\tauNM(t)} \etaNP(G_1(s)) d s-\gammaNM\tauNM(t)\right].
\end{equation}

The fraction of G-CSF bound to neutrophil receptors given by
\begin{equation*}
G_{BF}(t)=\frac{G_2(t)}{V\Ntott}\in[0,1], \qquad G_{BF}^{h}=\frac{\Gtwohomeo}{V[\NRhomeo+\Nhomeo]}.
\end{equation*}
regulates the rate with which cells exit the marrow reservoir as
\begin{equation*}
\label{eq:nu}
\ftrans(G_{BF}(t))=\ftranshomeo+(\ftrans^\textit{max}-\ftranshomeo)\frac{G_{BF}(t)-G_{BF}^{h}}{G_{BF}(t)-G_{BF}^{h}+b_G}.
\end{equation*}
Mature neutrophils die from the marrow reservoir with rate $\gammaNR$. Cells that transit into circulation are removed with constant rate $\gamma_N$.

Proofs of the existence, uniqueness, positivity, and boundedness of solutions to \eqref{eq:DDEmodel} are provided in Appendices~\ref{sec:PositivitySolutionsDDE}
and~\ref{sec:ExistUniqueDDE}.

\section{Stability Analysis}
\label{sec:stability}

We now perform stability analyses of the different models derived in
the last section to determine what parameter values, if any,
will render an equilibrium point unstable, most frequently by having
sustained oscillations about it. This is done using a well-established
technique, namely linearising about this equilibrium and then
calculating at which parameter values the ensuing characteristic
equation will have roots with positive real parts. This same technique
is traditionally applied to systems of ODEs in which case the
characteristic equation is a polynomial.

In general, the characteristic equation associated with an arbitrary
distribution is transcendental and possesses an infinite number of
roots. As we shall see, the advantage of an integer-order gamma
distribution is to yield a characteristic equation which is also a
polynomial, reflecting the fact, mentioned in Section (2.2.1)  that the
gamma distribution yields a system of ODEs. In the context of comparing
and contrasting the different models, we obtain a ``continuity" result
of sorts as we determine that approximation in distribution does lead to
approximation in stability diagrams (see \cite{Belair2015} for a similar
continuity argument).

\subsection{Characteristic equations for the Quartino endogenous G-CSF Models}

Consider first the generalised Quartino model~\eqref{eq:QuartinoModelLimited}.
Let
\be \label{eq:XVector}
\mathbf{X}(t):=(P(t),T_1(t),\dots,T_n(t),N(t),G(t))^{\intercal}\in\mathds{R}^{n+3}
\ee
be the vector of solutions
so that \eqref{eq:QuartinoModelLimited} can be rewritten in vector form as
\begin{equation}
\label{eq:VectorQuartino}
\frac{d\mathbf{X}}{dt}=\mathbf{F}(\mathbf{X}),
\end{equation}
where $\mathbf{F}(\mathbf{X})$ represents the right hand side of the Quartino model \eqref{eq:QuartinoModelLimited}.
Let $\mathbf{X}^*$ be an equilibrium of the system (that is that $\mathbf{F}(\mathbf{X}^*)=0$), then
define $\mathbf{Z}=\mathbf{X}(t)-\mathbf{X}^*$ and let $\mathds{J}(\mathbf{X}^*)$ be the Jacobian of \eqref{eq:QuartinoModelLimited} evaluated at $\mathbf{X}^*$ ($\mathds{J}(\mathbf{X})=d\mathbf{F}/d\mathbf{X}$). Then linearising about $\mathbf{X}^*$
we obtain
\begin{equation} \label{linQuart}
\frac{d\mathbf{Z}}{dt}=\mathds{J}(\mathbf{X}^*)\mathbf{Z}, 
\end{equation}
where nonlinear terms of order $\mathcal{O}(\|\mathbf{Z}\|^2)$ are neglected.
Seeking a nontrivial exponential solution $\mathbf{Z}(t) = \mathbf{C} e^{\lambda t}$ of \eq{linQuart}
with $\mathbf{C}\in\mathds{R}^{n+3}$, a vector of constants, and $\lambda\in\mathbb{C}$, we obtain the characteristic equation
\begin{equation}\label{char.eq.quart.det}
\det(\lambda\mathds{I}-\mathds{J})=0,
\end{equation}
where $\mathds{I}\in\mathds{R}^{(n+3)\times(n+3)}$ is the identity matrix. Evaluating the determinant in \eq{char.eq.quart.det} leads to the characteristic equation, which is stated
as Eq.~\eq{char.eq.quart.un.X}
in Appendix~\ref{sec:CharEqns}.
This gives
a polynomial of degree $n+3$ in $\lambda$ for the Quartino model \eqref{eq:QuartinoModelLimited}, and a polynomial of degree $7$ if we set $n=4$, as in \cite{Quartino2014}.
A steady state is unstable if any of the roots of this polynomial have positive real part.

The characteristic polynomial for the time rescaled Quartino model \eq{lctsd2} is also a polynomial in $\lambda$ of degree $n+3$, and actually has a simpler form than the characteristic polynomial for Quartino model \eqref{eq:QuartinoModelLimited}. But to derive this characteristic polynomial it is convenient to first consider the characteristic functions of the discrete delay DDE
\eq{lctsd5} and the distributed DDE model \eq{lctsd3}.

Let $\mathbf{Y}(t):=(P(t),N(t),G(t))^{\intercal}$ denote the vector of solutions of the discrete delay DDE ~\eqref{lctsd5},
and $\mathbf{Y}_{\tau}:=\mathbf{Y}(t-\tau)$ be the vector of delayed solutions. Then we can rewrite the
the DDE \eqref{lctsd5} as
\begin{equation}\label{disc.vec}
\dfrac{d\mathbf{Y}}{\textrm{d}t} = \mathbf{F}(\mathbf{Y},\mathbf{Y}_{\tau}),
\end{equation}
in vector form. Similar to the ODE case, let $\mathbf{F}(\mathbf{Y}^*,\mathbf{Y}^*)=\mathbf{0}$ be a generic steady state. Define the variables $\mathbf{Z} \coloneqq \mathbf{Y}-\mathbf{Y}^{*}$ and $\mathbf{Z}_{\tau} \coloneqq \mathbf{Y}_{\tau}-\mathbf{Y}^{*}$ and denote the linearisation matrices of \eqref{disc.vec} computed at $(\mathbf{Y},\mathbf{Y}_{\tau})=(\mathbf{Y}^{*},\mathbf{Y}^{*})$ by $\mathds{A}$ and $\mathds{B}$. Linearising \eqref{disc.vec} about $\mathbf{Y}^{*}$ and using  the variables $\mathbf{Z}$ and $\mathbf{Z}_{\tau} $ yields
\begin{equation}\label{disc.jac}
\frac{\textrm{d}\mathbf{Z}}{\textrm{d}t} = \mathds{A}\mathbf{Z} +  \mathds{B}\mathbf{Z}_{\tau}.
\end{equation}
The linearisation matrices $\mathds{A}$ and $\mathds{B}$ from \eqref{disc.jac}
are calculated in Appendix~\ref{app.jac.disc}.
Seeking a nontrivial exponential solution $\mathbf{Z}(t) = \mathbf{C} e^{\lambda t}$ for equation \eqref{disc.jac}, with constant
$\mathbf{C}\in\mathds{R}^{3}$ and $\lambda\in\mathbb{C}$, we obtain the characteristic equation
\begin{equation}\label{char.eq.det}
\det(\lambda\mathds{I}-\mathds{A}-e^{-\lambda\tau}\mathds{B})=0,
\end{equation}
where $\mathds{I}$ is the identity matrix. Evaluating the determinant in \eqref{char.eq.det} gives the transcendental
characteristic equation
\begin{equation}\label{char.eq.disc}
\lambda^3+a_2\lambda^2+a_1\lambda+a_0=be^{-\lambda\tau},
\end{equation}
where the coefficients $a_2$, $a_1$, $a_0$ and $b$ are computed in Appendix~\ref{app.jac.disc}.

In general equation \eq{char.eq.disc} has infinitely many roots, corresponding to the infinite dimensional nature of DDEs.
The treatment of these equations is made tractable because although there can be infinitely many complex numbers $\lambda$ that satisfy \eq{char.eq.disc}, it is well known that for any real number $\sigma$ there can only be finitely many solutions $\lambda$ with $Re(\lambda)>\sigma$ (see for example Lemma 4.2 in \cite{Smith2011}). To determine stability we need to ascertain whether all the roots have $Re(\lambda)<0$.

Comparing the discrete delay DDE \eq{lctsd5} with the distributed delay DDE \eq{lctsd3}, we see that they differ in only one term. Thus the linearisation of the distributed delay DDE \eq{lctsd3} follows exactly the steps taken for the discrete delay DDE \eq{lctsd5}. Then, following MacDonald \cite{MacDonald89}, the characteristic equation for the distributed DDE \eq{lctsd3} corresponds to \eq{char.eq.disc} with the term $e^{-\lambda\tau}$ replaced by
\be \label{LapGam}
G(\lambda)= \int_0^\infty \hspace{-0.7em}e^{-\lambda u}g_a^n(u)du =\frac{a^n}{(a+\lambda)^n},
\ee
where $G(\lambda)$ is the Laplace transform of the gamma probability density function, and hence we obtain
\begin{equation} \label{char.eq.distr}
\lambda^3+a_2\lambda^2+a_1\lambda+a_0-\frac{a^nb}{(a+\lambda)^n}=0,
\end{equation}
where the coefficients $a_j$ and $b$ computed in Appendix~\ref{app.jac.disc}
are the same as those for \eq{char.eq.disc}.
Notice that if $b=0$ then \eq{char.eq.disc} and \eq{char.eq.distr} both reduce to the same cubic polynomial.

If $n$ is an integer, equation \eq{char.eq.distr} is the characteristic equation of both the distributed DDE \eq{lctsd3}
and the equivalent time-rescaled Quartino ODE model \eq{lctsd2}. In that case, for $b\ne0$, equation
\eq{char.eq.distr} can be written as
\begin{equation} \label{char.eq.quart}
(1+\lambda/a)^n(\lambda^3+a_2\lambda^2+a_1\lambda+a_0)-b=0,
\end{equation}
a polynomial of degree $n+3$, which can be used to determine the stability of the steady-states of these models.
But since the time rescaling \eq{lctsd1} is monotonic this will also determine the stability of the steady-states of the
Quartino model \eq{eq:QuartinoModelLimited}. Characteristic equations which reduce to polynomials, such as
\eq{char.eq.distr} with $n$ an integer, are said to be \emph{reducible} \cite{MacDonald89}.

The characteristic equation \eq{char.eq.distr} is also valid for the distributed DDE \eq{lctsd3} when $n>0$ is not an integer. For general irrational $n$, equation \eq{char.eq.distr} can have infinitely many roots, as is the case for the discrete DDE \eq{char.eq.disc}. But, if $n=p/q$ is rational, where $p$ and $q$ are co-prime integers then we observe rather odd
behaviour. For an integer $m$, suppose that $n$ is rational with $n=p/q\in(m,m+1)$, which implies that
$p\in(qm,qm+q)$. Solutions of \eq{char.eq.distr} with $b\ne0$ then satisfy
\be \label{char.eq.rat}
(1+\lambda/a)^{p}(\lambda^3+a_2\lambda^2+a_1\lambda+a_0)^q-b^q=0,
\ee
(though not all solutions of \eq{char.eq.rat} will necessarily solve \eq{char.eq.distr} if $q$ is even).
Since \eq{char.eq.rat} is a polynomial in $\lambda$ of degree $p+3q$, the discrete delay DDE \eq{lctsd3}
has at most $p+3q\in(q(m+3),q(m+4))$ characteristic values $\lambda$ which satisfy \eq{char.eq.distr} when $n=p/q$ is rational.

It is natural to think of the discrete DDE \eq{lctsd5} as the limit as $n\to\infty$ with $a=n/\tau$ of the
distributed DDE \eq{lctsd3}, and indeed with $a=n/\tau$ we have
\be \label{disttodisc}
(1+\lambda/a)^{n}=(1+\lambda\tau/n)^{n} \; \to \; e^{\lambda\tau} \; \text{as} \; n\to\infty,
\ee
so the characteristic equation \eq{char.eq.quart} for the distributed DDE approaches the
characteristic equation \eq{char.eq.disc} of the discrete DDE as $n\to\infty$.
However, if one considers $n$ varying across the real numbers this is not a smooth limit as $n$ transitions between the rationals and irrationals. Consequently, even though the distributed DDE model \eq{lctsd3}
is valid for general real $n$, most authors, even when considering the behaviour as $n$ is varied or as $n\to\infty$
mainly restrict attention to the case where $n$ is an integer \cite{Beretta16,Campbell2009,MacDonald89}.

\subsection{Stability analysis for the Quartino endogenous G-CSF model and related forms}
\label{subsec:Stability}

The generalised Quartino model~\eqref{eq:QuartinoModelLimited}, has two steady states.
Assuming that $k_{tr}=\kP$ as in \eq{eq:kinc} for the reasons already stated, and considering the
model in the form \eq{eq:VectorQuartino} with vector solution
$\mathbf{X}(t)=(P(t),T_1(t),\dots,T_n(t),N(t),G(t))\in\mathds{R}^n$
these are given by
\begin{align}
\label{eq:Equilibrium1}
\mathbf{X}^*_1&=[P,T_1,\dots,T_n,N,G]=\left[0,0,\dots,0,0, \frac{k_{in}}{k_e}\right],\\
\label{eq:Equilibrium2}
\mathbf{X}^*_2 
&=\left[\frac{k_{circ}N_0}{k_{tr}},\frac{k_{circ}N_0}{a},\dots,\frac{k_{circ}N_0}{a},N_0,G_0\right]\!,
\end{align}
where $N_0$ is given by \eqref{eq:kinb}.

The time-rescaled Quartino model \eq{lctsd2} has the same steady states $\mathbf{X}^*_1$ and $\mathbf{X}^*_2$, since a monotonic rescaling of time does not affect equilibria.

The discrete and distributed delay Quartino DDE models
\eqref{lctsd3} and \eqref{lctsd5} have related equilibria, but in fewer space dimensions, since these models do not include transit compartments. In the $\mathbf{Y}(t)=(P(t),N(t),G(t))\in\mathds{R}^3$ notation of \eq{disc.vec}
these are given by
\begin{align} \label{lctdsdeq1}
\mathbf{Y}^*_1&=\left[0,0, \frac{k_{in}}{k_e}\right],\\ \label{lctsdeq2}
\mathbf{Y}^*_2&=\left[\frac{k_{circ}N_0}{k_{tr}},N_0,G_0\right].\!
\end{align}
If $n$ is a positive integer the distributed DDE \eqref{lctsd3} is equivalent to the Quartino model \eq{eq:QuartinoModelLimited} and the steady states
$\mathbf{Y}^*_1$ and $\mathbf{Y}^*_2$ correspond exactly to $\mathbf{X}^*_1$ and $\mathbf{X}^*_2$ as defined in
\eq{eq:Equilibrium1} and \eq{eq:Equilibrium2} for the appropriate $n$, and
with the values of $T_j$ following from \eq{lctsd4}.
We have the following stability result for these equilibria.

\begin{theorem}\label{stab.prop.disc}
Provided the parameters satisfy the constraints \eq{eq:kinc} and \eq{eq:kinb}
\begin{enumerate}
\item
For the distributed delay DDE \eqref{lctsd3} and the discrete delay DDE \eqref{lctsd5} the steady state $\mathbf{Y}^*_1$ is locally asymptotically stable if $\gamma<\beta$ and unstable if $\gamma>\beta$, and the steady state $\mathbf{Y}^*_2$ is unstable if $\gamma<\beta$.
\item
For the Quartino model \eq{eq:QuartinoModelLimited} and the time rescaled Quartino model \eq{lctsd2}
the steady state $\mathbf{X}^*_1$ is locally asymptotically stable if $\gamma<\beta$ and unstable if $\gamma>\beta$, and the steady state $\mathbf{X}^*_2$ is unstable if $\gamma<\beta$.
\end{enumerate}
\end{theorem}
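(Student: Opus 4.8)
The plan is to read the rightmost characteristic roots off the characteristic equations assembled in Section~\ref{sec:stability}, exploiting the equivalences set up there: $\eqref{char.eq.distr}$ is the characteristic equation of the distributed DDE $\eqref{lctsd3}$ for every real $n>0$; for integer $n$ it is the characteristic polynomial $\eqref{char.eq.quart}$ of the time-rescaled model $\eqref{lctsd2}$; and, since the rescaling $\eqref{lctsd1}$ is strictly increasing (Theorem~\ref{ODETheorem}), a steady state of $\eqref{lctsd2}$ and the corresponding steady state of $\eqref{eq:QuartinoModelLimited}$ share the same stability type. A characteristic root with positive real part makes the equilibrium unstable; conversely, when the characteristic function is a polynomial all of whose roots lie in the open left half-plane, the equilibrium is locally asymptotically stable; both statements are standard for ODEs and DDEs (see e.g.\ \cite{Smith2011}). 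It therefore suffices to work at $\mathbf{Y}^*_1$ and $\mathbf{Y}^*_2$ (equivalently $\mathbf{X}^*_1$ and $\mathbf{X}^*_2$).

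\emph{The washout equilibria $\mathbf{Y}^*_1$ and $\mathbf{X}^*_1$.} Here $P=N=0$ (and $T_j=0$ in the Quartino forms), so each G-CSF-dependent prefactor in the linearised right-hand side is multiplied by a quantity that vanishes at the equilibrium and thus drops out; a short calculation shows that the linearisation (Jacobian together with the delayed or distributed term) is block lower-triangular, with the $\delta P$ equation decoupled from the rest and the delayed/distributed term sitting strictly below the diagonal, hence not entering the characteristic determinant. The characteristic function is then a polynomial whose roots are the diagonal entries: entries of order $\kcirc$ and $k_e$, plus (in $\eqref{eq:QuartinoModelLimited}$ and $\eqref{lctsd2}$) a further negative entry of order $a$ with multiplicity $n$ from the transit block, all manifestly negative, except the entry governing $\delta P$, which equals $\kP\bigl[(\bar G/G_0)^{\gamma-\beta}-1\bigr]$ for $\eqref{lctsd3}$, $\eqref{lctsd5}$ and $\eqref{lctsd2}$, and $\kP\bigl[(\bar G/G_0)^{\gamma}-(\bar G/G_0)^{\beta}\bigr]$ for $\eqref{eq:QuartinoModelLimited}$, where $\bar G:=k_{in}/k_e$ is the $G$-coordinate of the washout equilibrium. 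By $\eqref{eq:kinb}$, $\bar G/G_0 = 1+\kANC N_0/k_e>1$, so in either case this entry carries the sign of $\gamma-\beta$. Hence $\mathbf{Y}^*_1$ and $\mathbf{X}^*_1$ are locally asymptotically stable when $\gamma<\beta$ and unstable when $\gamma>\beta$.

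\emph{The positive equilibria $\mathbf{Y}^*_2$ and $\mathbf{X}^*_2$.} At these steady states $G=G_0$, so the rescaling is trivial at the equilibrium and all prefactors equal $1$. Evaluating the linearisation of $\eqref{lctsd5}$ at $\mathbf{Y}^*_2$ (the computation of Appendix~\ref{app.jac.disc}), two structural facts are decisive: because $\kP=k_{tr}$ the coefficient of $\delta P$ in the linearised $P$-equation vanishes, so the polynomial side of $\eqref{char.eq.disc}$ has zero constant term, $a_0=0$; and the delay coefficient works out to $b=k_{tr}\kANC\kcirc N_0(\beta-\gamma)$, strictly positive precisely when $\gamma<\beta$. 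Equation $\eqref{char.eq.disc}$ then reads $\lambda(\lambda^{2}+a_2\lambda+a_1)=b\,e^{-\lambda\tau}$; restricted to real $\lambda\geq 0$ the left side vanishes at $\lambda=0$ and tends to $+\infty$, while the right side equals $b>0$ at $\lambda=0$ and decays to $0$, so the intermediate value theorem yields a root $\lambda^{*}>0$. The same argument applies verbatim to $\eqref{char.eq.distr}$, which becomes $\lambda(\lambda^{2}+a_2\lambda+a_1)=a^{n}b/(a+\lambda)^{n}$ (the right side again equals $b>0$ at $\lambda=0$, with $(a+\lambda)^{n}>0$ for $\lambda\geq 0$ and any real $n>0$), and, for integer $n$, to $\eqref{char.eq.quart}$, namely $(1+\lambda/a)^{n}\lambda(\lambda^{2}+a_2\lambda+a_1)-b$, equal to $-b<0$ at $\lambda=0$ and to $+\infty$ as $\lambda\to+\infty$. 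Since $\eqref{char.eq.quart}$ is the characteristic polynomial of $\eqref{lctsd2}$, that model is unstable when $\gamma<\beta$, and --- the rescaling $\eqref{lctsd1}$ being monotone --- so is $\eqref{eq:QuartinoModelLimited}$. Together with the DDE cases this gives instability of $\mathbf{Y}^*_2$ and $\mathbf{X}^*_2$ whenever $\gamma<\beta$.

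\emph{Expected main obstacle.} The only genuine computation is the linearisation at $\mathbf{Y}^*_2$ that pins down $a_0=0$ and the sign of $b$. The vanishing of $a_0$ is the crucial point, since it is exactly what makes the real-root argument go through with no fine estimates: it encodes that at the positive steady state the G-CSF-modulated proliferation rate is critically tuned to $k_{tr}$, so the $\delta P$ subsystem carries no linear restoring term and reacts only to fluctuations of $G$, with a gain proportional to $\gamma-\beta$. Once this is secured, the intermediate value theorem, the transfer of instability through the time-rescaling, and the explicit triangular spectrum at the washout equilibria are all routine.
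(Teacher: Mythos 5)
Your proposal is correct and follows the same overall architecture as the paper's proof: exploit the equivalences between the four formulations, reduce everything to the characteristic equations \eqref{char.eq.disc}, \eqref{char.eq.distr} and \eqref{char.eq.quart}, observe that $b=0$ at the washout equilibrium and that $a_0=0$, $b>0$ at the positive equilibrium when $\gamma<\beta$, apply the intermediate value theorem for the instability claims, and transfer the conclusions to \eqref{eq:QuartinoModelLimited} through the monotone rescaling \eqref{lctsd1}. The one place where you genuinely diverge is the local asymptotic stability of $\mathbf{Y}^*_1$ and $\mathbf{X}^*_1$ for $\gamma<\beta$: the paper reduces the characteristic function to the cubic $h_1(\lambda)=\lambda^3+a_2\lambda^2+a_1\lambda+a_0$ and verifies the Routh--Hurwitz conditions $a_2>0$, $a_0>0$, $a_2a_1>a_0$ from \eqref{coef.disc1}, whereas you observe that at $P=N=0$ the columns multiplying the G-CSF perturbation vanish, the linearisation becomes lower triangular (with the delay term strictly below the diagonal, cf.\ \eqref{eq:QuartinoJacobian1}), and the spectrum consists of the explicit real diagonal entries $A_{11}=\kP(G_r^{\gamma-\beta}-1)$, $A_{22}$, $A_{33}$ (and $-J_{32}$ with multiplicity $n$ in the compartment models), all negative except possibly $A_{11}$, whose sign is that of $\gamma-\beta$ since $G_r=k_{in}/(G_0k_e)>1$ by \eqref{eq:kinb}. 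Your version is slightly more economical and gives both the stability and instability halves at $\mathbf{Y}^*_1$ in one stroke, at the cost of having to justify the triangular structure; the paper's version requires no structural observation but does require the explicit coefficient computation \eqref{coef.disc1} and an appeal to Routh--Hurwitz. Everything else, including the sign of $b=(\beta-\gamma)\kP\kcirc\kANC N_0$ and the treatment of non-integer $n$ in \eqref{char.eq.distr}, matches the paper.
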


\begin{proof}
At $\mathbf{Y}^*_1$ from \eq{coef.disc1}
we have $b=0$, thus from \eq{char.eq.disc} and \eq{char.eq.distr}
the characteristic equation for both the discrete and distributed DDE models reduces to
\be \label{chareq.X1}
h_1(\lambda):=\lambda^3+a_2\lambda^2+a_1\lambda+a_0=0,
\ee
and the stability of $\mathbf{Y}^*_1$ is the same for both models. If $\gamma>\beta$ then from \eq{coef.disc1}
we have
$0>a_0=h_1(0)$, while $h_1(\lambda)\to+\infty$ as $\lambda\to+\infty$. Hence, by the intermediate value theorem
there exists $\lambda>0$ such that $h_1(\lambda)=0$, and thus the steady state is unstable.

If $\gamma<\beta$ from \eqref{coef.disc2}
we have $a_{2}>0$, $a_{0}>0$ and $a_{2}a_{1}>a_{0}$ and it follows from
the Routh-Hurwitz criteria \cite{MacDonald89} that $Re(\lambda)<0$ for all characteristic roots of \eq{chareq.X1},
and hence $\mathbf{Y}^*_1$ is stable.

For the steady state $\mathbf{Y}^*_2$ when $\gamma<\beta$
equation \eqref{coef.disc2}
yields $a_{0}=0$, and $b>0$. For the
discrete delay DDE \eq{lctsd5} the characteristic equation reduces to
$$h_2^\infty(\lambda):=\lambda^3+a_2\lambda^2+a_1\lambda-be^{-\lambda\tau}=0.$$
Then $h_2^\infty(0)=-b<0$, while $h_2^\infty(\lambda)\to+\infty$ as $\lambda\to+\infty$, and the intermediate value theorem again implies that the steady state is unstable. For the distributed DDE \eq{lctsd3} a similar proof shows instability
using the characteristic function \eq{char.eq.quart} becomes
$$h_2^n(\lambda)=(1+\lambda/a)^n(\lambda^3+a_2\lambda^2+a_1\lambda+a_0)-b,$$
which has a positive leading coefficient and is negative when $\lambda=0$, so again the intermediate value theorem shows that the steady state is unstable.

The steady states $\mathbf{X}^*_1$ and $\mathbf{X}^*_2$ of the Quartino model \eq{eq:QuartinoModelLimited} have the same stability as those of the time-rescaled Quartino model \eq{lctsd2}, as the monotonic time-rescaling does not change the stability, though it will change the values of the characteristic roots. But the time rescaled model \eq{lctsd2} has its characteristic roots given by the degree $(n+3)$ polynomial \eq{char.eq.quart} which has the same roots as the characteristic equation \eq{char.eq.distr} of the distributed delay DDE, and hence $\mathbf{X}^*_j$ and $\mathbf{Y}^*_j$ have the same stability.
\qed
\end{proof}

\begin{figure}[ht!]
\begin{center}
\includegraphics[width=0.5\textwidth]{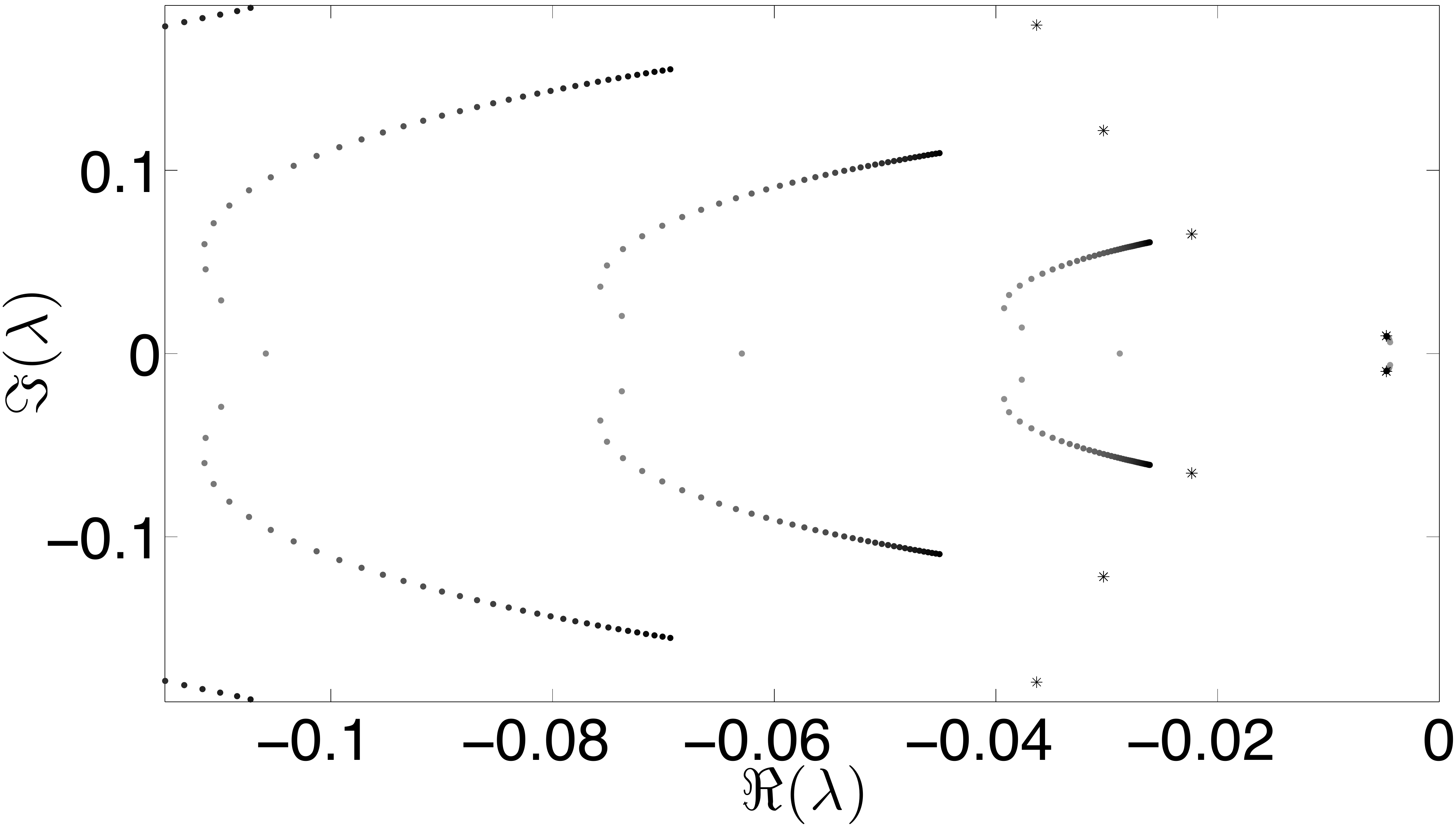}\includegraphics[width=0.5\textwidth]{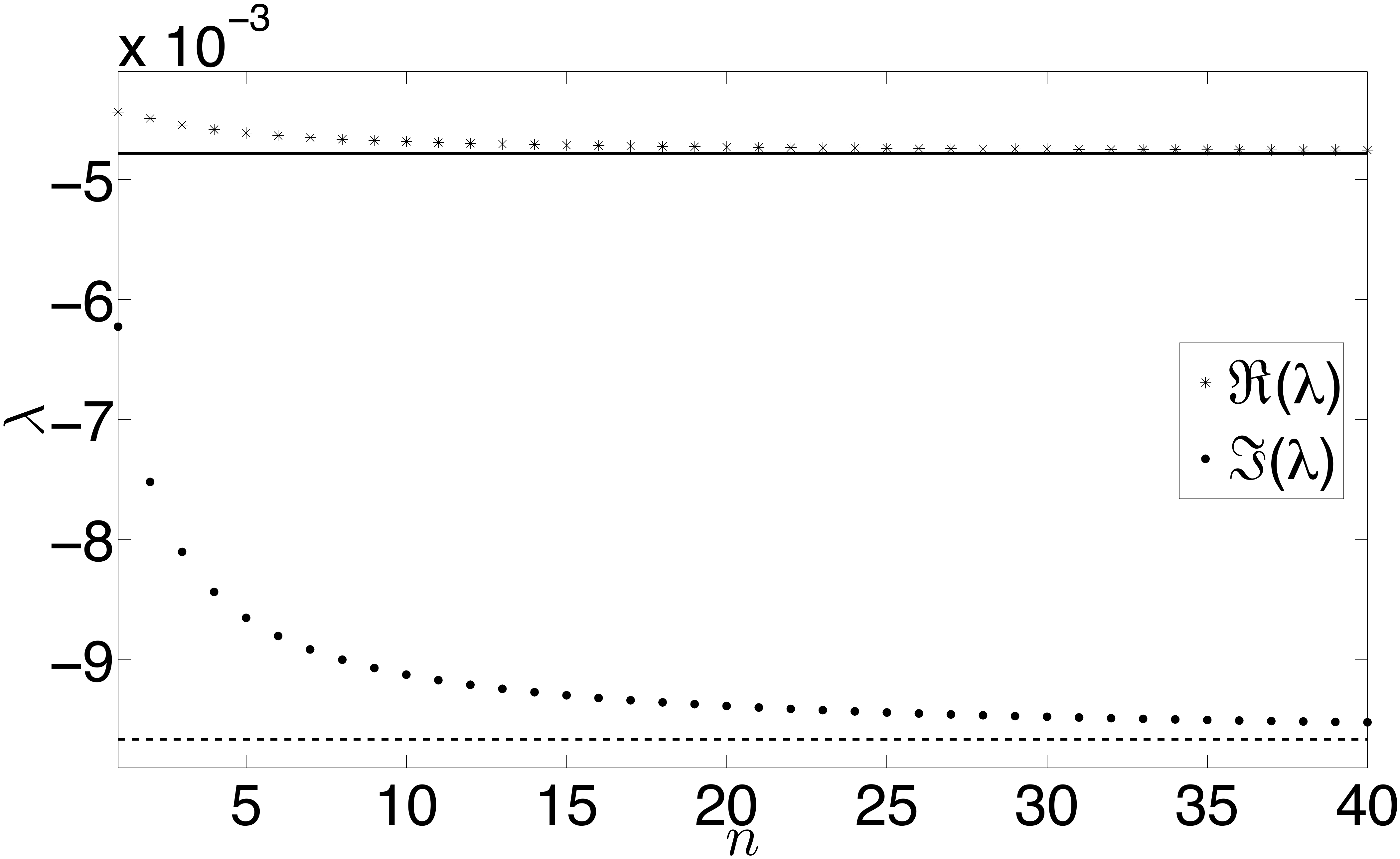}
\caption{At the steady state $\mathbf{X}_2^*$ of the time-rescaled
Quartino model \eq{lctsd2} and the corresponding steady state $\mathbf{Y}_2^*$ of the distributed and discrete delay models
\eq{lctsd3} and \eq{lctsd5} with all parameters from Table~\ref{tab:QuartinoValues}.
(Left): Asterisk for the discrete delay \eq{lctsd5} characteristic roots that satisfy \eq{char.eq.disc}, and in gray transitioning to black as $n$ is increased from $1$ to $40$ characteristic roots for the time-rescaled Quartino model \eq{lctsd2} and the distributed delay DDE \eq{lctsd3} which both satisfy \eq{char.eq.quart}.
(Right): Convergence of the real and negative imaginary parts of the rightmost characteristic root, which determines stability as $n$ increases.}
\label{fig:RootsDistributedDelayLimit1}
\end{center}
\end{figure}

For the standard parameters, as given in Table~\ref{tab:QuartinoValues}, we have $\gamma>\beta$ so
Theorem~\ref{stab.prop.disc} implies that the neutropenic steady states $\mathbf{X}^*_1$ and $\mathbf{Y}^*_1$ are unstable in all these models. Proving directly that the homeostatic steady states $\mathbf{X}^*_2$ and/or $\mathbf{Y}^*_2$ are stable when $\gamma>\beta$ is difficult, but we can compute the roots of the characteristic equations, and these are shown in Figure~\ref{fig:RootsDistributedDelayLimit1}. We see that the  homeostatic steady states are indeed stable when
$\gamma>\beta$. Moreover the characteristic roots for the time-rescaled transit compartment Quartino model converge to the
characteristic roots for the discrete delay DDE as $n$ increases. Although the steady state is stable for all the models, we see that it becomes less stable as $n$ increases, with the real part of the characteristic values tending to increase with $n$. The phenomenon of loss of stability for fixed delay $\tau$ as $n$ is increased has long been known, but remains an area of active interest \cite{MacDonald89,Campbell2009,Beretta16}.

In Section~\ref{sec:bifurcation} we will study how increasing $n$ can make the system more susceptible to loss of stability through bifurcations. However, here we point out that the change in stability observed in Theorem~\ref{stab.prop.disc}
depending on whether $\gamma>\beta$ or $\gamma<\beta$ is \emph{not} a bifurcation in the usual sense. When $\gamma=\beta$ the model is degenerate with the progenitor equation reducing to $\frac{\textrm{d}P}{\textrm{d}t}=0$.

In the proof of Theorem~\ref{stab.prop.disc} we made use of the relationships between the different model formulations to greatly simplify the derivation of the stability results. In particular the simpler forms of the characteristic equation for the time-rescaled Quartino \eq{lctsd2} and DDE models \eq{lctsd3} and \eq{lctsd5} makes these much easier to work with.
It is nevertheless possible to directly derive stability results for the Quartino model \eq{eq:QuartinoModelLimited}
at least for the steady state $\mathbf{X}^*_1$, though the proofs are much more involved. We include those results in Appendix~\ref{sec:QuartinoStability}
for completeness.

\subsection{Stability of the QSP model of granulopoiesis}
\label{sec:DDEBifurcation}

Here we perform the linear stability analysis of the steady states of the QSP model defined by the DDE system~\eqref{eq:DDEmodel},
without any exogenous G-CSF, i.e. $I_{G}(t)=0$. Similar to Section~\ref{subsec:Stability}, let $\mathbf{X}(t)\coloneqq (Q(t),\NR(t),N(t),G_{1}(t),G_{2}(t))^{\intercal}$ be the vector solution of~\eqref{eq:DDEmodel} and $\mathbf{X}_{\sigma}\coloneqq\mathbf{X}(t-\sigma)$ denote a vector of delayed solutions. Then the DDE system defining the QSP model \eqref{eq:DDEmodel}
can be rewritten in vector form as
\begin{equation}\label{qsp.dde.vec}
\dfrac{d\mathbf{X}}{\textrm{d}t} = \mathbf{F}(\mathbf{X},\mathbf{X}_{\tauQ},\mathbf{X}_{\tauN},\mathbf{X}_{\tauNM}).
\end{equation}

Parameters changes to the model lead to different steady states in equation~\eqref{qsp.dde.vec}.
Let the steady state computed at homeostasis be written as $\mathbf{X}^{h}\equiv (Q^{h},\NR^{h},N^{h},G_{1}^{h},G_{2}^{h})$, and denote a generic steady state by $\mathbf{X}^{*}\equiv (Q^{*},\NR^{*},N^{*},\!$ $G_{1}^{*},G_{2}^{*})$.

To linearize \eqref{qsp.dde.vec} around a steady state $\mathbf{X}^{*}$ we define the variables $\tau(t)=\tauN(t)-t$ and $u=s+\tau(t)$ to rewrite the amplification factor \eqref{eq:AN} as

\begin{equation}\label{A.Gu}
A_{N}(t) = \exp{\left[\int_{0}^{\tauNP}\etaNP(G_{1}(u-\tau(t)))du-\gammaNM\tauNM(t)\right]}.
\end{equation}
Thus we approximate the amplification factor \eqref{A.Gu} through the linearisation of the proliferation function \eqref{eq:etaNP} given by
\begin{equation}\label{approx.etaNP}
\etaNP(G_{1}) = \etaNP(G_{1}^{*})+\etaNP^{\prime}(G_{1}^{*})(G_{1}-G_{1}^{*}) + \cO(|G_{1}-G_{1}^{*}|^2),
\end{equation}
where $\etaNP^{\prime}\equiv d\etaNP/dG_{1}$. Further, since it does not affect the local stability of the steady state \cite{Cooke1996}, we freeze the state-dependent delay at its steady state value
\begin{equation}\label{approx.tauNM}
\tauNM(t) = \tauNM^{*}.
\end{equation}
Using \eqref{approx.etaNP} and \eqref{approx.tauNM} together with the distributed delay variable defined by
\begin{equation}\nonumber
\tilde{G}_{1}(t) \coloneqq \int_{0}^{\tauNP}\frac{G_{1}(u-\tau(t))}{\tauNP}du,
\end{equation}
equation~\eqref{A.Gu} becomes
\begin{equation}\label{approxAN}
\tilde{A}_{N}(t) = \exp{\left[\etaNP^{*}\tauNP-\gammaNM\tauNM^{*}
+\etaNP^{\prime}(G_{1}^{*})\tauNP(\tilde{G}_{1}(t)-G_{1}^{*})\right]}.
\end{equation}
As a consequence of the approximation in \eqref{approxAN}, we can rewrite \eqref{qsp.dde.vec} as
\begin{equation}\label{dde.vec2}
\dfrac{d\mathbf{X}}{\textrm{d}t} = \mathbf{f}(\mathbf{X},\mathbf{X}_{\tauQ},\mathbf{X}_{\tauN},\mathbf{X}_{\tauNM},\tilde{\mathbf{X}}),
\end{equation}
where
\begin{equation}\nonumber
\tilde{\mathbf{X}}(t) \coloneqq \int_{0}^{\tauNP}\frac{\mathbf{X}(u-\tau(t))}{\tauNP}du.
\end{equation}
Let $\mathbf{X}^*$ be a generic steady state of~\eq{dde.vec2}, defined by $\mathbf{f}(\mathbf{X}^{*},\mathbf{X}^{*},\mathbf{X}^{*},\mathbf{X}^{*},\mathbf{X}^{*})\!$ $=\mathbf{0}$. Define the variables $\mathbf{Z} \coloneqq \mathbf{X}-\mathbf{X}^{*}$, $\mathbf{Z}_{\sigma} \coloneqq \mathbf{X}_{\sigma}-\mathbf{X}^{*}$ and $\tilde{\mathbf{Z}} \coloneqq \tilde{\mathbf{X}}-\mathbf{X}^{*}$ and denote the linearisation matrices of \eqref{dde.vec2} with regards to $\mathbf{X}$, $\mathbf{X}_{\tauQ}$, $\mathbf{X}_{\tauN}$, $\mathbf{X}_{\tauNM}$, $\mathbf{X}_{\tau}$ and computed at $\mathbf{X}^{*}$, respectively by $\mathds{A}$, $\mathds{B}$, $\ldots$, $\mathds{E}$. Linearising \eqref{disc.vec} about $\mathbf{X}^{*}$ and using the variables $\mathbf{Z}$, $\mathbf{Z}_{\sigma}$ and $\tilde{\mathbf{Z}}$ yields
\begin{equation} \label{syst.JC}
\frac{\textrm{d}\mathbf{Z}}{\textrm{d}t} = \mathds{A}\mathbf{Z}+\mathds{B}\mathbf{Z}_{\tauQ}+\mathds{C}\mathbf{Z}_{\tauN}+\mathds{D}\mathbf{Z}_{\tauNM}+\mathds{E}\tilde{\mathbf{Z}}. 
\end{equation}
The linearisation matrices $\mathds{A}$, $\mathds{B}$, $\ldots$, $\mathds{E}$ from \eqref{syst.JC} are
computed in Appendix~\ref{app.jacDDE}.
Seeking a nontrivial exponential solution $\mathbf{Z}(t) = \mathbf{C} e^{\lambda t}$ for equation \eqref{syst.JC}, with constant $\mathbf{C}\in\mathds{R}^{5}$ and $\lambda\in\mathbb{C}$, we obtain the characteristic equation
\begin{equation}\label{char.eq.det.qsp}
\det (\lambda\mathds{I}-\mathds{A}-e^{-\lambda\tauQ}\mathds{B}-e^{-\lambda\tauN^{*}}\mathds{C}-e^{-\lambda\tauNM^{*}}\mathds{D}-f(\lambda)\mathds{E})=0,
\end{equation}
where $\mathds{I}$ is the identity matrix and
\begin{equation*}
f(\lambda) \coloneqq \frac{(e^{\lambda\tauNP}-1)}{\lambda\tauNP e^{\lambda\tauN^{*}}} = \frac{e^{-\lambda\tauNM^{*}}-e^{-\lambda\tauN^{*}}}{\lambda\tauNP}.
\end{equation*}
For $\lambda=re^{i\theta}$, with $r,\theta\in\mathds{R}$, we have $\lim_{r\rightarrow 0} f(re^{i\theta}) = 1$. We rearrange equation~\eq{char.eq.det.qsp} in the form $\det (\mathds{F})=0$. To calculate the matrix $\mathds{F}$, with terms $F_{ij}$ for $i,j=\{1,2,\ldots,5\}$, note that some terms of the linearisation matrices are symmetric while others are antisymmetric, namely $A_{43}=A_{42}$, $A_{52}=-A_{42}$, $A_{53}=-A_{42}$, $A_{35}=-A_{25}$, and $D_{24}=-A_{24}$ (see Appendix~\ref{app.jacDDE}).
Using this fact, we can then compute the terms
\begin{equation} \nonumber
F_{11}(\lambda) = A_{11}-\lambda + B_{11}e^{-\lambda\tauQ},\ F_{ii}(\lambda)=A_{ii}-\lambda\  \mbox{for}\ i=2,3,4,5,
\end{equation}
\begin{equation} \nonumber
F_{21}(\lambda) = C_{21}e^{-\lambda\tauN^{*}},\quad  F_{24}(\lambda) = A_{24}(1 - e^{-\lambda\tauNM^{*}}) + C_{24}e^{-\lambda\tauN^{*}}  + E_{24}f(\lambda),
\end{equation}
and write the matrix $\mathds{F}$ as
 \begin{equation*}
	\mathds{F} =
	\begin{bmatrix}
  	F_{11}(\lambda) & 0 & 0 & A_{12} & 0 \\
	F_{21}(\lambda) & F_{22}(\lambda) & A_{23} & F_{24}(\lambda) & A_{25}\\
	0 & A_{32} & F_{33}(\lambda) & 0 & -A_{25} \\
	0 & A_{42} & A_{42} & F_{44}(\lambda) & A_{45} \\
	0 & -A_{42} & -A_{42} & A_{54} & F_{55}(\lambda) \\
	\end{bmatrix}.
\end{equation*}
Defining the constants
\begin{eqnarray*}
&& K_{1} = -A_{45}A_{54},\qquad  K_{2} =  -A_{42}A_{25},\qquad K_{3} = -A_{42}A_{25}A_{54},\\[2mm]
&& K_{4} = A_{42}A_{25},\qquad K_{5} = - A_{32}A_{23},\qquad K_{6} = (A_{23}-A_{32} )A_{25}A_{42},\\[2mm]
&& K_{7} = A_{42}A_{25}A_{54},\qquad  K_{8} = [A_{32}(A_{23}A_{45} - A_{42}A_{25}) + A_{42}A_{23}A_{25}]A_{54},\\[2mm]
&& K_{9} = -A_{42}A_{45},\qquad K_{10} = A_{32}A_{42},\qquad K_{11} = A_{32}A_{42}A_{45},
\end{eqnarray*}
and the functions
\begin{eqnarray}\nonumber
\rho(\lambda)&=& F_{22}(\lambda)[F_{33}(\lambda)F_{44}(\lambda)F_{55} (\lambda)+ K_{1}F_{33}(\lambda) + K_{2}F_{44}(\lambda) + K_{3}] + \notag\\[2mm]
&&F_{44}(\lambda)[K_{4}F_{33}(\lambda) + K_{5}F_{55}(\lambda) + K_{6}] + K_{7}F_{33}(\lambda) + K_{8} + \notag\\[2mm]
&&F_{24}(\lambda)[K_{9}F_{33}(\lambda) + K_{10}F_{55}(\lambda) - A_{42}F_{33}(\lambda)F_{55}(\lambda) + K_{11}],\notag
\end{eqnarray}
\begin{equation}\nonumber
\psi(\lambda) = F_{21}(\lambda)[A_{42}F_{33}(\lambda)F_{55}(\lambda)-K_{9}F_{33}-K_{10}F_{55}(\lambda)-K_{11}]A_{12},
\end{equation}
the characteristic equation $\det(\mathds{F})=0$ becomes
\begin{equation}\label{char.dde}
F_{11}(\lambda)\rho(\lambda)+\psi(\lambda) = 0.
\end{equation}

\begin{figure}[ht!]
\begin{center}
\includegraphics[width=0.7\textwidth]{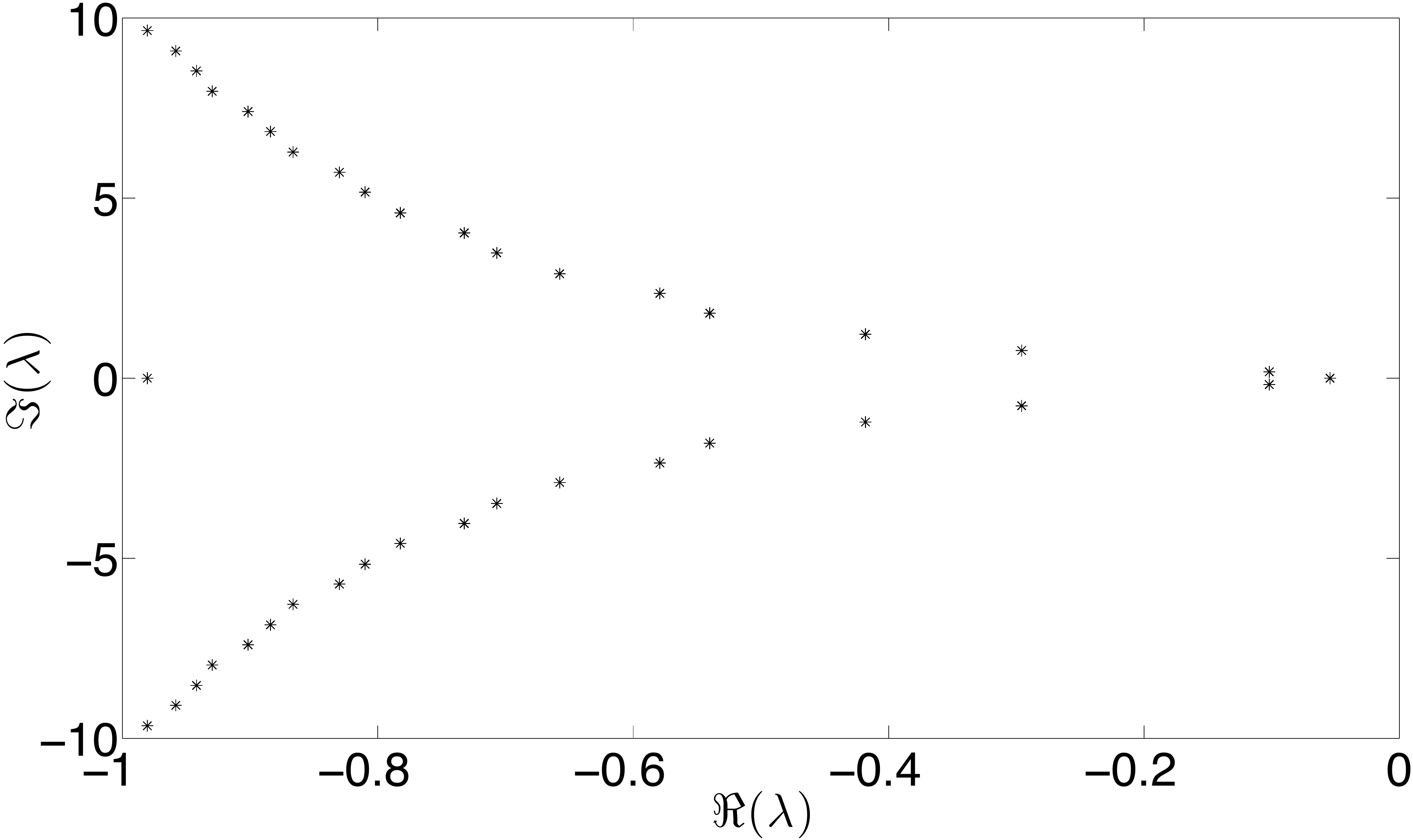}
\caption{Roots of the characteristic equation \eq{char.dde} of the QSP granulopoiesis model \eqref{eq:DDEmodel} evaluated at the homeostasis steady state $\mathbf{X}^{h}$ all have negative real part in the complex plane.}
\label{fig:RootsDDEGranulopHomeo}
\end{center}
\end{figure}

The solutions $\lambda\in\mathbb{C}$ to \eq{char.dde}, the characteristic roots, determine the stability of the steady state $\mathbf{X}^{*}$ from \eqref{qsp.dde.vec}/\eqref{syst.JC}.
To evaluate the stability numerically, we write $\lambda = \sigma + i\omega$, with $\sigma\in\mathds{R}$ and $\omega\in\mathds{R}$,
and then compute the roots of \eqref{char.dde} in the $(\sigma,\omega)$-plane using the Matlab subroutine \textit{fsolve} \cite{Mathworks}. As illustrated in Figure~\ref{fig:RootsDDEGranulopHomeo}, at
homeostasis all the characteristic roots $\lambda$ have negative real part, and so the homeostatic steady state $\mathbf{X}^{h}=(Q^{h},\NR^{h},N^{h},G_{1}^{h},G_{2}^{h})$, defined by
$\mathbf{F}(\mathbf{X}^{h},\mathbf{X}^{h},\mathbf{X}^{h},\mathbf{X}^{h})=\mathbf{0}$ in \eq{qsp.dde.vec}
is locally asymptotically stable.

\section{Bifurcation studies}
\label{sec:bifurcation}

Bifurcation analysis, or the study of the qualitative changes to the behaviour of a system given a change to parameter values, is a fundamental dynamical systems concept \cite{Meiss2007}. Accordingly, studying  bifurcation points can be a powerful tool in the life sciences to shed light on underlying parameter relationships and better understand the robustness of a system with regards to stability.

Historically, bifurcation analysis has been applied to study hematological pathologies and has provided valuable insight into the origins of disorders like cyclic neutropenia, a disease associated with dangerously low neutrophil counts and mouth blistering \cite{Dale2015} where a patient's ANCs oscillate with a period of around 21 days. These oscillations have been shown to correspond to a periodic orbit that appears through a loss of stability after the system undergoes a Hopf bifurcation \cite{Colijn:2,Foley2009b}. In the following sections, we perform bifurcation analyses on the equivalent forms of the Quartino endogenous G-CSF model and the QSP granulopoiesis model \eq{eq:DDEmodel}
to ascertain how changing parameter values modify the stability of each system, giving insight into the potential effects of PK variability on a physiological system and helping to understand pathophysiology of diseases.

\subsection{Bifurcation in the equivalent expressions of the Quartino endogenous G-CSF model}
We begin by investigating whether parameter changes in the equivalent expressions of the Quartino endogenous G-CSF model can lead their steady states $\mathbf{X}^*_2$ and $\mathbf{Y}^*_2$ given by \eq{eq:Equilibrium2} and \eq{lctsdeq2}, respectively, to lose stability. For this, we let $\lambda = \sigma + i\omega$, where $\sigma\in\mathds{R}$ and $\omega\in\mathds{R}$, and computed the roots of the characteristic equations \eq{char.eq.quart.un.X2},
\eq{char.eq.disc} and \eq{char.eq.quart} in the $(\sigma,\omega)$-plane using the Matlab subroutines \textit{roots} and \textit{fsolve} \cite{Mathworks}.

We saw in Section~\ref{subsec:Stability} that the homeostasis steady states $\mathbf{X}^*_2$  and $\mathbf{Y}^*_2$ are locally asymptotically stable in all the versions of the Quartino model that we consider, and that the models are degenerate when $\beta=\gamma$, consequently here we will study the bifurcations that occur as parameters are varied
from their homeostasis values with $\gamma>\beta$.

We begin by studying the Quartino model \eqref{eq:QuartinoModelLimited} starting from parameters used in \cite{Quartino2014}, so $n=4$, $a=k_{tr}$ and all the other parameters as in Table~\ref{tab:QuartinoValues}.
We observed that changes to $\gamma$, the parameter relating the feedback of circulating G-CSF concentrations on the proliferating pool, and $a$, the transit rate between maturation compartments, can lead to a loss of stability giving rise to a periodic orbit via a Hopf bifurcation. Table~\ref{tab.quartino} summarises the parameter pair values $(\gamma,a)$ necessary to induce such a loss in stability in $\mathbf{X}^*_2$, and the resulting period of the emerging periodic orbits.

If we let $\gamma$ be the bifurcation parameter and keep the remaining parameters at their homeostasis values (see Table~\ref{tab:QuartinoValues}), there is
a Hopf bifurcation point at $\gamma=0.86766$.
We verified numerically that $\mathbf{X}^*_2$ is locally asymptotically stable for $\beta<\gamma<0.86766$ and unstable if $\gamma>0.86766$. Of particular interest, as reflected in the bolded row of Table~\ref{tab.quartino}, we found a periodic orbit characteristic of cyclical neutropenia \cite{Dale2015}. Using the relation $a=n/$MMT with $n=4$ gives the value MMT$=123.33\,\text{hours}$, which is close to the mean maturation time of 133 hours for a patient under chemotherapy treatment reported by Quartino~\cite{Quartino2014}.

In Figure \ref{fig.cont.quart} (left) we show the Hopf bifurcation curve for $\mathbf{X}^*_2$ on parameter space $(a,\gamma)$ for the Quartino model  \eq{eq:QuartinoModelLimited}. The steady state is stable in the region below the Hopf curve and unstable otherwise. In Figure \ref{fig.cont.quart} (right) we see that increasing $\gamma$ for small values of the transit rate parameter $n$ lead to solutions with long period.
\begin{figure}[!htbp]
\begin{center}
\includegraphics[width=0.5\textwidth]{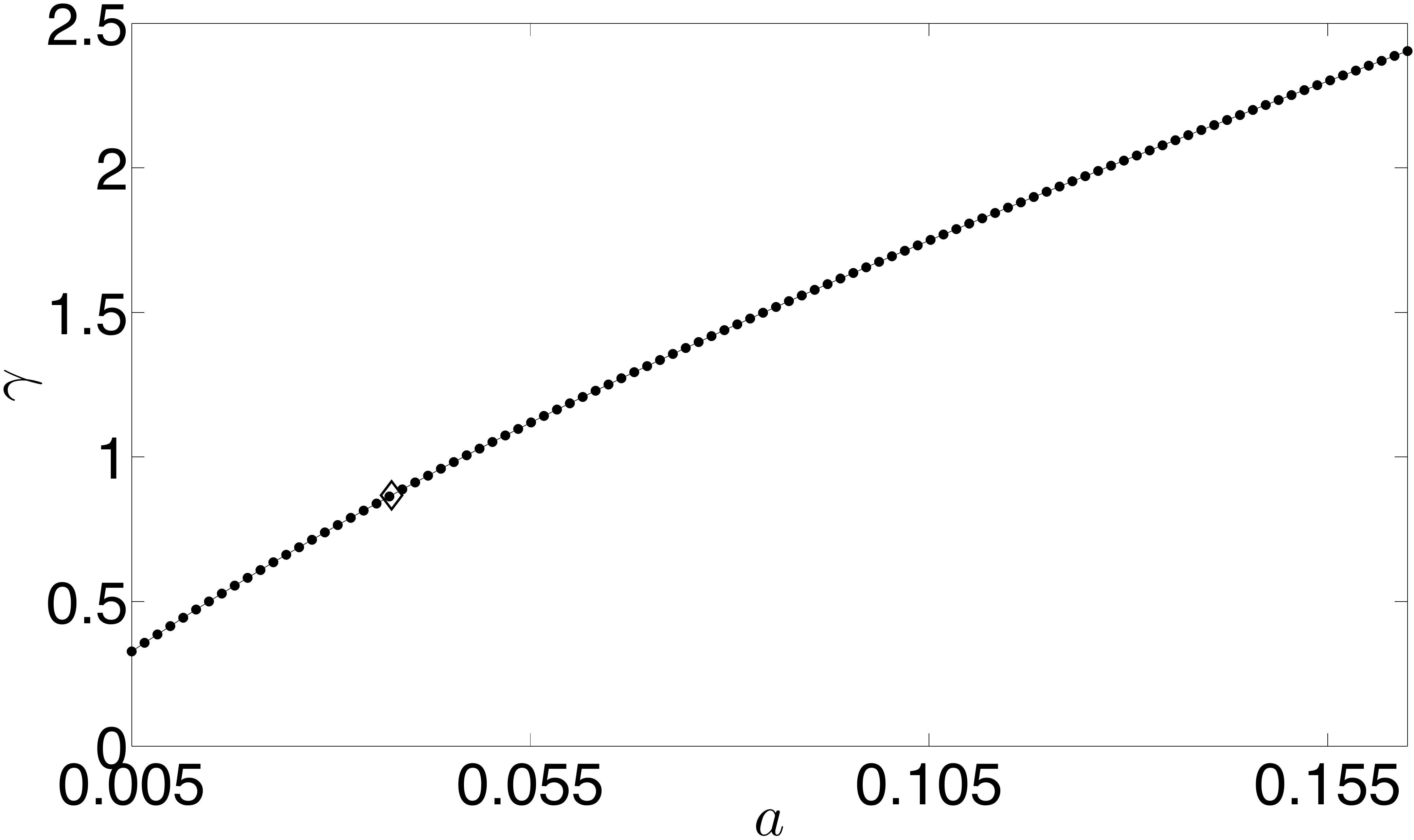}\includegraphics[width=0.5\textwidth,height=0.297\textwidth]{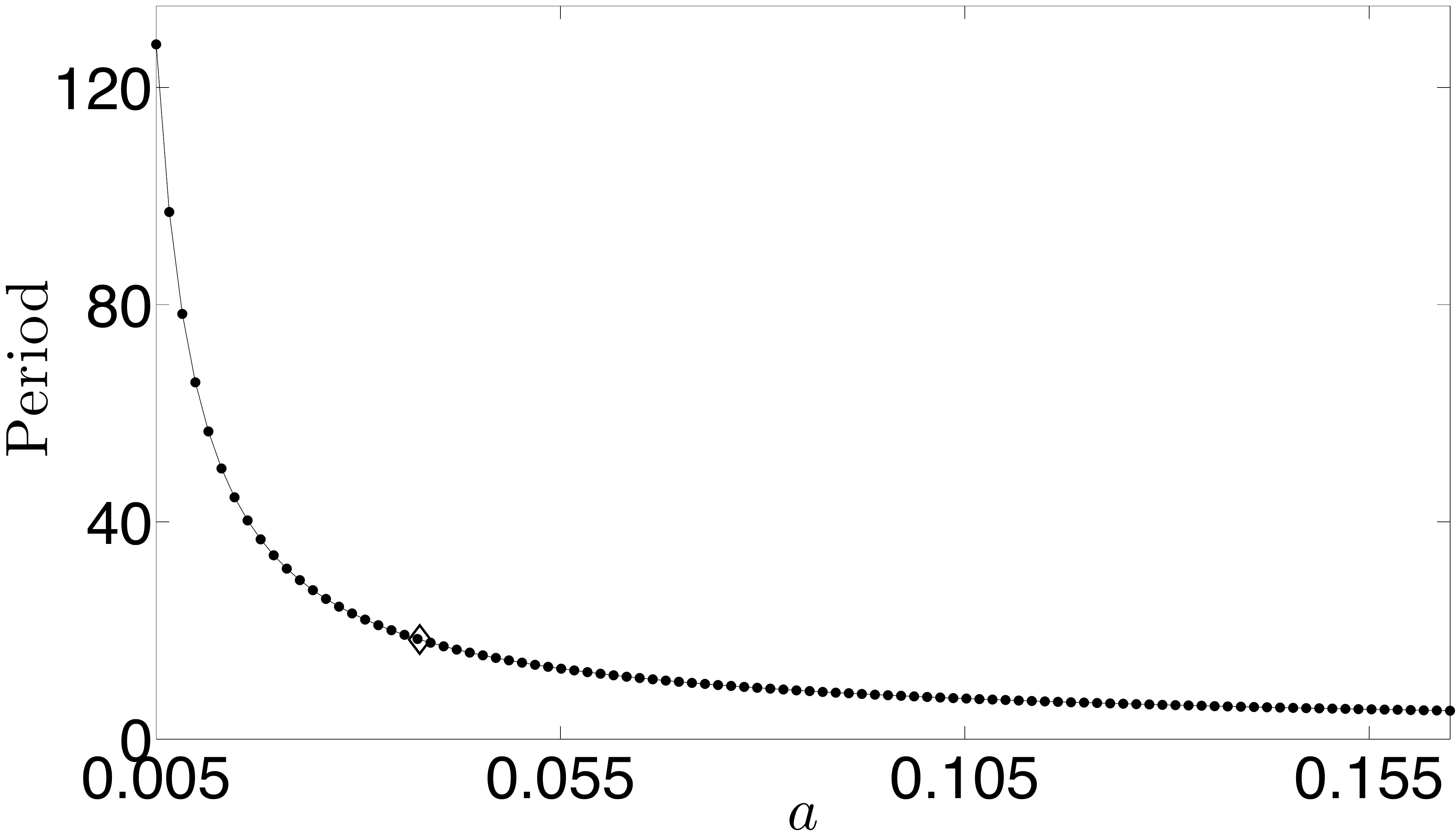}
\caption{Hopf bifurcation curve for $\mathbf{X}^*_2$ on parameter space $(a,\gamma)$ (Left) and the respective period (in days) as function of $a$ (Right) for the Quartino model  \eq{eq:QuartinoModelLimited}. The remaining parameters are as in Table~\ref{tab:QuartinoValues} and the diamond dot correspond to $a$ at its homeostasis value 0.03759 hours$^{-1}$.}
\label{fig.cont.quart}
\end{center}
\end{figure}
\begin{table}[!htbp]
\centering
\hspace*{-0mm}
\begin{tabular}{|c|c|c|c|}
\hline
$\gamma$ (-) & $a$ (hours$^{-1}$) & $\tau$ (hours) & Period (days)\\
\hline
0.87851 & 0.03831 & 104.4 & $18.00$ \\
0.86766 & 0.03759 & 106.41 & $18.32$ \\
$\mathbf{0.78911}$ & $\mathbf{0.03243}$  & $\mathbf{123.33}$ & $\mathbf{21.0}$ \\
\hline
\end{tabular}
\caption{Hopf bifurcation points for varying $\gamma$, $a$, and $\tau$ for $\mathbf{X}^*_2$ computed as in Figure~\ref{fig.cont.quart},
Values highlighted in bold correspond to periods characteristic of patients with cyclic neutropenia.}
\label{tab.quartino}
\end{table}

In the same vein, we also computed bifurcation points for the equilibrium $\mathbf{X}^*_2$ of distributed DDE model \eq{lctsd3} using the characteristic equation \eq{char.eq.quart} with $n=4$. As expected, given that the
this model is simply a time-rescaling of the Quartino model \eq{eq:QuartinoModelLimited} we obtain the same bifurcation
points shown in Table~\ref{tab.quartino}.
\begin{table}[!htbp]
\centering
\hspace*{-0mm}
\begin{tabular}{|c|c|c|}
\hline
$\gamma$ (-) & $n$ (-)  & Period (days)\\
\hline
4.61186 & 1  & 7.767 \\
1.69292 & 1.5  &13.53 \\
1.21571 & 2  & 15.95 \\
0.90252 & 3.5  & 18.08 \\
$\mathbf{0.86766}$ &$ \mathbf{4}$  &$\mathbf{18.32}$\\
0.75312 & 10  & 18.98 \\
0.72367 & 20  & 19.08 \\
0.71461 & 30  & 19.09 \\
0.71021 & 40  & 19.10 \\
0.69754 &``$\infty$''  &19.11\\
\hline
\end{tabular}
\caption{Hopf bifurcation points for varying $\gamma$ and $n$ for $\mathbf{Y}^*_2$ computed as in Table~\ref{tab.hopf2} but considering the characteristic equation of the distributed DDE model \eq{char.eq.quart}. The other parameters were fixed at their homeostasis values given in Table~\ref{tab:QuartinoValues}. The last row corresponds to the discrete delay case.}
\label{tab.distributed}
\end{table}

In Figure \ref{fig.cont.dist} (left) we show the Hopf bifurcation curve for $\mathbf{Y}^*_2$ on parameter space $(n,\gamma)$ for the distributed DDE model \eq{lctsd3}. The steady state is stable in the region below the Hopf curve and unstable otherwise. Increasing $n$ lead to $\gamma$ and the period of the Hopf bifurcation converge to the bifurcation point of the discrete DDE model \eq{lctsd5} shown in the bolded row of Table~\ref{tab.discrete}.
\begin{figure}[!htbp]
\begin{center}
\includegraphics[width=0.5\textwidth]{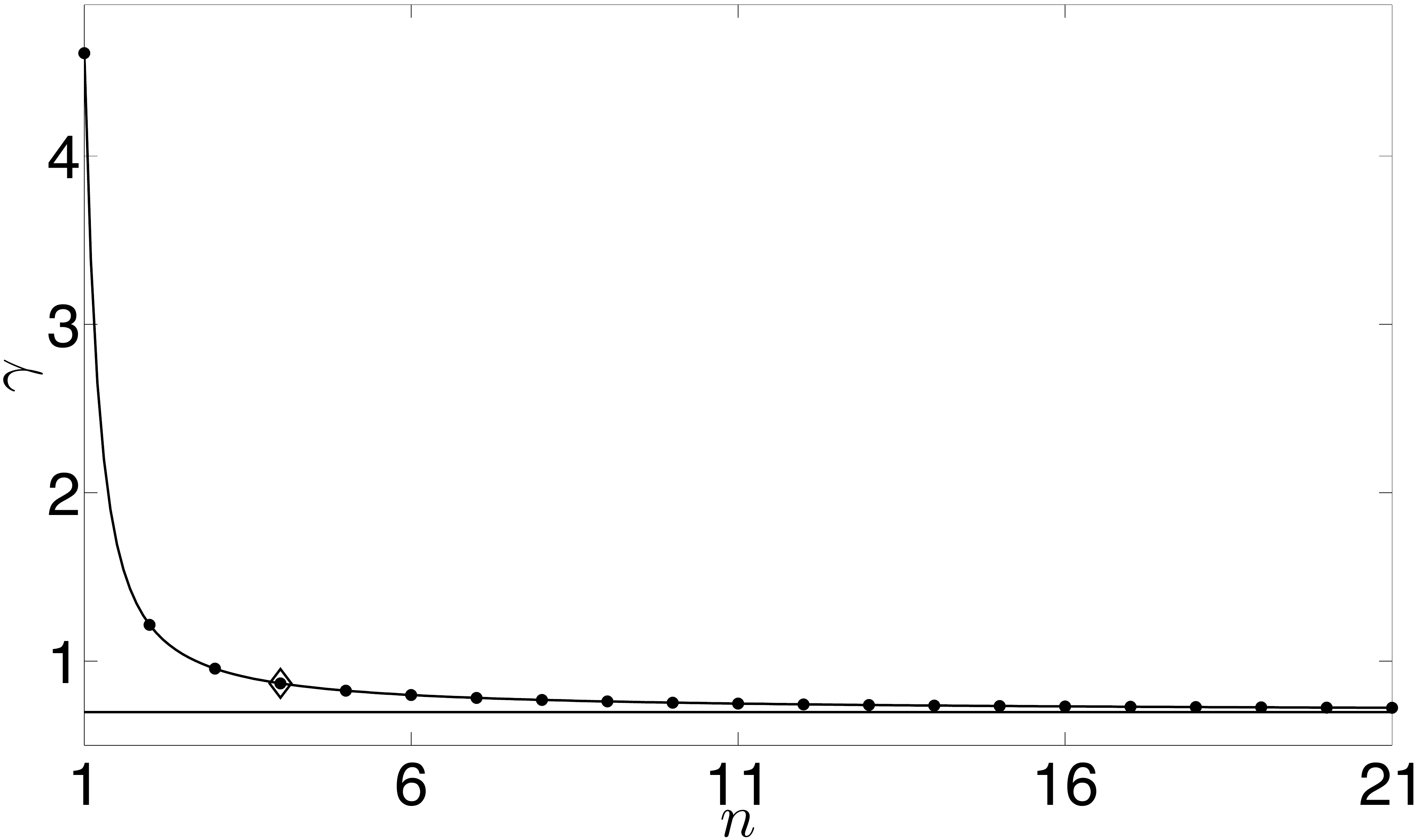}\includegraphics[width=0.5\textwidth,height=0.297\textwidth]{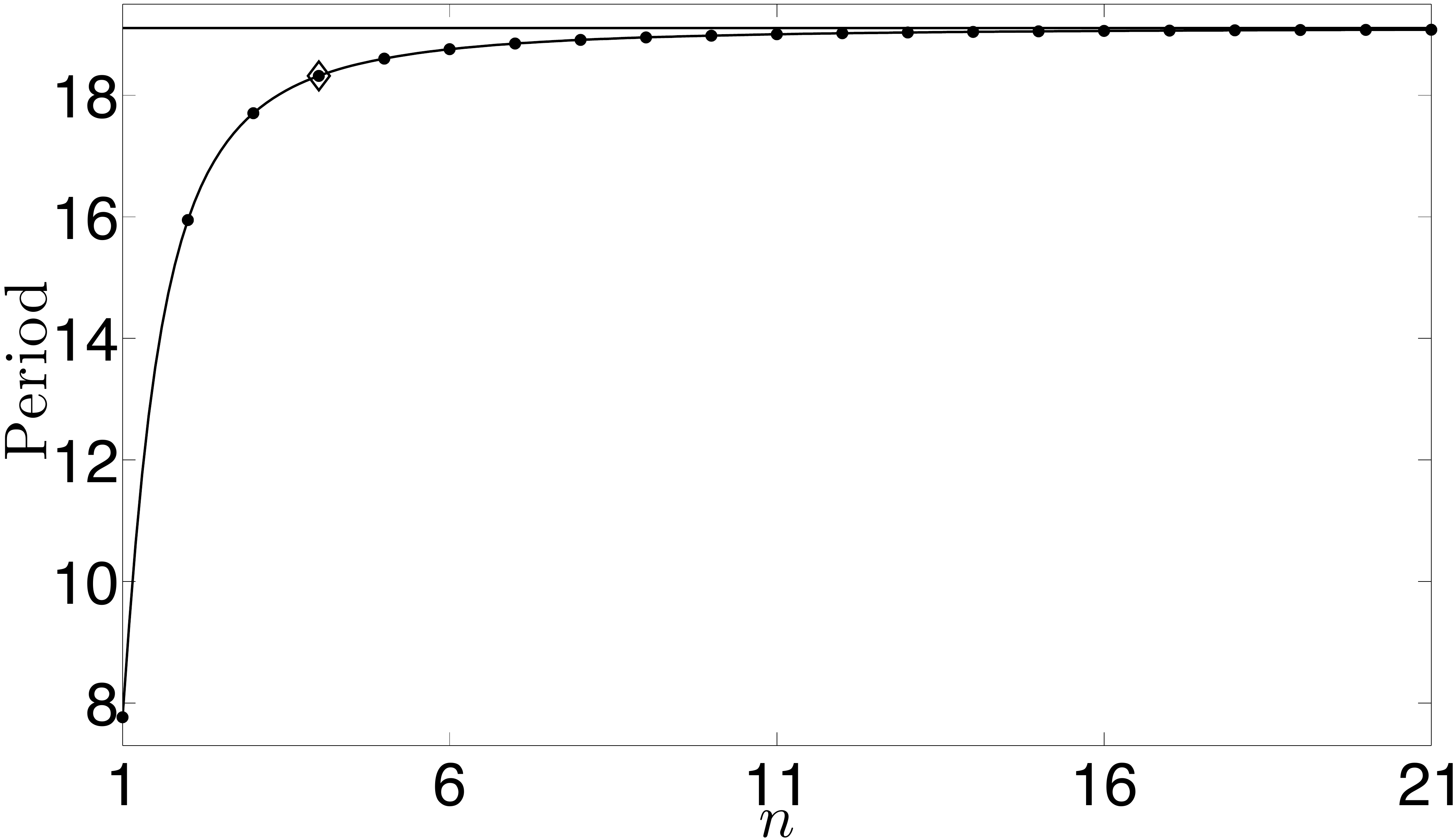}
\caption{Hopf bifurcation curve for $\mathbf{Y}^*_2$ on parameter space $(n,\gamma)$ (Left) and the respective period (in days) as function of $n$ (Right) for the distributed DDE model \eq{lctsd3}. The other parameters are as in Table~\ref{tab:QuartinoValues} and the diamond dot correspond to $n$ at its homeostasis value 4. The straight lines correspond to the bifurcation point $\gamma$ and its respective period for the discrete DDE model with other parameters at their homeostasis values.}
\label{fig.cont.dist}
\end{center}
\end{figure}

Table~\ref{tab.discrete} reports bifurcation points computed for the steady state $\mathbf{Y}^*_2$ of the discrete DDE model \eq{lctsd5} using the characteristic equation \eq{char.eq.disc}. Comparing the second rows of Tables \ref{tab.quartino} and \ref{tab.discrete}, we note that the region of stability for the discrete DDE model $\beta<\gamma<0.69754$ is smaller than that of the distributed DDE with $n=4$, and of the equivalent ODE Quartino model $\beta<\gamma<0.86766$. Furthermore, we verified that in the limit $n\to\infty$ with  $a=n/\tau$ or $a=(n+1)/\tau$ and holding $\tau$ fixed, the characteristic roots of the distributed DDE model \eq{char.eq.quart} converge to the roots of the discrete DDE model \eq{char.eq.disc} since \eq{char.eq.quart} approaches to \eq{char.eq.disc} when $n\to\infty$.
\begin{table}[!htbp]
\centering
\hspace*{-0mm}
\begin{tabular}{|c|c|c|}
\hline
$\gamma$ (-) & $\tau$ (hours)  & Period (days)\\
\hline
0.72631 & 99.8  & 18.00 \\
$\mathbf{0.69754}$ &$ \mathbf{106.41}$  &$\mathbf{19.11}$\\
0.65534 & 117.75  & 21.00 \\
\hline
\end{tabular}
\caption{Hopf bifurcation points for varying $\gamma$ and $\tau$ for $\mathbf{Y}^*_2$ computed as in Table~\ref{tab.hopf2} but considering the characteristic equation of the discrete DDE model \eq{char.eq.disc}. The other parameters were fixed at their homeostasis values given in Table~\ref{tab:QuartinoValues}.}
\label{tab.discrete}
\end{table}

In Figure \ref{fig.cont.disc} (left) we show the Hopf bifurcation curve for $\mathbf{Y}^*_2$ on parameter space $(\tau,\gamma)$ for the discrete DDE model \eq{lctsd5}. The steady state is stable in the region below the Hopf curve and unstable otherwise. In Figure \ref{fig.cont.disc} (right) we see that there is an approximate linear relation between the period of the limit cycles and the mean value of the distributed delay $\tau$ for along all values of $\gamma$.
\begin{figure}[!htbp]
\begin{center}
\includegraphics[width=0.5\textwidth]{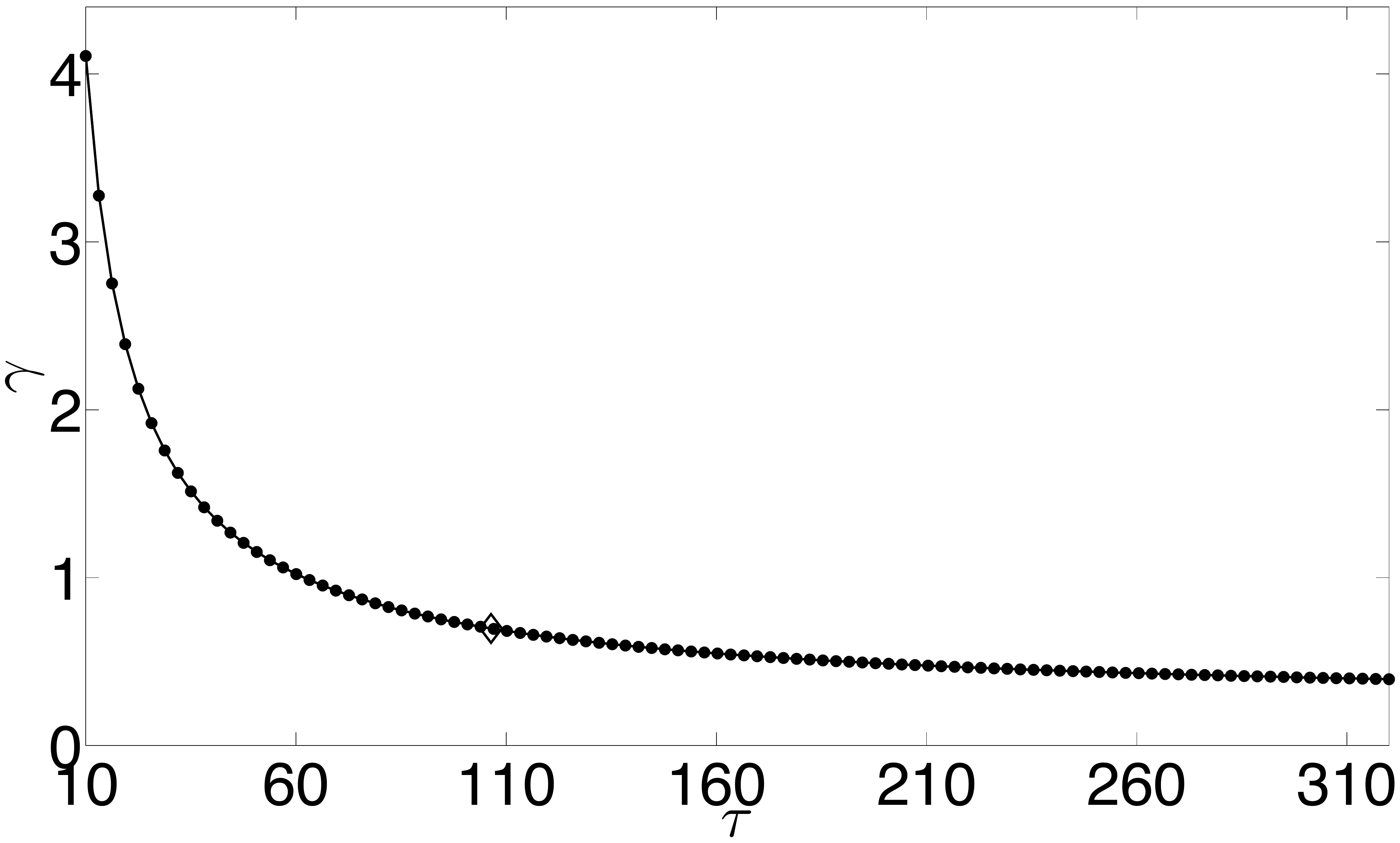}\includegraphics[width=0.5\textwidth,height=0.3\textwidth]{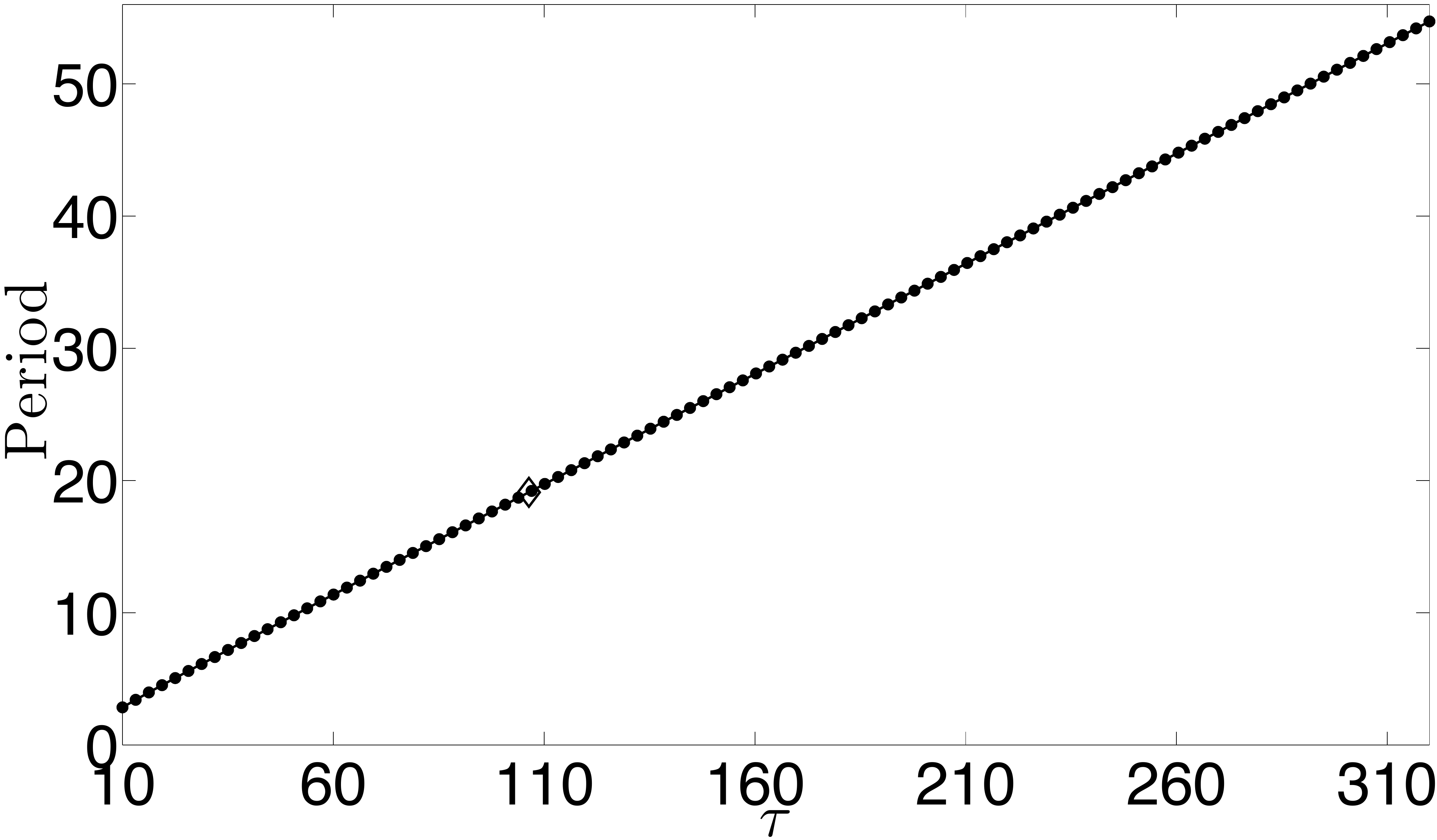}
\caption{Hopf bifurcation curve for $\mathbf{Y}^*_2$ on parameter space $(\tau,\gamma)$ (Left) and the respective period (in days) as function of $\tau$ (Right) for the discrete DDE model \eq{lctsd5}. Squares represent integer numbers $n$. The remaining parameters are as in Table~\ref{tab:QuartinoValues} and the diamond dot correspond to $\tau$ at its homeostasis value 106.41 hours.}
\label{fig.cont.disc}
\end{center}
\end{figure}

\subsection{Bifurcations in the QSP model of granulopoiesis}

We also studied whether parameter changes can lead the steady state $\mathbf{X}^{*}$ of system \eqref{eq:DDEmodel} to lose stability. We observed that changes in parameters related to proliferation and maturation lead to a loss of stability via a Hopf bifurcation, as reflected in Table~\ref{tab.hopf1}. Inversely, we further verified the stability of the steady state when the half-maximal neutrophil proliferation constant satisfies $\bNP \in [\bNP\times{10}^{-3},\bNP\times{10}^{3}]$, the rate of maturing neutrophil death satisfies $\gammaNM \in [\gammaNM\times{10}^{-2},\gammaNM\times{10}^{2}]$, and the neutrophil apoptosis rate in the bone marrow reservoir satisfies $\gammaNR \in [\gammaNR\times{10}^{-2},\gammaNR\times{10}^{2}]$.

We further investigated whether varying pairs of parameters in tandem could lead the steady state to lose stability via a Hopf bifurcation, as reflected in Table~\ref{tab.hopf2}. An additional Hopf bifurcation point leading to an orbit of period of $20.94$ days was observed by changing four parameters simultaneously: $\etaNPhomeo=1.7\,\text{days}^{-1}$; $\bNP=2.0$ ng/mL; $\etaNP^\textit{min}=1.3\,\text{days}^{-1}$; and $\tauNP=6.1\,\text{days}$.
\begin{table}[!htbp]
\centering
\hspace*{-0mm}
\begin{tabular}{|l|c|c|c|}
\hline
Parameter (units) & Homeostasis Value & Hopf Bifurcation  & Period (days)\\
\hline
$\gamma_{Q}$ (days$^{-1}$) & $0.1$ & $0.22791$ & $36.69$ \\
$\etaNPhomeo $ (days$^{-1}$)& $1.6647$ & $7.4$  & $60.31$\\
$\etaNP^{min}$ (days$^{-1}$) & $1.4060$ & $0.81 $  & $25.86$\\
$V_{max}$ (-)& $7.8669$ & $93$ & $5.20$\\
$b_{V}$ (ng/mL) & $0.24610$ & $0.0184$  & $5.20$\\
\hline
\end{tabular}
\caption{For each line the Hopf bifurcation point were computed changing the respective parameter and following the solution $\lambda$ of Eq. \eqref{char.dde} in the $(\sigma,\omega)$-plane numerically,  the period was estimated by $2\pi/\omega$ and the parameters values at homeostasis were obtained from \cite{Craig2016c}.}
\label{tab.hopf1}
\end{table}
\begin{table}[!htbp]
\centering
\hspace*{-0mm}
\begin{tabular}{|l|c|c|c|}
\hline
Parameters \;(units) & Hopf Bifurcation  & Period (days)\\
\hline
$(V_{max},b_{V})$\; (-,ng/mL)& $(30,7.8)$  & $5.20$ \\
$(V_{max},b_{V})$ \;(-,ng/mL)& $(61,16.1)$  & $5.20$ \\
$(\bNP,\etaNP^{min})$ \;(ng/mL,days$^{-1}$)& $(0.065,1.1)$  & $25.86$\\
$(\etaNPhomeo,\etaNP^\textit{min}) $ \;(days$^{-1}$,ng/mL) & $(2,1.1)$  & $32.58$\\
\hline
\end{tabular}
\caption{For each line the Hopf bifurcation point were computed changing the respective pair of parameters and following the solution $\lambda$ of Eq. \eqref{char.dde} in the $(\sigma,\omega)$-plane numerically,  the period was estimated by $2\pi/\omega$ and the parameters values at homeostasis were obtained from \cite{Craig2016c}.}
\label{tab.hopf2}
\end{table}

\section{The impact of stability and bifurcations on PK/PD considerations}
\label{sec:variability}

In the PK/PD context, sensitivity analysis is frequently applied to investigate the impact of parameters variability on the system's output. There the goal is to understand how predictions (outputs) change given changes to initial values (inputs). Still, when evaluating treatments, one may wonder how small changes to parameters affect the qualitative (e.g. existence and stability of equilibria, etc.) behaviour of the model. For example, if we have \textit{a priori} information about a PK parameter's variability and this parameter helps determine the model's stability, bifurcation analysis can help to assess whether small changes within the range of the measured variability can bring about serious unintended shifts in the physiological system. Put another way, how are parameters changing when the system shifts stability or becomes unstable? Bifurcation analysis is rarely used in conventional PK/PD analyses, however the study of qualitative model behaviour is becoming increasingly recognised as an important tool for drug development \cite{Bakshi2016,Ghosh2013}. The potential impact of variability in PK parameters on the dynamics of the governing equations, which correspond to the PD aspects, is multifaceted. Within pharmacometrics, various situations must therefore be considered when assessing which (and how) parameters are susceptible to generating bifurcations when their values change.

In the simplest case, for physiological or drug parameters not influenced by drug concentration, no bifurcation can be generated through any PK variability. Examples of such parameters could include the maximal achievable response in an Emax model or a zero-order endogenous production rate. In that vein, we observe that in the Quartino model, the bifurcation point for $\gamma$ is not likely to be reached by realistic variations in G-CSF concentrations.

More familiar to the pharmacometrician is the case where different parameters values associated with specific cohorts and/or patient subpopulations correspond to individual states in the dynamical system. Here bifurcation analysis is analogous to Population-PK (Pop-PK) covariate analysis techniques used to separate and determine Pop-PK models for each subgroup. From this covariate analysis, one infers that for each state, there is a particular PK/PD model for which the (between subject) variability has been explained. Accordingly, Pop-PK covariate studies examine the impact of between subject variability is readily accounted for during the process of building a Pop-PK model. In the same way, bifurcation analysis of mathematical models helps to ascertain how changes to (variability in) model parameters affect the stable state. Of note, in the most complex case, where the parameters of a dynamical system are affected by PKs (which can also vary during therapy), within subject variability in PK is an important factor in determining model stability and a case-by-case model analysis must be carried out.

The impact of interindividual variability (IIV) is also an open question when considering the time-rescaled Quartino \eqref{lctsd2} and discrete delay \eqref{lctsd5} models. Since the discrete delay model is the equivalent to taking $n$ to infinity in \eqref{lctsd2}, we investigated whether there would be altered behaviour due to IIV with increases in $n$. We began by generating 30 virtual patients using the docetaxel Pop-PK model of Bruno \cite{Bruno1996}
and set Edrug=Slope$C_{\text{doc}}$, as in \cite{Quartino2014}, where $C_{\text{doc}}$ is the concentration of docetaxel in the central compartment.
Using the same individual patient values for the generalised and time-rescaled Quartino models, we verified that solutions to \eqref{eq:QuartinoModelLimited} were identical as the solutions to \eqref{lctsd2} when the latter were rescaled according to \eqref{lctsd1} (not shown). We proceeded to compare the full Pop-PK predictions of \eqref{eq:QuartinoModelLimited} for these 30 virtual patients when $n=4$ and $n=10$. In Figure~\ref{fig:PopPK}, increases to $n$ are not significantly affected by the inclusion of IIV, though increases to $n$ decidedly impact on the distribution of solutions. Given these results and to better visualise the impact of increasing $n$, we then compared the predictions of \eqref{eq:QuartinoModelLimited} to \eqref{lctsd5} (the case of infinite $n$) using only the typical parameter estimates. As seen in Figure~\ref{fig:Increasing_n}, in the limit $n\to\infty$, solutions to the generalised Quartino model (and equivalently, the time-rescaled Quartino and distributed delay models) converge to that of the discrete delay.
Further, as $n$ increases, so too does the dimension of the resulting ODE system for \eqref{eq:QuartinoModelLimited}, which is not the case for the 3 equations of \eqref{lctsd5}, thereby encouraging the use of a delay model to speed up simulation time. It should be noted that in all simulations described in this section, due to the misspecification of MMT as MMT=$(n+1)/k_{tr}$, to compare directly with the results of \cite{Quartino2014} $k_{tr}=\kP$ were fixed to the value in Table~\ref{tab:QuartinoValues}, MMT was taken to be 133 hours, and $\tau$ was recalculate as $\tau=(4/5)\text{MMT}$, with $a=n/\tau$.

\begin{figure}[htbp!]
  \begin{subfigure}[b]{0.5\linewidth}
    \begin{center}
    \includegraphics[width=0.9\linewidth]{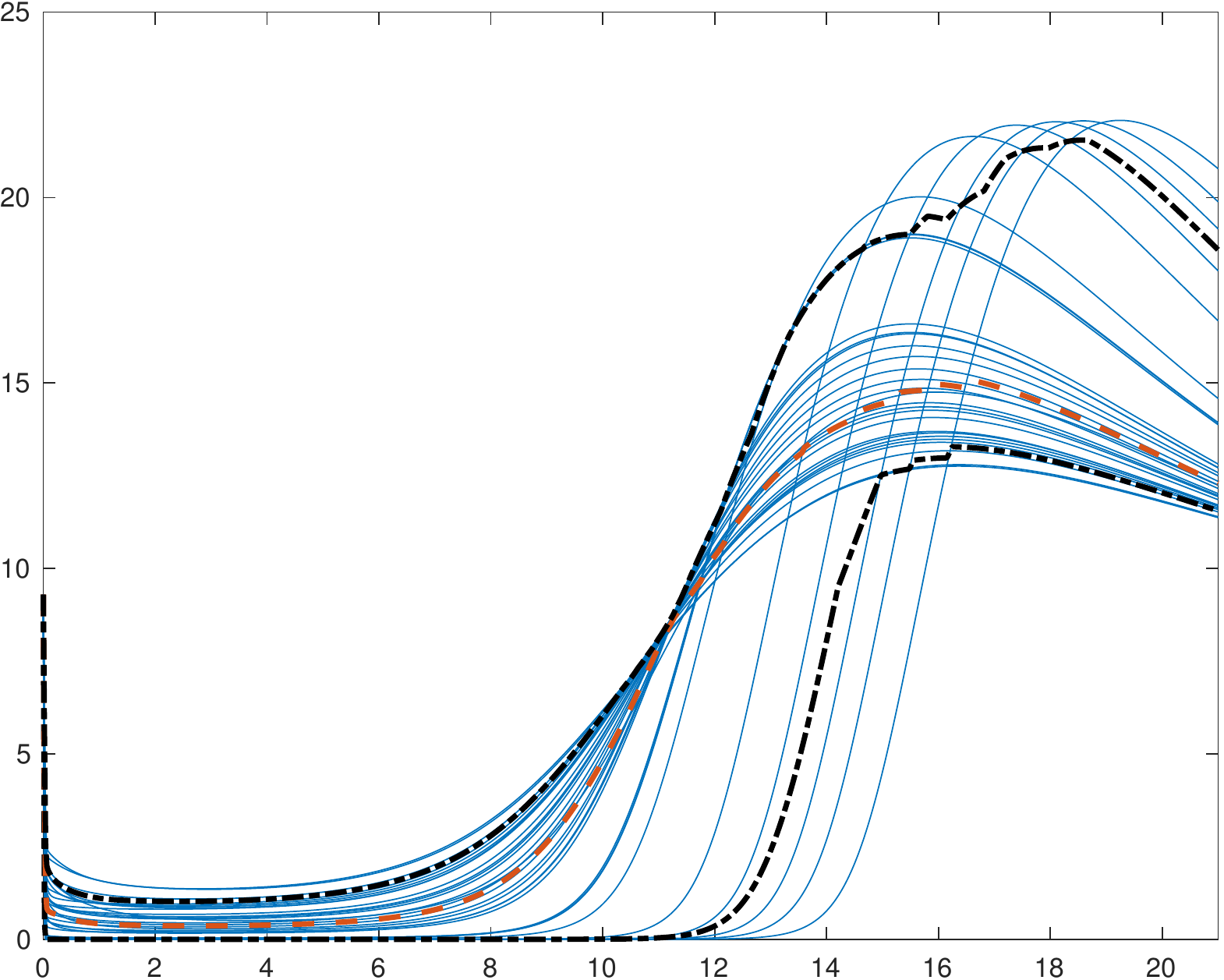}
      \put(-161,95){\rotatebox[origin=c]{90}{\scriptsize$P(t)$}}
     \put(-22,-6){\scriptsize Days}
     \end{center}
    \caption{$n=4$}
    \label{fig:Pn4}
    \vspace{4ex}
  \end{subfigure}
  \begin{subfigure}[b]{0.5\linewidth}
    \begin{center}
    \includegraphics[width=0.9\linewidth]{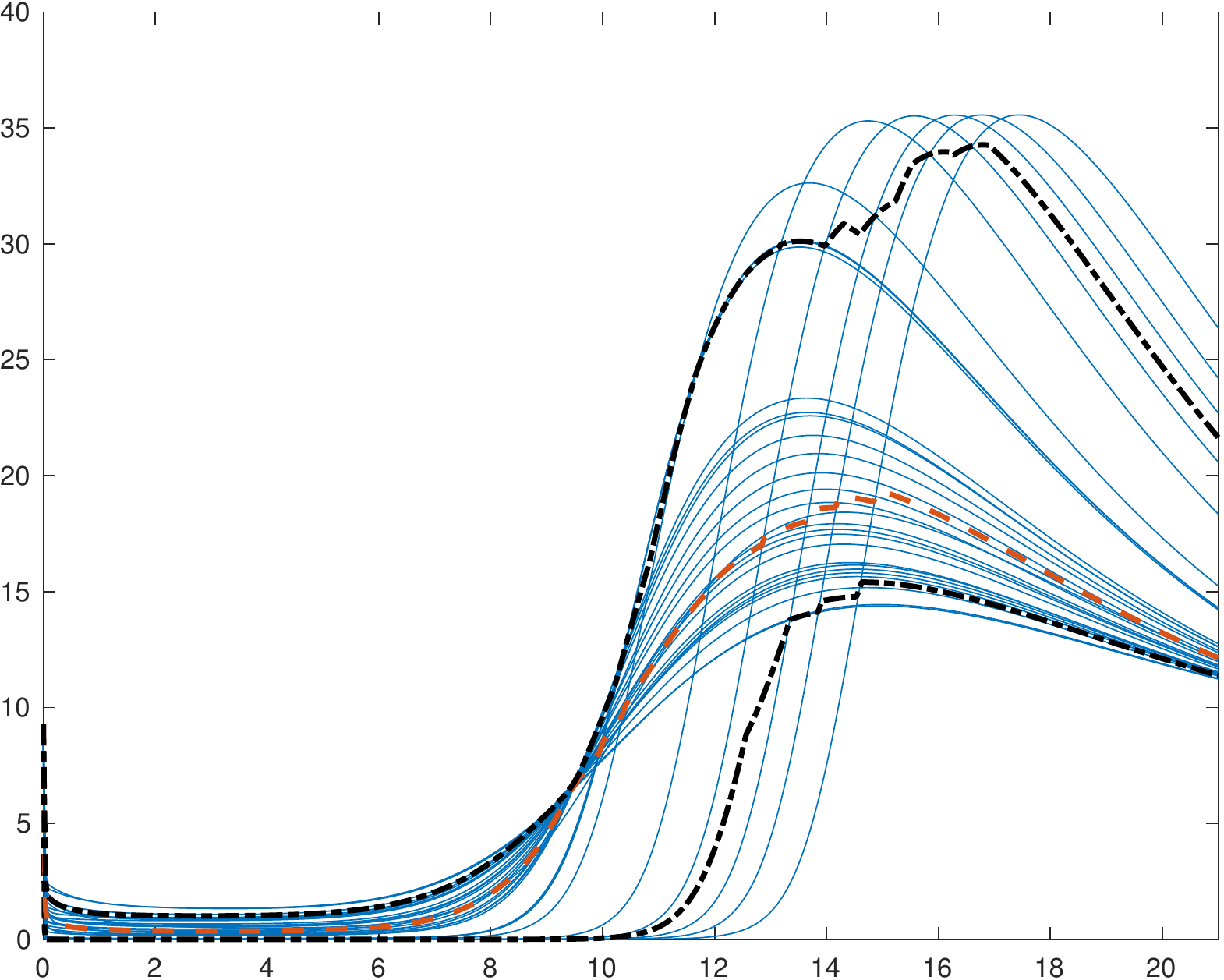}
         \put(-159,87){\rotatebox[origin=c]{90}{\scriptsize$P(t)$}}
     \put(-22,-6){\scriptsize Days}
     \end{center}
    \caption{$n=10$}
    \label{fig:Pn10}
    \vspace{4ex}
  \end{subfigure}
  \begin{subfigure}[b]{0.5\linewidth}
    \begin{center}
    \includegraphics[width=0.9\linewidth]{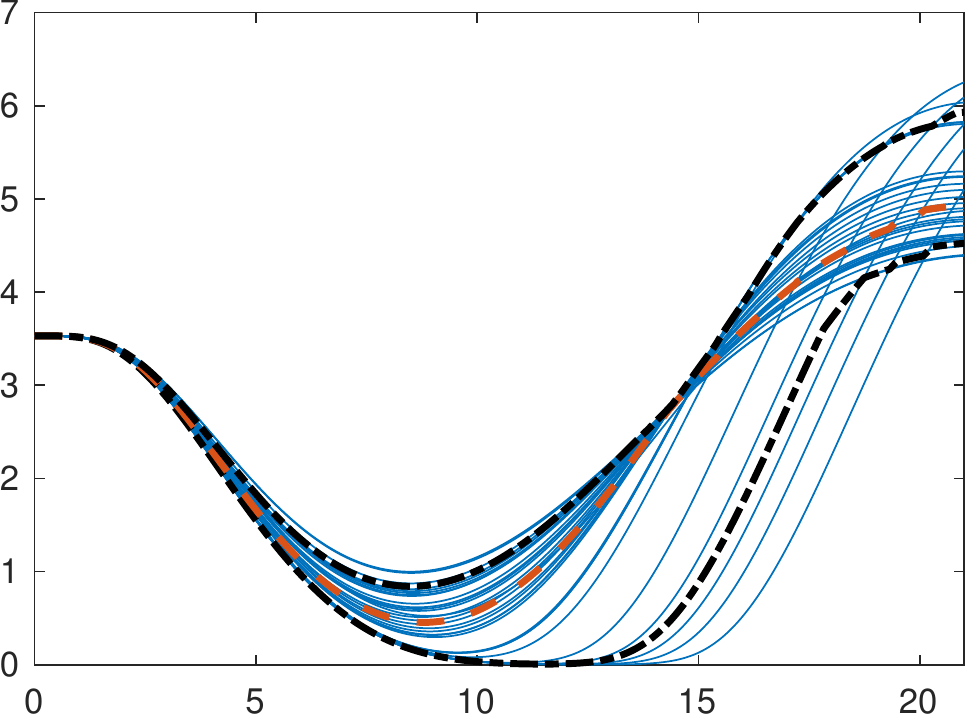}
          \put(-161,95){\rotatebox[origin=c]{90}{\scriptsize $Nt)$}}
     \put(-22,-6){\scriptsize Days}
     \end{center}
    \caption{$n=4$}
    \label{fig:Nn4}
  \end{subfigure}
  \begin{subfigure}[b]{0.5\linewidth}
    \begin{center}
    \includegraphics[width=0.9\linewidth]{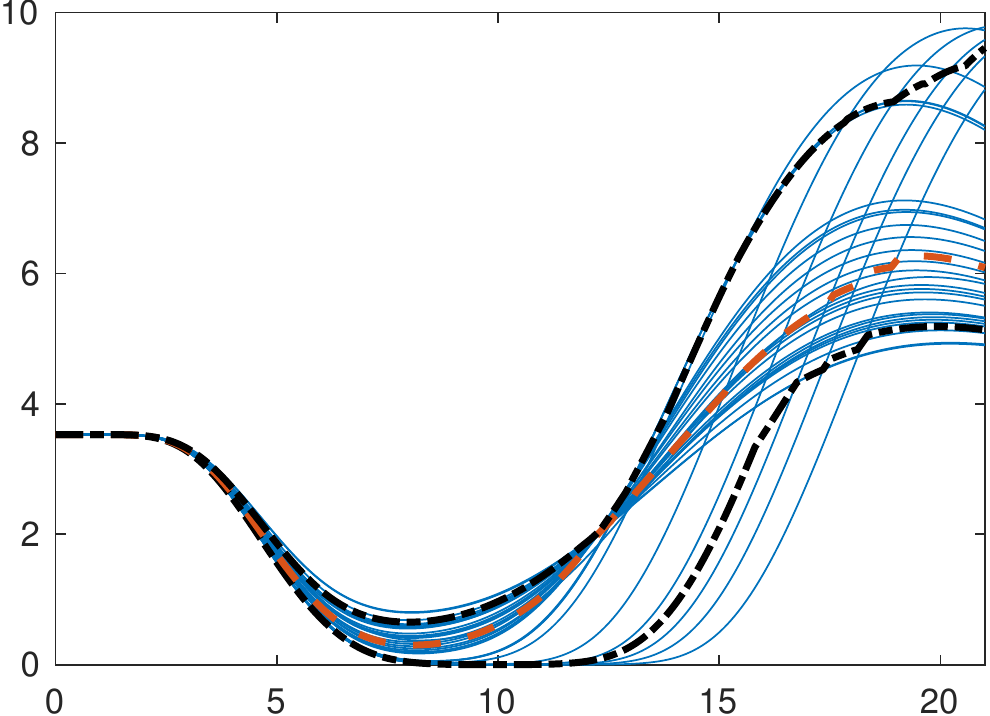}
          \put(-161,82){\rotatebox[origin=c]{90}{\scriptsize$N(t)$}}
     \put(-22,-6){\scriptsize Days}
     \end{center}
    \caption{$n=10$}
    \label{fig:Nn10}
  \end{subfigure}
  \vspace{4ex}
  \begin{subfigure}[b]{0.5\linewidth}
    \begin{center}
    \includegraphics[width=0.9\linewidth]{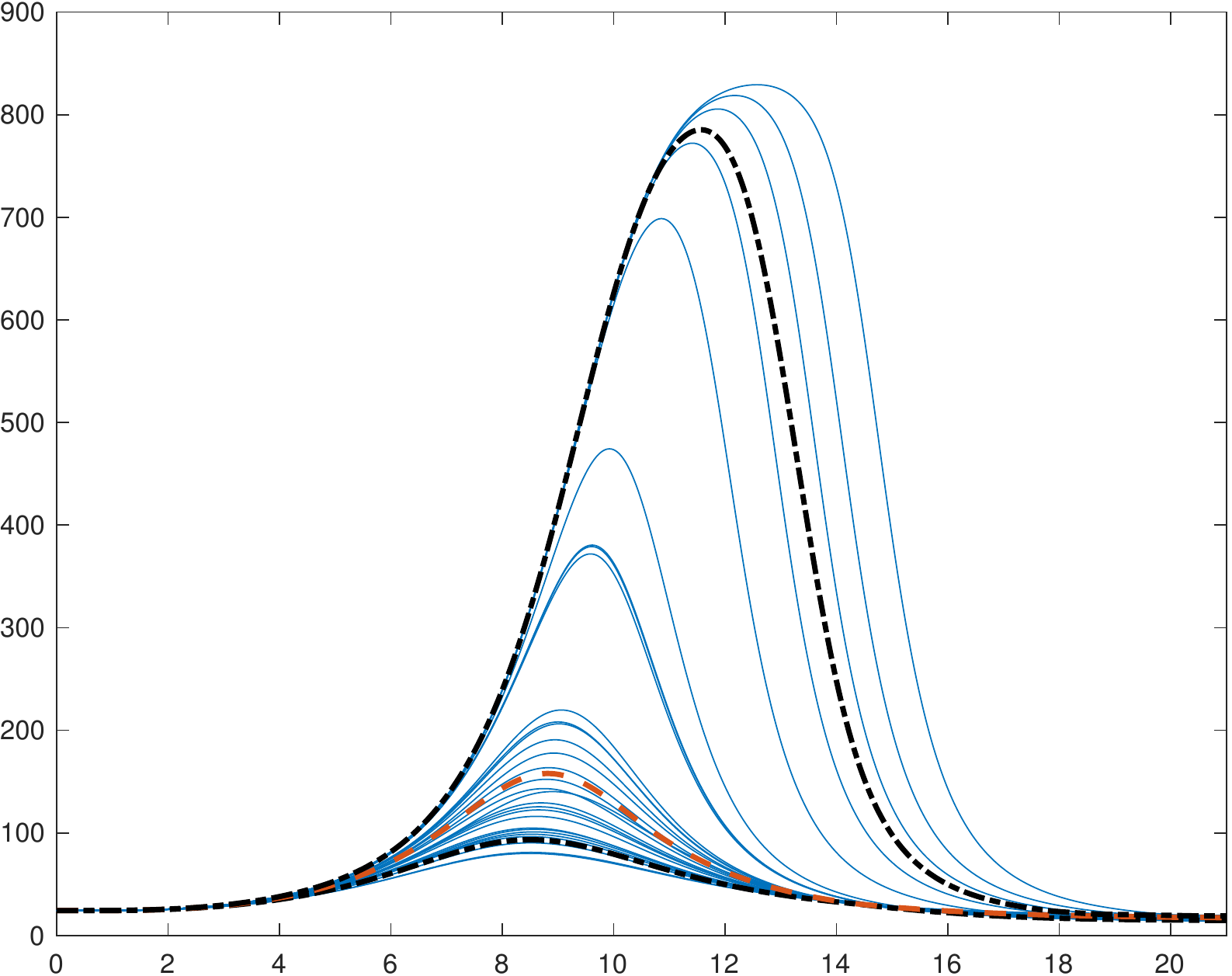}
          \put(-161,95){\rotatebox[origin=c]{90}{\scriptsize $G(t)$}}
     \put(-22,-6){\scriptsize Days}
     \end{center}
    \caption{$n=4$}
    \label{fig:Gn4}
  \end{subfigure}
  \begin{subfigure}[b]{0.5\linewidth}
    \begin{center}
    \includegraphics[width=0.9\linewidth]{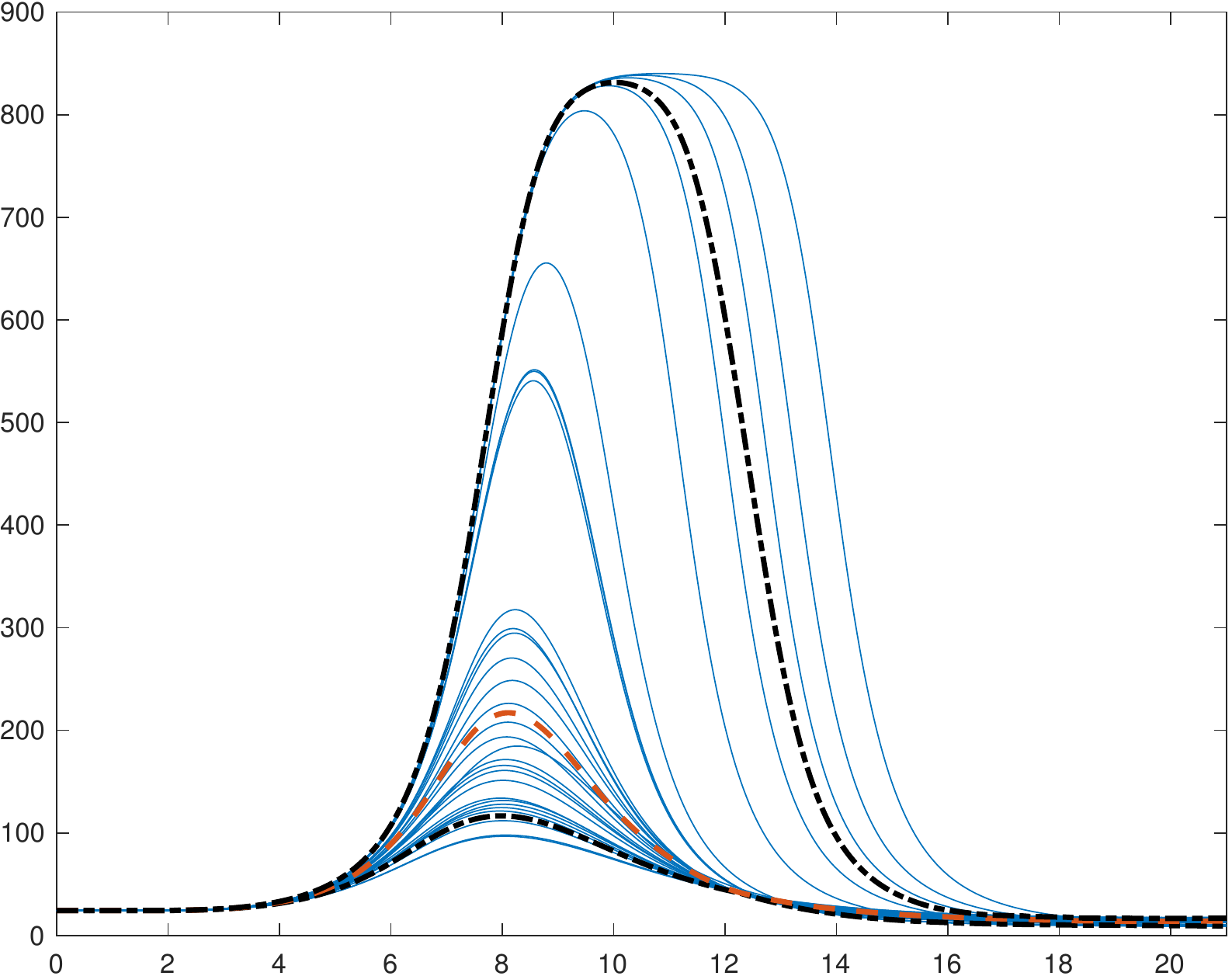}
          \put(-161,82){\rotatebox[origin=c]{90}{\scriptsize$G(t)$}}
     \put(-22,-6){\scriptsize Days}
     \end{center}
    \caption{$n=10$}
    \label{fig:Gn10}
  \end{subfigure}
   \caption{Impact of IIV on the Quartino model \eqref{eq:QuartinoModelLimited} and its equivalent forms \eqref{lctsd2} and \eqref{lctsd3} for two different values of $n$. PopPK parameters for 30 virtual patients were generated following \cite{Bruno1996} and used as inputs to \eqref{eq:QuartinoModelLimited}.
    Solid blue lines: individual predictions; black dotted-dashed lines: 10th and 90th percentiles of predictions; red dashed lines: median prediction.}
  \label{fig:PopPK}
\end{figure}

\begin{figure}[ht]
\begin{subfigure}[b]{0.5\linewidth}
    \begin{center}
    \includegraphics[width=0.9\linewidth]{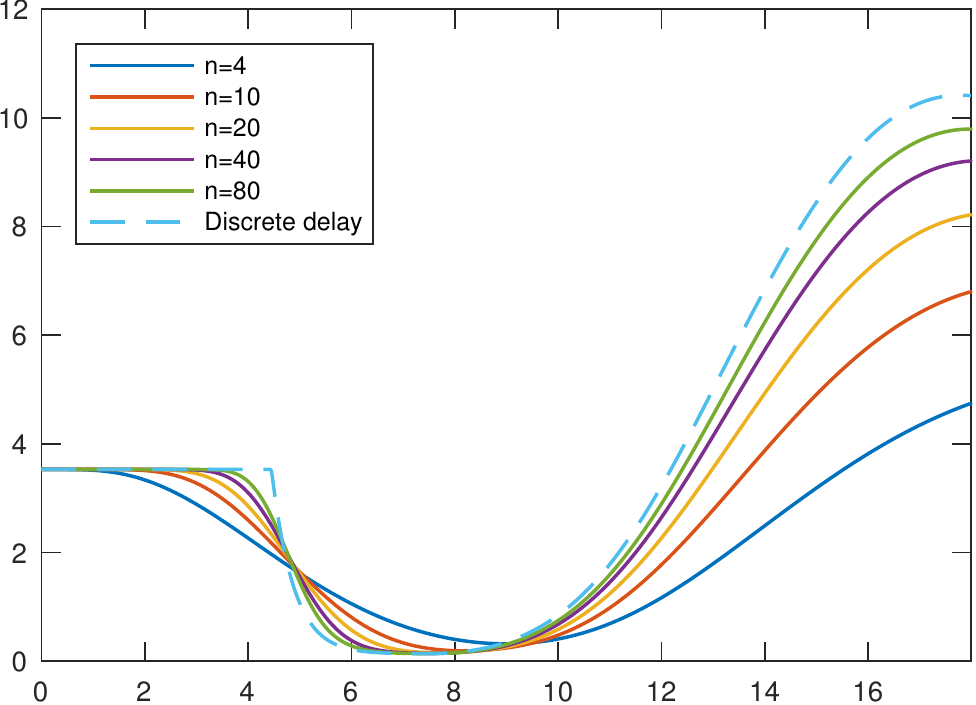}
     \put(-163,95){\rotatebox[origin=c]{90}{\scriptsize$N(t)$}}
     \put(-20,-6){\scriptsize Days}
     \end{center}
    \caption{Neutrophils}
    \vspace{4ex}
  \end{subfigure}
  \begin{subfigure}[b]{0.5\linewidth}
    \begin{center}
    \includegraphics[width=0.9\linewidth]{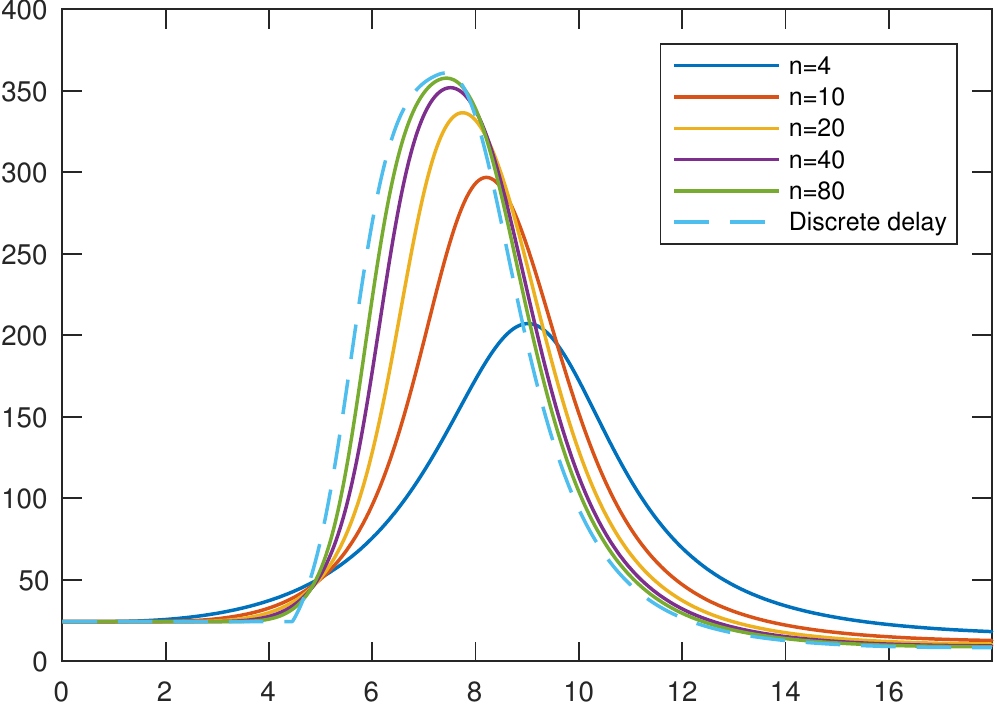}
          \put(-163,95){\rotatebox[origin=c]{90}{\scriptsize$G(t)$}}
     \put(-20,-6){\scriptsize Days}
     \end{center}
    \caption{G-CSF}
    \vspace{4ex}
  \end{subfigure}
\caption{Neutrophil and G-CSF concentrations from the generalised Quartino model \eqref{eq:QuartinoModelLimited} and the equivalent time-rescaled discrete delay model \eqref{lctsd5} for increasing values of $n$. Since the discrete delay model \eqref{lctsd5} is expressed in the time-rescaled $\hat{t}(t)$, we inverted $\eqref{lctsd1}$ and mapped the simulated solution back to $t$ to directly compare to \eqref{eq:QuartinoModelLimited}. Dashed line: solution of discrete delay.}
 \label{fig:Increasing_n}
\end{figure}

The analysis and results of Section~\ref{sec:DDEBifurcation} indicate the interest of performing stability and bifurcation studies in the pharmaceutical sciences setting.  Through sensitivity analysis, we have previously concluded that models constructed from first-principles, such as \eqref{eq:DDEmodel}, are robust to PKs by ``sufficiently'' accounting for the system's physiological mechanisms \cite{Craig2016}. Thus average PK parameters without reference to the full Pop-PK model are reliably predictive of the behaviour of such physiological models. Here we extend our previous work to study how PK variability (variations in G-CSF concentrations) affect the stable states of the PD (physiological) system.

The granulopoietic system is indeed very robust around homeostatic parameter values. An increase in G-CSF concentrations corresponds to a decrease in the value of $\gamma_Q$, and, from Table~\ref{tab.hopf1}, destabilisation of this equilibrium can only occur when $\gamma_q> \gamma^h_Q$, and thus such changes preserve homeostasis. With regard to changes in the proliferative  progenitor compartment, either by variations in the parameters $\eta^h_{N_P}$ or $b_v$, physiologically realistic scenarios preclude reaching the bifurcation values: for example, in the case of $\eta^h_{N_P}$, artificial removal of G-CSF from the body would be required.

\section{Discussion}
\label{sec:discussion}

Mathematical pharmacology, defined as the study of mathematical approaches to pharmacological processes, is increasingly recognised as a quantitative methodology critical to understanding pharmaceutical treatments  and their efficacy while simultaneously raising compelling mathematical problems \cite{vanderGraaf2016}. Using granulopoiesis as a backdrop, in the present paper we have examined the connections between the familiar PK/PD model formalism originally proposed by Friberg \cite{Friberg2002} and adapted by Quartino \cite{Quartino2014} and a discrete delay model of neutrophil production, connected via a distributed delay model. Crucially, we have shown how the stability of each model can be studied straightforwardly via this latter distributed delay model, underlining the advantage of being able to transfer between these equivalent expressions and motivating the present analysis.
We examined the impact of the inclusion of IIV on the solutions to each of the models, and determined that variations are driven through increases to $n$ rather than the presence of variability; as $n\to\infty$, solutions of the generalised (time-rescaled) Quartino transit compartment model converge to that of the time-rescaled discrete DDE model.
Last, using our previously published QSP model of the negative feedback relationship between granulopoiesis and G-CSF, we have identified several Hopf bifurcations through bifurcation analysis, a technique not commonly applied in the classical PK/PD analyses, and reviewed the impact the interpretation of such bifurcations can have on our understanding of pharmacological systems when used in concert with more common sensitivity and variability analyses. We would like to highlight two results in particular. First, the distributed delay model \eqref{lctsd3} exhibits wider regions of stability around the steady state $\mathbf{Y}_2^*$ as compared to the discrete DDE model \eqref{lctsd5}, consistent with the result that ``distributed delays are inherently more stable than the same system with discrete delays'' \cite{Campbell2009}. Second, we identified Hopf bifurcations in the distributed and discrete delay forms of the Quartino model with periods corresponding to those in cyclic neutropenic patients by varying the feedback parameter $\gamma$ and the delay $\tau$, demonstrating how bifurcation analyses can be applied in mathematical pharmacology to understand the pathogenesis towards diseases.

Perhaps the most immediately consequential conclusion drawn here is the incorrect definition of the mean transit/maturation time in the original and subsequent applications of the Friberg model. As previously mentioned, by setting $\kP=a=k_{tr}$ with $G(t)=G_0$, the MTT(MMT) was originally expressed as $(n+1)/k_{tr}$. However, we have shown that the mean delay of the distributed delay model is constant and instead given by $\tau=n/a$ (which, when $a=k_{tr}$ is then clearly given by $n/k_{tr}$). Thus it is mathematically incorrect to set MMT$=(n+1)/a$ as it treats the proliferative pool as an additional transit compartment and this formulation cannot be recovered via the linear chain technique.
The generalised Quartino model \eqref{eq:QuartinoModelLimited} explicitly decouples the maturation time and the production rate of cells to eliminate this problem.
Additionally, we highlighted the mathematical issue presented when $a$ is non-constant, as in \cite{Quartino2014}, to the derivation \eqref{lct3}, which is essential to the linear chain technique to recover the correct ODE formulation from the distributed delay model \eqref{lctq1}.

Further, since, in \cite{Friberg2002} and its various extensions and applications, the parameter $k_{tr}$ is determined via the MMT, the mean maturation time is fit and then the rate of transit through each compartment is determined via the equation MMT$=(n+1)/k_{tr}$. This leads to disparate estimates for the maturation process, ranging, for example, from 102 hours ($n=6$) in \cite{Quartino2012} to 210 hours ($n=4$) in \cite{Quartino2014}. Physiological labelling studies report a much narrower range of maturation times (6.4 days in \cite{Price} and 6.9 days in \cite{Dancey1976}, for example). Thus, allowing the MMT to vary widely is not physiologically consistent and further introduces additional mathematical difficulties since, in general, $\tau=n/a$, where $a$ is not necessarily equal to $k_{tr}$ nor $\kP$.

We therefore emphasise that this work provides further motivation to systematically incorporate,
from first principles, the physiological architecture yielding the
proper mathematical formulation of pharmacological models.

\section*{Acknowledgements}

DCS was supported by National Council for Scientific and Technological
Development of Brazil (CNPq) postdoctoral fellowship 201105/2014-4.
MC was supported by an Natural Sciences and Engineering Research Council of Canada (NSERC) postdoctoral fellowship and grant DP5OD019851 from the Office of the Director at the National Institutes of Health to her PI. 
TC was supported by the Alberta government via the Sir James Lougheed Award of Distinction as well as the Centre de Recherche Math\'ematiques, Montr\'eal.
FN and JL are funded by FN's NSERC Industrial Chair in Pharmacometrics, supported by Novartis, Pfizer, and inVentiv Health Clinics, and an FQRNT projet d'\'equipe. JB and ARH~are grateful to NSERC
for funding through the Discovery Grant program.
We are appreciative for our many very useful discussions with Michael C. Mackey.



\newpage
\appendix

\noindent
\textbf{\large Appendices}

\section{Time Rescaling of Quartino Model}
\label{app.timerescale}

Here we show how the time rescaling \eq{lctsd1} we applied to the generalised Quartino model \eq{eq:QuartinoModelLimited} relates to the state-dependent delays that are used in the QSP granulopoiesis model \eq{eq:DDEmodel}.

For the Friberg model \eq{gdd2} we have average maturation delay $\tau$ given by $\tau=n/a$. Hence
$$n=a\tau=\int_{t-\tau}^t a ds.$$
For the Quartino model \eq{eq:QuartinoModelLimited} the maturation rate $a$ is replaced by $a(G/G_0)^\beta$
and hence the time-dependent maturation delay $\alpha(t)$ for this model is given by
$$n=\int_{t-\alpha(t)}^t a \left(\frac{G(s)}{G_0}\right)^\beta ds.$$
Thus the mean maturation time $\tau$ for the time-rescaled Quartino model \eq{lctsd2} is related to $\alpha(t)$ by
\be \label{alphathres}
\tau=\frac{n}{a}=\int_{t-\alpha(t)}^t \left(\frac{G(s)}{G_0}\right)^\beta ds.
\ee
Equation \eq{alphathres} defines $\alpha(t)$ by a threshold condition. This is completely analogous to the threshold condition \eq{tauNMthres} used to define the state-dependent delay $\tauNM(t)$ in the QSP model \eq{eq:DDEmodel}.

Differentiating \eq{alphathres} using Leibniz rule we obtain an expression for the evolution of $\alpha(t)$ as
$$0=\left(\frac{G(t)}{G_0}\right)^\beta - \Bigl(1-\frac{\textrm{d}\alpha}{\textrm{d}t}\Bigr)\left(\frac{G(t-\alpha(t))}{G_0}\right)^\beta,$$
which can be rewritten as
\be \label{alphaev}
\frac{\textrm{d}\alpha}{\textrm{d}t}=1-\left(\frac{G(t)}{G(t-\alpha(t))}\right)^\beta,
\ee
and determines the evolution of $\alpha(t)$. An analogous expression was derived in Craig~\cite{Craig2016c} for the evolution of $\tauNM(t)$ in the QSP model \eq{eq:DDEmodel}.

\section{Positivity of Solutions}

We show positivity of the solutions to the models considered in the paper.

\subsection{Positivity of solutions to the Quartino endogenous G-CSF model}
\label{sec:QuartinoPositivity}

Consider the generalised Quartino model \eqref{eq:QuartinoModelLimited}.

\begin{lemma}\label{GLemma1}
Assume that the parameters in \eqref{eq:QuartinoModelLimited} are strictly positive and that $G(0)\geq 0$.
\begin{enumerate}[i)]
\item
Then $G(t)>0$ for all time $t>0$.
\item
If $N(t)\geq0$ for $t\in[0,\TN]$ then $G(t)\leq\max\{k_{in}/k_{e},G(0)\}$ for all $t\in[0,\TN]$.
\end{enumerate}
\end{lemma}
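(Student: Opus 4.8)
The plan is to treat the last equation of \eqref{eq:QuartinoModelLimited} as a scalar, first-order linear ODE in $G$, namely $\dot G = k_{in}-(k_e+\kANC N(t))G$, with $N(t)$ regarded as a given continuous function on the time interval on which the solution exists. Neither assertion uses any information about the signs of $P$, $T_j$ or $N$ beyond what is explicitly assumed, so both reduce to elementary estimates for this one equation.

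For part~(i) I would use the integrating factor
\[
\mu(t)=\exp\!\left(\int_0^t\bigl(k_e+\kANC N(s)\bigr)\,ds\right)>0,
\]
for which $\frac{\textrm{d}}{\textrm{d}t}\bigl(\mu(t)G(t)\bigr)=\mu(t)k_{in}$ and therefore
\[
G(t)=\frac{1}{\mu(t)}\left(G(0)+k_{in}\int_0^t\mu(s)\,ds\right).
\]
Because $\mu>0$, $k_{in}>0$ and $G(0)\geq0$, the right-hand side is at least $k_{in}\,\mu(t)^{-1}\!\int_0^t\mu(s)\,ds$, which is strictly positive for every $t>0$; this is the claim. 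An equivalent route is a first-zero argument: if $t_*>0$ is the smallest time with $G(t_*)=0$, then $G>0$ on $(0,t_*)$ forces $\dot G(t_*)\leq0$, whereas the equation gives $\dot G(t_*)=k_{in}>0$; the subcase $G(0)=0$ causes no trouble since $\dot G(0)=k_{in}>0$ makes $G$ immediately positive.

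For part~(ii) set $M=\max\{k_{in}/k_e,\,G(0)\}$. By part~(i) we have $G(t)\geq0$ on $[0,\TN]$, and $N(t)\geq0$ there by hypothesis, so $\kANC N(t)G(t)\geq0$ and hence
\[
\frac{\textrm{d}G}{\textrm{d}t}=k_{in}-(k_e+\kANC N(t))G\leq k_{in}-k_e G\qquad\text{on }[0,\TN].
\]
Comparing $G$ with the solution $y$ of $\dot y=k_{in}-k_e y$, $y(0)=G(0)$ (a Gr\"onwall-type comparison, the field $y\mapsto k_{in}-k_e y$ being Lipschitz) yields
\[
G(t)\leq e^{-k_e t}G(0)+\frac{k_{in}}{k_e}\bigl(1-e^{-k_e t}\bigr),
\]
a convex combination of $G(0)$ and $k_{in}/k_e$, hence $\leq M$ on $[0,\TN]$. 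One may instead argue by a barrier: wherever $G=M$ one has $\dot G\leq k_{in}-k_e M\leq0$, so a trajectory starting at or below $M$ can never cross it. The only subtle point, and the closest thing to an obstacle here, is the small bootstrap in part~(ii): dropping $-\kANC N(t)G(t)$ is justified only because \emph{both} $N\geq0$ (assumed) and $G\geq0$ (established in part~(i)) hold. Apart from that, everything is routine, precisely because the $G$-equation decouples from the fractional powers $(G/G_0)^{\gamma}$ and $(G/G_0)^{\beta}$ that appear elsewhere in the system.
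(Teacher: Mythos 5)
Your proposal is correct and, in substance, the same as the paper's argument: the paper proves (i) by the first-zero contradiction (if $G(\TG)=0$ then $\frac{\textrm{d}G}{\textrm{d}t}(\TG)=k_{in}>0$, contradicting $G$ reaching zero from above) and proves (ii) from the differential inequality $\frac{\textrm{d}G}{\textrm{d}t}\leq k_{in}-k_eG$ via the barrier observation that $\frac{\textrm{d}G}{\textrm{d}t}<0$ whenever $G>k_{in}/k_e$ --- both of which you list as equivalent routes alongside your explicit integrating-factor and Gr\"onwall variants. Your explicit remark that dropping $-\kANC N(t)G(t)$ in (ii) requires both $N\geq0$ and the positivity of $G$ from (i) is a bootstrap the paper uses implicitly, so nothing is missing.
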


\begin{proof}
i) Either $G(0)>0$ or $G(0)=0$ and $\frac{\textrm{d}G}{\textrm{d}t}(0)=k_{in}>0$. In both cases $G(t)>0$ for $t\in(0,\epsilon)$ for some $\epsilon>0$. Assume, for contradiction, that there exists $\TG>0$ such that $G(\TG)=0$ but
$G(t)>0$ for $t\in(0,\TG)$. Since $G$ is decreasing at $t=\TG$ this implies that $\frac{\textrm{d}G}{\textrm{d}t}(\TG)\leq0$.
But this is contradicted by \eq{eq:QuartinoModelLimited} which implies that $\frac{\textrm{d}G}{\textrm{d}t}(\TG)=k_{in}>0$ if $G(\TG)=0$. Thus there exists no such that time $\TG$, and hence $G(t)>0$ for all $t>0$. \qed

\noindent
ii) For $t\in[0,\TN]$ we have $\frac{\textrm{d}G}{\textrm{d}t}\leq k_{in}-k_{e}G$, and hence $\frac{\textrm{d}G}{\textrm{d}t}<0$ if
$G(t)>k_{in}/k_{e}$. The result follows. \qed
\end{proof}

\begin{lemma}\label{PLemma}
Assume that the parameters in \eqref{eq:QuartinoModelLimited} are strictly positive, and that the initial conditions satisfy $P(0)>0$ and $G(0)\geq 0$. Furthermore, assume that there exists $\TN>0$ such that
$N(t)\geq0$ for $t\in[0,\TN]$. Then $P(t)>0$ for all time $t\in[0,\TN]$.
\end{lemma}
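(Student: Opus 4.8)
The plan is to exploit the fact that, once the G-CSF profile $G(t)$ has been fixed, the progenitor equation \eqref{eq:QuartinoModelLimiteda} is a \emph{scalar linear} ODE for $P$, and hence admits an explicit integrating-factor representation. Setting
$$r(t):=\kP(1-\EDrug)\left(\frac{G(t)}{G_0}\right)^{\!\gamma}-k_{tr}\left(\frac{G(t)}{G_0}\right)^{\!\beta},$$
equation \eqref{eq:QuartinoModelLimiteda} reads $\TimeDeriv P = r(t)\,P$, so that
$$P(t)=P(0)\exp\!\left(\int_0^t r(s)\,\textrm{d}s\right)$$
for as long as the solution is defined (and the hypothesis that $N(t)\geq0$ on $[0,\TN]$ presupposes existence on that interval). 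Since $P(0)>0$ by assumption, showing $P(t)>0$ on $[0,\TN]$ reduces to showing that $\int_0^t r(s)\,\textrm{d}s$ is finite for each $t\in[0,\TN]$, i.e.\ that $r$ is bounded on $[0,\TN]$.

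The boundedness of $r$ is exactly what Lemma~\ref{GLemma1} supplies. Part (i) gives $G(t)>0$ for $t>0$, so $G$ is nonnegative and continuous on $[0,\TN]$; since the feedback exponents $\gamma$ and $\beta$ are positive, the maps $s\mapsto(G(s)/G_0)^\gamma$ and $s\mapsto(G(s)/G_0)^\beta$ are continuous on $[0,\TN]$. Part (ii), whose hypothesis $N(t)\geq0$ on $[0,\TN]$ is precisely the one we are assuming, yields $0\leq G(t)\leq M:=\max\{k_{in}/k_e,G(0)\}$ for $t\in[0,\TN]$. Hence (with $\EDrug$ bounded on the finite interval, or simply $\EDrug\equiv0$) the function $r$ is bounded on $[0,\TN]$, so $\bigl|\int_0^t r(s)\,\textrm{d}s\bigr|\leq \TN\,\sup_{[0,\TN]}|r|<\infty$, and therefore $P(t)=P(0)\exp\!\bigl(\int_0^t r(s)\,\textrm{d}s\bigr)>0$ for all $t\in[0,\TN]$.

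A contradiction argument in the style of the proof of Lemma~\ref{GLemma1} gives the same conclusion without invoking the explicit solution: if $\TG\in(0,\TN]$ were the first time at which $P(\TG)=0$, then on $[0,\TG)$ one has $P>0$ and $\TimeDeriv P = r(t)P \geq -\bigl(\sup_{[0,\TN]}|r|\bigr)P$, so Gr\"onwall's inequality forces $P(t)\geq P(0)\exp\!\bigl(-\TN\sup_{[0,\TN]}|r|\bigr)>0$ on $[0,\TG)$, contradicting continuity of $P$ at $\TG$. I do not anticipate any genuine obstacle here; the single point requiring care is the finiteness of the exponent, which is delivered by combining the two parts of Lemma~\ref{GLemma1}, the upper bound on $G$ being available exactly because the hypothesis provides $N(t)\geq0$ on the same interval $[0,\TN]$.
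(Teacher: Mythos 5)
Your proposal is correct and follows essentially the same route as the paper: both use Lemma~\ref{GLemma1} to obtain $0\leq G(t)\leq M$ on $[0,\TN]$, and then exploit the linearity of the $P$-equation via Gr\"onwall. The only cosmetic difference is that the paper works with the one-sided differential inequality $\frac{\textrm{d}P}{\textrm{d}t}\geq -k_{tr}(M/G_0)^{\beta}P$ (dropping the nonnegative proliferation term), whereas you write the explicit integrating-factor solution and bound the full coefficient $r$ from both sides; the content is the same.
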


\begin{proof}
By Lemma~\ref{GLemma1} for $t\in[0,\TN]$ we have $0 \leq G(t) \leq M=\max\{k_{in}/k_{e},G(0)\}$.
Thus from \eq{eq:QuartinoModelLimited} we have
$$\frac{\textrm{d}P}{\textrm{d}t} \geq -k_{tr} \left( \frac{M}{G_0}\right)^{\beta}P(t),$$
which (using the continuous Gronwall lemma) implies  that
$$P(t) \geq P(0) \exp \left[ - k_{tr}t\left(\frac{M}{G_0} \right)^{\beta}\right] > 0, \qquad t\in[0,\TN]. \eqno{\qed}$$
\end{proof}

\begin{lemma}\label{TransitLemma}
Assume that the parameters in \eqref{eq:QuartinoModelLimited} are strictly positive, and that the initial conditions satisfy $P(0)>0$, $G(0)\geq 0$ and $T_j(0)\geq0$ for $j=1,2,\ldots,n$.
Then there exists $\epsilon>0$ such that $T_j(t)>0$ for all time $t\in(0,\epsilon)$ for all $j=1,2,\ldots,n$.
Furthermore, if there exists $\TN>0$ such that
$N(t)\geq0$ for $t\in[0,\TN]$ then $T_j(t)>0$ for all time $t\in(0,\TN]$ for all $j=1,2,\ldots,n$.
\end{lemma}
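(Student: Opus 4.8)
The plan is to prove both assertions by induction on $j$, treating each transit equation in \eqref{eq:QuartinoModelLimited} as a scalar \emph{linear} ODE for $T_j$ with a nonnegative source term and solving it by variation of constants. Throughout, Lemma~\ref{GLemma1}(i) guarantees that $G(t)>0$, hence $\mu(t):=a\left(G(t)/G_0\right)^\beta>0$, for all $t>0$; moreover $\mu$ is continuous on any interval of existence (here using $\beta\geq0$, so $\left(G/G_0\right)^\beta$ is well behaved even where $G$ vanishes), so the integrating factors $\exp\!\left(\int_0^t\mu(s)\,ds\right)$ are well defined and finite on bounded intervals. Note that \eqref{eq:QuartinoModelLimitedb} can be written $\TimeDeriv T_1=-\mu(t)T_1+\sigma_1(t)$ with $\sigma_1(t):=k_{tr}\left(G(t)/G_0\right)^\beta P(t)$, and \eqref{eq:QuartinoModelLimitedc} as $\TimeDeriv T_j=-\mu(t)T_j+\mu(t)T_{j-1}(t)$ for $j\geq2$.

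For the first (local) assertion I would \emph{not} invoke Lemma~\ref{PLemma}: since $P(0)>0$ and $P$ is continuous, there is an $\epsilon>0$ with $P(t)>0$ on $[0,\epsilon)$, which already makes $\sigma_1$ strictly positive on $(0,\epsilon)$. Variation of constants then gives
\[
T_1(t)=T_1(0)\,e^{-\int_0^t\mu(s)\,ds}+\int_0^t e^{-\int_s^t\mu(u)\,du}\,\sigma_1(s)\,ds,
\]
and since $T_1(0)\geq0$ while the Duhamel integral has a nonnegative integrand that is strictly positive on $(0,t)$, we conclude $T_1(t)>0$ for $t\in(0,\epsilon)$. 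For the inductive step, if $T_{j-1}(t)>0$ on $(0,\epsilon)$ then the source $\mu(t)T_{j-1}(t)$ in the equation for $T_j$ is nonnegative and strictly positive on $(0,\epsilon)$, so the identical formula yields $T_j(t)>0$ on $(0,\epsilon)$. The point worth emphasising is that the \emph{same} $\epsilon$ — the one controlling $P$ — propagates up the whole chain, so there is no successive shrinking and one $\epsilon$ works for all $j=1,\ldots,n$.

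For the second (global) assertion on $(0,\TN]$ the only modification is to begin by invoking Lemma~\ref{PLemma} to obtain $P(t)>0$ for all $t\in[0,\TN]$ — this is precisely where the hypothesis $N(t)\geq0$ on $[0,\TN]$ enters — so that $\sigma_1>0$ on $(0,\TN]$; one then repeats the induction above verbatim with $\epsilon$ replaced by $\TN$, using that $\mu$ is continuous and bounded on the compact interval $[0,\TN]$ (boundedness of $G$ from above coming from Lemma~\ref{GLemma1}(ii)) so that all integrating factors remain finite.

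I do not expect a serious obstacle: this is the standard positivity argument for linear transit chains. The only points needing a little care are (i) finiteness of $\int_0^t\mu$ near $t=0$, which is immediate from continuity of $\mu$ together with $\beta\geq0$; and (ii) extracting \emph{strict} positivity rather than mere nonnegativity, which follows because at each stage of the induction the source term is strictly positive on a nonempty subinterval of $(0,t)$, forcing the Duhamel integral to be strictly positive for every $t>0$.
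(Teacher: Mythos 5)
Your proof is correct, but it reaches the conclusion by a genuinely different mechanism than the paper. The paper argues by contradiction from the sign of the vector field on the boundary: it notes that whenever $T_j(t)\leq 0$ (with $G>0$ and $T_{j-1}>0$, resp.\ $P>0$) the right-hand side of the $T_j$ equation is strictly positive, so a trajectory starting from $T_j(0)\geq 0$ cannot reach nonpositive values; for the extension to all of $(0,\TN]$ it then switches to the differential inequality $\frac{\textrm{d}T_j}{\textrm{d}t}\geq -a(M/G_0)^\beta T_j$ and a Gronwall-type bound, with the upper bound $M$ on $G$ supplied by Lemma~\ref{GLemma1}(ii). You instead treat each transit equation as a scalar linear ODE along the trajectory and read positivity off the variation-of-constants representation: the homogeneous part is nonnegative because $T_j(0)\geq 0$, and the Duhamel integral is strictly positive because its integrand is strictly positive on $(0,t)$ once $P>0$ (from continuity locally, from Lemma~\ref{PLemma} globally) and $G>0$ (from Lemma~\ref{GLemma1}(i)). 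Your route buys two things: it dispenses with the paper's three-way case analysis on $(T_1(0),G(0))$ at $j=1$, since the formula handles $T_j(0)=0$ and $G(0)=0$ uniformly; and it unifies the local and global assertions into one computation, needing only continuity (not an explicit bound) of $\mu$ on the compact interval. It also cleanly delivers a single $\epsilon$ for the whole chain, a point the paper leaves implicit. The paper's barrier argument, for its part, avoids writing down any solution formula and generalises more readily to situations where the equation for $T_j$ is not linear in $T_j$. Both proofs lean on the same external inputs (Lemmas~\ref{GLemma1} and~\ref{PLemma}) in the same places, and I see no gap in yours; the only cosmetic slip is your parenthetical ``$\beta\geq 0$'' where the hypothesis is in fact $\beta>0$, which only makes the continuity of $(G/G_0)^\beta$ at $G=0$ easier.
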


\begin{proof}
We proceed by induction on $j$. For $j=1$, either (i) $T_1(0)>0$ and $G(0) \geq 0$, or (ii) $T_1(0)=0$ and $G(0)>0$, or (iii)
$T_1(0)=G(0)=0$. In case (ii)
$\frac{\textrm{d}T_1}{\textrm{d}t}(0)=k_{tr}\left(\frac{G(0)}{G_0}\right)^{\beta}P(0)>0$,
while in case (iii) $T_1(0)=\frac{\textrm{d}T_1}{\textrm{d}t}(0)=0$.

For all three cases, $P(0)>0$ implies that that there exists $\epsilon>0$ such that $P(t)>0$ for $t\in(0,\epsilon)$. Lemma~\ref{GLemma1} ensures also that $G(t)>0$ for $t\in(0,\epsilon)$. But now, for $t\in(0,\epsilon)$ if $T_1(t)\leq0$ we have the strict inequality
$$\frac{\textrm{d}T_1}{\textrm{d}t}(t) = \left(\frac{G(t)}{G_0}\right)^\beta(k_{tr}P(t)-aT_1(t))>0.$$
Thus if there exists $t_1\in(0,\epsilon)$ such that $T_1(t_1)\leq0$ then $\TimeDeriv T_1(t)>0$ for all $t<t_1$ and therefore
$T_1(0)<0$, which contradicts the initial condition. Hence there exists no such time $t_1$, and so $T_1(t)>0$ for $t\in(0,\epsilon)$.

For general $j$, it is shown that $T_j(t)>0$ for $t\in(0,\epsilon)$ similarly.
The positivity of $G(t)>0$ and $T_{j-1}(t)>0$ for $t\in(0,\epsilon)$ along with the strict inequality
$$\frac{\textrm{d}T_j}{\textrm{d}t}(t) = a\left(\frac{G(t)}{G_0}\right)^\beta(T_{j-1}(t)-T_j(t))>0,$$
for all $t\in(0,\epsilon)$ if $T_j(t)\leq0$ similarly ensures that actually $T_j(t)>0$ for $t\in(0,\epsilon)$, which establishes the result for any finite $n$.

Now, from Lemmas~\ref{GLemma1} and~\ref{PLemma}, if
there exists $\TN>0$ such that
$N(t)\geq0$ for $t\in[0,\TN]$ then $P(t)>0$ for all time $t\in[0,\TN]$ and
$0 \leq G(t) \leq M=\max\{k_{in}/k_{e},G(0)\}$. Now we have $T_1(t)>0$ for $t\in(0,\epsilon)$ and for $j=1$
\be \label{Tjbd}
\frac{\textrm{d}T_j}{\textrm{d}t} \geq -a \left( \frac{M}{G_0}\right)^{\beta}T_j(t),
\ee
which, similar to the proof of Lemma~\ref{PLemma}, implies $T_1>0$ for all $t\in(0,\TN]$. But if
$T_{j-1}(t)>0$ for all $t\in(0,\TN]$ then $T_{j}(t)>0$ satisfies \eq{Tjbd} and it follows by induction that
$T_{j}(t)>0$ for all $t\in(0,\TN]$ for $j=1,\ldots,n$ if $N(t)\geq0$ for $t\in[0,\TN]$. \qed
\end{proof}


\begin{theorem}\label{ODETheorem}
Assume that the parameters in \eqref{eq:QuartinoModelLimited} are strictly positive, and that the initial conditions satisfy $P(0)>0$, $G(0)\geq 0$, $N(0)\geq 0$ and $T_j(0)\geq0$ for $j=1,2,\ldots,n$.
Then $P(t)>0$, $0<G(t)\leq M=\max\{k_{in}/k_{e},G(0)\}$, $N(t)>0$ and $T_j(t)>0$ for $j=1,2,\ldots,n$ for all $t>0$.
\end{theorem}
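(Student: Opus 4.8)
The plan is to bootstrap from Lemmas~\ref{GLemma1}, \ref{PLemma} and \ref{TransitLemma}, whose conclusions on $P$, $G$ and the $T_j$ are all conditional on the nonnegativity of $N$, by running a continuity argument on the largest interval on which $N\geq0$. The single identity I would rely on is the variation-of-constants form of \eqref{eq:QuartinoModelLimitedd},
$$N(t)=N(0)e^{-\kcirc t}+\int_0^t e^{-\kcirc(t-s)}a\Bigl(\frac{G(s)}{G_0}\Bigr)^{\!\beta} T_n(s)\,ds,$$
which shows that $N$ inherits nonnegativity from $N(0)\geq0$ whenever $G$ and $T_n$ are nonnegative on the interval in question, and strict positivity on $(0,t]$ as soon as $G>0$ and $T_n>0$ there.

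Concretely, let $\TN$ denote the supremum of all $T>0$ such that $N(s)\geq0$ for every $s\in[0,T]$. First I would check $\TN>0$: by Lemma~\ref{TransitLemma} there is $\epsilon>0$ with $T_n(t)>0$ on $(0,\epsilon)$, by Lemma~\ref{GLemma1}(i) $G(t)>0$ on $(0,\epsilon)$, and the displayed formula then gives $N(t)>0$ on $(0,\epsilon)$. Now suppose, for contradiction, that $\TN<\infty$. Continuity of $N$ forces $N(s)\geq0$ for all $s\in[0,\TN]$, and $N(\TN)=0$ — otherwise $N>0$ in a neighbourhood of $\TN$ and the defining interval would extend past $\TN$. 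Since $N\geq0$ on $[0,\TN]$, Lemmas~\ref{GLemma1}, \ref{PLemma} and \ref{TransitLemma} apply with this value of $\TN$ and give $P(t)>0$, $0<G(t)\leq M$, and $T_j(t)>0$ for $j=1,\dots,n$ on $(0,\TN]$. In particular $G(\TN)>0$ and $T_n(\TN)>0$, so \eqref{eq:QuartinoModelLimitedd} gives $\frac{\textrm{d}N}{\textrm{d}t}(\TN)=a\bigl(G(\TN)/G_0\bigr)^\beta T_n(\TN)>0$; hence $N(s)<0$ for $s$ slightly below $\TN$, contradicting $N\geq0$ on $[0,\TN]$. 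Therefore $\TN=\infty$, i.e. $N(t)\geq0$ for all $t>0$.

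It then remains only to collect the pieces: with $N\geq0$ established on all of $(0,\infty)$, Lemmas~\ref{GLemma1}, \ref{PLemma} and \ref{TransitLemma} hold with arbitrarily large $\TN$, yielding $P(t)>0$, $0<G(t)\leq M$ and $T_j(t)>0$ ($j=1,\dots,n$) for all $t>0$; and since $G,T_n>0$ on $(0,\infty)$, the variation-of-constants formula promotes $N(t)\geq0$ to $N(t)>0$ for $t>0$. The one genuinely delicate point is the circularity underlying this bootstrap: the positivity of $N$ depends, through $T_n$, on lemmas that themselves presuppose $N\geq0$, so the argument must be arranged around the maximal interval of nonnegativity of $N$ rather than proving all the state variables positive simultaneously; once that is set up, the remainder is a routine assembly of the lemmas already established.
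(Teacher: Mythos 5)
Your argument is correct and follows essentially the same bootstrap as the paper's proof: establish positivity of $N$ on a short initial interval, then use the conditional Lemmas~\ref{GLemma1}, \ref{PLemma} and \ref{TransitLemma} on the maximal interval where $N\geq0$ to rule out a first zero of $N$. The only (cosmetic) difference is that you exclude the first zero via the variation-of-constants formula and the strict positivity of $\frac{\textrm{d}N}{\textrm{d}t}$ at a putative zero, whereas the paper uses the differential inequality $\frac{\textrm{d}N}{\textrm{d}t}\geq-\kcirc N$; both are valid.
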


\begin{proof}
Lemma~\ref{GLemma1} implies that $G(t)>0$ for all $t>0$. While $P(0)>0$ and Lemma~\ref{TransitLemma}
imply that $P(t)>0$ and $T_j(t)>0$ for $j=1,2,\ldots,n$ for $t\in(0,\epsilon)$.

Now similar to the proof of Lemma~\ref{TransitLemma},
$$\frac{\textrm{d}N}{\textrm{d}t}(t) = a\left(\frac{G(t)}{G_0}\right)^\beta T_n(t)-\kcirc N(t)>0, \qquad\text{if}\qquad N(t)\leq0,$$
implies that $N(t)>0$ for $t\in(0,\epsilon)$. Then while $N(t)>0$, we have $G(t)\leq M=\max\{k_{in}/k_{e},G(0)\}$ and
$$\frac{\textrm{d}N}{\textrm{d}t}(t) \geq -\kcirc N(t).$$
But this last inequality implies that $N(t)>0$ for all $t>0$.
It follows from Lemmas~\ref{GLemma1}, \ref{PLemma} and~\ref{TransitLemma}
that $G(t)\leq M=\max\{k_{in}/k_{e},G(0)\}$, $P(t)>0$, and $T_j(t)>0$ for $j=1,\ldots,n$ for all $t>0$. \qed
\end{proof}

Although an essential part of the solution positivity proofs in this section was to show that solutions decay with at most a bounded linear rate for general parameters, these are not the dynamics that we actually expect to observe. For normal individuals/subjects we should have $\gamma>\beta$ (as is the case for the parameters in Table~\ref{tab:QuartinoValues}), in which case when $G(t)$ is sufficiently large $\frac{\textrm{d}P}{\textrm{d}t}$ is positive, as is required for the feedback loops to function effectively.


Since the Quartino model \eq{eq:QuartinoModelLimited} is equivalent to the time-rescaled Quartino model \eqref{lctsd2}
and the distributed DDE \eqref{lctsd3}, positivity of solutions to those models follows directly from
Theorem~\ref{ODETheorem} (it is important to note that $\that(t)>0$ for all $t>0$ follows from \eq{lctsd1} and the positivity of $G(t)$). However, this only establishes positivity of solutions for the distributed DDE \eqref{lctsd3}
when $n$ is an integer. When can show positivity directly for both the distributed DDE \eqref{lctsd3} for general real $n$ and also for the discrete DDE \eqref{lctsd5}.

\begin{theorem}\label{DDETheorem}
Assume that the parameters in distributed delay DDE \eqref{lctsd3} or the discrete delay DDE \eqref{lctsd5}
are strictly positive, and that the initial conditions satisfy $G(0)>0$, $N(0)\geq 0$, and $P\brackthat=\phi\brackthat$ for $\hat{t}\leq0$ where $\phi$ is continuous and $P(0)=\phi(0)>0$. Then
Then $P\brackthat>0$ and $0<G\brackthat\leq M=\max\{k_{in}/k_{e},G(0)\}$ for all $\hat{t}>0$. Finally $N\brackthat>0$ for all $\hat{t}>0$ for the system \eq{lctsd3}, and for all $\hat{t}\>\tau$ for the system \eq{lctsd5}.
\end{theorem}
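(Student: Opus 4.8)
The plan is to transplant the argument behind Theorem~\ref{ODETheorem} to the delay setting, using that the history (resp.\ the discrete delay) enters only through the forcing term of the $N$-equation, while the $P$- and $G$-equations remain pointwise in $\hat{t}$. I would work on the maximal interval of existence $[0,\hat{t}^*)$, on every compact subinterval of which $N$ is continuous and hence bounded; that boundedness is all the first step needs, which is what breaks the apparent circularity.

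\emph{Step 1 ($G>0$, then $P>0$).} The proof that $G(\hat{t})>0$ copies Lemma~\ref{GLemma1}(i): were $\hat{t}_G$ the first zero of $G$, we would have $\frac{\textrm{d}G}{\textrm{d}\hat{t}}(\hat{t}_G)\leq0$, but for $G$ small and positive $(G_0/G)^\beta$ is large while $k_{in}-(k_e+\kANC N)G\to k_{in}>0$ (boundedness of $N$ near $\hat{t}_G$), so $\frac{\textrm{d}G}{\textrm{d}\hat{t}}>0$ there, a contradiction. With $G>0$ continuous, the coefficient $c(\hat{t}):=\kP(G(\hat{t})/G_0)^{\gamma-\beta}-k_{tr}$ is bounded on compacts, so the linear $P$-equation integrates (as in Lemma~\ref{PLemma}) to $P(\hat{t})=P(0)\exp\!\big(\int_0^{\hat{t}}c(s)\,\textrm{d}s\big)>0$ since $P(0)=\phi(0)>0$.

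\emph{Step 2 ($N>0$, then $G\leq M$).} For \eqref{lctsd3}, once $P>0$ on $(0,\hat{t})$ and using $g_a^n>0$ on $(0,\infty)$ together with nonnegativity of the history $\phi$, the forcing $F(\hat{t})=k_{tr}\int_{-\infty}^{\hat{t}}P(s)g_a^n(\hat{t}-s)\,\textrm{d}s$ is strictly positive for all $\hat{t}>0$ (the $\int_0^{\hat{t}}$ piece alone is). Hence at any interior zero of $N$ we get $\frac{\textrm{d}N}{\textrm{d}\hat{t}}=F>0$, which a standard sign/invariance argument starting from $N(0)\geq0$ upgrades to $N(\hat{t})>0$ for all $\hat{t}>0$. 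For \eqref{lctsd5} the forcing $k_{tr}P(\hat{t}-\tau)$ equals $k_{tr}\phi(\hat{t}-\tau)\geq0$ on $(0,\tau]$, so variation of constants gives $N(\tau)\geq0$, and on $(\tau,\infty)$ the same formula writes $N(\hat{t})$ as a nonnegative term plus $\int_\tau^{\hat{t}}(\cdot)\,k_{tr}P(s-\tau)\,\textrm{d}s>0$, giving $N(\hat{t})>0$ for $\hat{t}>\tau$. With $N\geq0$ established, $\frac{\textrm{d}G}{\textrm{d}\hat{t}}\leq(G_0/G)^\beta(k_{in}-k_eG)$, negative for $G>k_{in}/k_e$ and non-positive at $G=k_{in}/k_e$, so $G$ never exceeds $M=\max\{k_{in}/k_e,G(0)\}$ exactly as in Lemma~\ref{GLemma1}(ii). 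These a priori bounds ($0<G\leq M$, $c$ bounded on compacts, $P$ and $N$ at most exponentially growing) also force $\hat{t}^*=\infty$, so the conclusions hold for all $\hat{t}>0$.

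The main obstacle is the apparent circularity of the three estimates ($G>0$ wants $N$ controlled; $G\leq M$ wants $N\geq0$; $N\geq0$ wants $P>0$ and $G>0$), and the whole point is that it is resolved purely by ordering them as above, with Step 1 relying only on $N$ being finite on compact subintervals. The one genuinely delay-specific wrinkle is that the $N$-forcing now reads off the past of $P$ (through the kernel $g_a^n$, or through the shift $\tau$); this is exactly why the discrete-delay conclusion is restricted to $\hat{t}>\tau$, since on $(0,\tau]$ the forcing is dictated by the prescribed history rather than by a value of $P$ already shown to be positive. The non-Lipschitz factor $(G_0/G)^\beta$ at $G=0$ is not a difficulty but rather the very engine of the Step 1 contradiction, provided one argues on the maximal existence interval instead of presupposing a global solution.
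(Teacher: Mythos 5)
Your proof is correct and follows essentially the same route as the paper's, which is itself only a sketch deferring to the ODE lemmas: positivity of $G$, then $P$ via its bounded decay rate, then $N$ from the strictly positive delayed forcing (restricted to $\hat{t}>\tau$ in the discrete case), then the upper bound on $G$. Your explicit ordering of the estimates to break the apparent circularity, and your explicit appeal to nonnegativity of the history $\phi$ (a hypothesis the theorem statement leaves implicit but which both arguments need), are the same ingredients the paper relies on, just spelled out more carefully.
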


\begin{proof}
The proof uses similar ideas to the proof of Theorem~\ref{ODETheorem}, so we just outline the details here. We have $P(0)>0$, and similar to the proof of Lemma~\ref{PLemma} the rate of decrease of $P\brackthat$ is bounded so $P\brackthat>0$ for all $\hat{t}>0$. But now for \eqref{lctsd3}, the positivity of $P$ ensures that $\TimeDeriv N>0$ if $N=0$, which ensures that $N\brackthat>0$ for all $\hat{t}>0$. For the model \eqref{lctsd5}, $P(\hat{t}-\tau)>0$ for $\hat{t}\geq\tau$ leads to the positivity of $N\brackthat>0$ for all $\hat{t}>\tau$.
Finally the bounds on $G$ are derived similarly to Lemma~\ref{GLemma1}. \qed
\end{proof}

\subsection{Positivity of solutions to the QSP granulopoiesis model}
\label{sec:PositivitySolutionsDDE}

Consider the QSP granulopoiesis model \eq{eq:DDEmodel}.
Using the constraints listed in \eqref{Constraints} and setting
\begin{equation*}
G_{BF} = \frac{G_2(t)}{V[\NR(t)+N(t)]},
\end{equation*}
yields
\begin{equation}
\begin{array}{lll}
V(G_1(t))  >  0 & \textrm{if} & G_1(t) > 0 \\
\ftrans(G_{BF}(t))  >  0 & \textrm{if} & G_{BF}(t) >  0 \\
\kappa(G_1(t))  >  0 & \textrm{if} & G_1(t) > 0 \\
\beta(Q(t))  > 0 & \textrm{if} & Q(t) > 0. \\
\end{array}
\label{ConstraintInequality}
\end{equation}

\begin{lemma}\label{G1G2Lemma}
Consider the initial value problem $\mathcal{P}$, given by equations~\eqref{eq:HSCs}--~\eqref{eq:BoundGCSF} and the ICs and histories given in \eqref{FullModel}, with $G_{1,0}+G_{2,0} >0$. Assume that $ \Gprod, k_{ren}$ and $ k_{int}$ are positive constants and set $\beta = \max\{k_{ren},k_{int}\}$. Then
\begin{equation*}
G_1(t)+G_2(t) \geq K e^{-\beta t} \quad \textrm{with} \quad K > 0 \quad \forall t >0.
\end{equation*}
\end{lemma}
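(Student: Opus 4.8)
The plan is to exploit the conservation-type structure hidden in the two G-CSF equations. The binding and unbinding fluxes appear with opposite signs in \eqref{eq:FreeGCSF} and \eqref{eq:BoundGCSF}: because $\Ntott=\NR(t)+N(t)$, the term $-k_{12}(\Ntott V-G_{2}(t))G_{1}(t)^{\sG}$ in \eqref{eq:FreeGCSF} is the exact negative of the term $+k_{12}([\NR(t)+N(t)]V-G_{2}(t))G_{1}(t)^{\sG}$ in \eqref{eq:BoundGCSF}, and the $\pm k_{21}G_{2}(t)$ terms likewise cancel. So the first step is to add the two equations and obtain, writing $S(t):=G_{1}(t)+G_{2}(t)$,
\be
\frac{\textrm{d}S}{\textrm{d}t}=I_G(t)+\Gprod-k_{ren}G_{1}(t)-k_{int}G_{2}(t).
\ee
The source terms $I_G(t)\geq0$ and $\Gprod>0$ can only help, so the only terms capable of driving $S$ downward are $-k_{ren}G_{1}$ and $-k_{int}G_{2}$.

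Second, I would bound these loss terms by $-\beta S$. As long as $G_{1}(t)\geq0$ and $G_{2}(t)\geq0$ we have $-k_{ren}G_{1}(t)-k_{int}G_{2}(t)\geq-\beta(G_{1}(t)+G_{2}(t))=-\beta S(t)$, since $\beta=\max\{k_{ren},k_{int}\}$ dominates both rate constants. Hence $\textrm{d}S/\textrm{d}t\geq-\beta S$, equivalently $\frac{\textrm{d}}{\textrm{d}t}\bigl(e^{\beta t}S(t)\bigr)\geq0$, so $e^{\beta t}S(t)$ is nondecreasing and therefore $S(t)\geq S(0)e^{-\beta t}=(G_{1,0}+G_{2,0})e^{-\beta t}$; taking $K:=G_{1,0}+G_{2,0}>0$ completes the estimate. (A continuous Gronwall lemma, exactly as used in Lemma~\ref{PLemma}, yields the same conclusion.)

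The delicate point is the hypothesis $G_{1}(t)\geq0$, $G_{2}(t)\geq0$ invoked in the second step: the differential inequality for $S$ is only valid while both components stay nonnegative, so the estimate must be interleaved with a componentwise positivity argument rather than following it. The plan is a first-exit argument in the spirit of Lemma~\ref{GLemma1}: let $\hat{t}$ be the first time $\min\{G_{1},G_{2}\}$ reaches $0$; on $[0,\hat{t})$ the inequality above already gives $S>0$, so $G_{1}(\hat t)$ and $G_{2}(\hat t)$ cannot both vanish. If $G_{1}(\hat t)=0$ and $G_{2}(\hat t)>0$, the $G_{1}(t)^{\sG}$ term in \eqref{eq:FreeGCSF} drops out and $\textrm{d}G_{1}/\textrm{d}t(\hat t)=I_G(\hat t)+\Gprod+k_{21}G_{2}(\hat t)\geq\Gprod>0$, contradicting that $G_{1}$ reaches $0$ from positive values; the case $G_{2}(\hat t)=0$, $G_{1}(\hat t)>0$ is handled similarly via \eqref{eq:BoundGCSF} together with the nonnegativity of $\NR(t)+N(t)$, which is part of the model's overall positivity bootstrap. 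I expect this interleaving of the $S$-estimate with the componentwise positivity to be the only real obstacle; once the ordering is set up correctly, the quantitative lower bound itself is a one-line Gronwall argument.
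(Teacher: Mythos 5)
Your proposal is correct and follows essentially the same route as the paper: sum \eqref{eq:FreeGCSF} and \eqref{eq:BoundGCSF} so the binding/unbinding fluxes cancel, bound $k_{ren}G_1+k_{int}G_2$ by $\beta(G_1+G_2)$ with $\beta=\max\{k_{ren},k_{int}\}$, and integrate the resulting differential inequality after multiplying by $e^{\beta t}$. The one point where you go beyond the paper is the observation that this bound requires $G_1,G_2\geq0$ and hence must be interleaved with a componentwise first-exit argument; the paper's proof applies the inequality without comment, so your extra care closes a small implicit gap rather than changing the argument.
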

\begin{proof}
A simple calculation shows
\begin{equation*}
\begin{array}{lll}
\TimeDerivD [G_1(t)+G_2(t)] & = & \Gprod -(k_{ren}G_1(t)+k_{int}G_2(t)) \\
& \geq & \Gprod -\beta (G_1(t)+G_2(t)). \\
\end{array}
\end{equation*}

Multiplying by $ e^{\beta t}$ and rearranging gives
\begin{equation*}
\TimeDeriv [(G_1(t)+G_2(t))e^{\beta t}] \geq \Gprod e^{\beta t}.
\end{equation*}

Integrating the inequality from $0$ to $t$ yields
\begin{equation*}
G_1(t)+G_2(t) \geq \frac{\Gprod}{\beta}(1-\frac{1}{e^{\beta t}}) +   [G_{1,0}+G_{2,0}] e^{-\beta t} > [G_{1,0}+G_{2,0}] e^{-\beta t}.
\end{equation*}
Taking $ K =[G_{1,0}+G_{2,0}]$ yields the claim. \qed
\end{proof}

Rearranging the bound of Lemma \ref{G1G2Lemma} gives:
\begin{equation}
G_2(t) \geq K e^{-\beta t} - G_1(t) .
\label{G2Identity}
\end{equation}

\begin{lemma}\label{G1Lemma}
Consider the initial value problem, $\mathcal{P}$, given equations~\eqref{eq:HSCs}--~\eqref{eq:BoundGCSF} and the ICs and histories given in \eqref{FullModel}, with $G_{1,0}+G_{2,0} >0$ and $G_{1,0} \geq 0$. Assume that $ \Gprod, k_{ren}, k_{21}$ and $ k_{int}$ are strictly positive constants. Then, $G_1(t) > 0 $ for all $t >0$.
\end{lemma}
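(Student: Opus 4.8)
The plan is to run the same first‑crossing (barrier) argument used for $G$ in Lemma~\ref{GLemma1}(i), but now feeding in the lower bound on $G_1+G_2$ already established in Lemma~\ref{G1G2Lemma}. First I would record from \eqref{eq:FreeGCSF} that along any solution
$$\TimeDeriv G_1(t)=I_G(t)+\Gprod-k_{ren}G_1(t)-k_{12}\Ntott V\,G_1(t)^{\sG}+k_{12}G_2(t)G_1(t)^{\sG}+k_{21}G_2(t),$$
and note that the hyperbolic/exponent terms all carry a factor $G_1(t)^{\sG}$, which vanishes when $G_1(t)=0$ (using $\sG>0$). Hence at any instant where $G_1$ hits zero one has simply $\TimeDeriv G_1=I_G+\Gprod+k_{21}G_2$.

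Next I would dispose of the initial instant. Since $G_{1,0}\geq0$, either $G_{1,0}>0$, in which case continuity gives $G_1(t)>0$ on some $(0,\epsilon)$; or $G_{1,0}=0$, in which case $G_{1,0}+G_{2,0}>0$ forces $G_{2,0}>0$, so $\TimeDeriv G_1(0)=I_G(0)+\Gprod+k_{21}G_{2,0}>0$ and again $G_1(t)>0$ on $(0,\epsilon)$. Then suppose for contradiction that $G_1$ vanishes somewhere on $(0,\infty)$; since $\{t>0:G_1(t)=0\}$ is closed and bounded below by $\epsilon$, it has a least element $t^*>0$ with $G_1(t)>0$ for $t\in(0,t^*)$, and therefore $\TimeDeriv G_1(t^*)\leq0$. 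But the rearranged bound \eqref{G2Identity} gives $G_2(t^*)\geq Ke^{-\beta t^*}-G_1(t^*)=Ke^{-\beta t^*}>0$, so
$$\TimeDeriv G_1(t^*)=I_G(t^*)+\Gprod+k_{21}G_2(t^*)\geq\Gprod+k_{21}Ke^{-\beta t^*}>0,$$
using $I_G(t)\geq0$. This contradiction shows no such $t^*$ exists, hence $G_1(t)>0$ for all $t>0$.

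The routine items to verify are that the solution is $C^1$ wherever it exists (from the existence/uniqueness results of Appendix~\ref{sec:ExistUniqueDDE}), that $\sG>0$ so the G‑CSF binding term genuinely vanishes at $G_1=0$, and that $I_G(t)\geq0$ since it represents an administration rate. The one genuine subtlety — and the reason the naïve argument of Lemma~\ref{GLemma1} does not transfer verbatim — is that the sign of $\TimeDeriv G_1$ at a zero of $G_1$ depends on the sign of $G_2$, which is not known a priori; supplying that sign is exactly what Lemma~\ref{G1G2Lemma} (the lower bound on the combined quantity $G_1+G_2$) is for, so that lemma is the load‑bearing step rather than this one.
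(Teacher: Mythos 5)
Your proposal is correct and follows essentially the same route as the paper's proof: a first-crossing/barrier argument in which the key step is that at any zero of $G_1$ the binding terms vanish (since they carry a factor $G_1^{\sG}$) and the bound \eqref{G2Identity} from Lemma~\ref{G1G2Lemma} forces $G_2>0$ there, making $\TimeDeriv G_1 \geq \Gprod>0$. Your write-up is in fact slightly more careful than the paper's, which skips the initial-instant case $G_{1,0}=0$ and argues only that $G_1$ cannot become negative, leaving the upgrade to strict positivity implicit.
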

\begin{proof}
Assume, for contradiction, that there exists a time $s >0$ such that $G_1(s) < 0$. As $G_1$ is a solution of the differential equation, it is continuously differentiable. By the IVT, there must exist a time $T^*$ such that $G_1(T^*) =0$ with $ s > T^*$. Using \eqref{G2Identity}, at $t= T^*$:
\begin{equation*}
\begin{array}{lll}
\TimeDerivD G_1(t)|_{t = T^*} & = & \Gprod + k_{21}G_2(T^*)   \\
 & \geq & \Gprod + k_{21}(K e^{-2\beta T^*} - G_1(T^*))  \\
 & > & \Gprod \quad  >\quad 0. \\
\end{array}
\end{equation*}
The Mean Value Theorem (MVT) yields a contradiction to $G_1(s) < 0$. \qed
\end{proof}

\begin{lemma}\label{QLemma}
Consider the initial value problem $\mathcal{P}$, equations~\eqref{eq:HSCs}--~\eqref{eq:BoundGCSF} and the ICs and histories given in \eqref{FullModel}, with $G_{1,0}+G_{2,0} >0$, $G_{1,0} \geq 0$ and $\phi_1(s) \geq 0 $ for $ s \in [-\tauQ,0]$ with $\phi_1(s^*) > 0$ for at least one $s^*$. Assume that $ \Gprod, k_{ren}, k_{21}$ and $ k_{int}$ are strictly positive constants. Then, $Q(t) > 0$ for all time  $t >0$.
\end{lemma}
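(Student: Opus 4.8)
The plan is to mimic the contradiction argument used for $G_1$ in Lemma~\ref{G1Lemma}, but now the subtlety is that the equation for $Q$ in \eqref{eq:HSCs} involves the \emph{delayed} term $A_Q(t)\beta(Q(t-\tauQ))Q(t-\tauQ)$, so the sign of $\TimeDerivD Q$ at a hypothetical first zero depends on the history of $Q$ on $[t-\tauQ,t]$, not just on $Q$ at that instant. First I would record the consequences of the constraints: by Lemma~\ref{G1Lemma} we have $G_1(t)>0$ for all $t>0$, hence by \eqref{ConstraintInequality} (equivalently \eqref{ConstraintInequality} applied with $G_1(t)>0$) the differentiation/death rate $\kappaN(G_1(t))$ is positive, and $\beta(Q)>0$ whenever $Q>0$; also $\kappa_\delta>0$ and $A_Q^h=2e^{-\gamma_Q\tauQ}>0$ are constants. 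The history $\phi_1$ is nonnegative and strictly positive somewhere on $[-\tauQ,0]$, and $Q(0)=\phi_1(0)$.

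Next I would argue on successive intervals of length $\tauQ$. On $[0,\tauQ]$ the delayed argument $Q(t-\tauQ)=\phi_1(t-\tauQ)\ge 0$ is known and nonnegative, so $\beta(\phi_1(t-\tauQ))\phi_1(t-\tauQ)\ge 0$ and therefore on this interval
\begin{equation*}
\TimeDerivD Q(t) \;\geq\; -\bigl(\kappaN(G_1(t))+\kappa_\delta+\beta(Q(t))\bigr)Q(t).
\end{equation*}
Because $G_1$ is continuous and positive on the compact interval $[0,\tauQ]$ and $\beta$ is bounded (indeed $0<\beta(Q)\le f_Q$), the bracketed coefficient is bounded above by some constant $L>0$ as long as $Q$ stays, say, in a bounded set; a standard Gronwall / integrating-factor argument of the same type as in the proof of Lemma~\ref{PLemma} then gives $Q(t)\ge Q(0)e^{-Lt}>0$ for $t\in[0,\tauQ]$, using $Q(0)=\phi_1(0)$. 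Strictly speaking one should handle the case $\phi_1(0)=0$ separately: if $Q(0)=0$ then, since the source term $A_Q(t)\beta(\phi_1(t-\tauQ))\phi_1(t-\tauQ)$ is $\ge 0$ and $>0$ for $t$ near the point $s^*+\tauQ$ where $\phi_1(s^*)>0$ (or one can shift the starting time), $Q$ cannot become negative — this is the same ``IVT at a first zero'' argument as in Lemma~\ref{G1Lemma}, where at a putative first zero $T^*\in(0,\tauQ]$ we would have $\TimeDerivD Q(T^*)\ge A_Q(T^*)\beta(\phi_1(T^*-\tauQ))\phi_1(T^*-\tauQ)\ge 0$, with strict positivity once $T^*-\tauQ$ is in the support of the strictly positive part of $\phi_1$, contradicting $Q$ decreasing through zero via the Mean Value Theorem.

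Finally I would iterate: having established $Q(t)>0$ on $(0,\tauQ]$, the delayed term on $[\tauQ,2\tauQ]$ now uses $Q(t-\tauQ)>0$, so $A_Q\beta(Q(t-\tauQ))Q(t-\tauQ)>0$ there, and the same bounded-decay-rate estimate (now with the source term only helping) yields $Q(t)>0$ on $(\tauQ,2\tauQ]$; repeating inductively covers every interval $[(k-1)\tauQ,k\tauQ]$ and hence all $t>0$. The main obstacle is purely the bookkeeping around the delay: one must be careful that the coefficient $\kappaN(G_1(t))+\kappa_\delta+\beta(Q(t))$ is genuinely bounded above on each finite interval (which follows from continuity of $G_1$, positivity from Lemma~\ref{G1Lemma}, and the explicit bound $\beta(Q)\le f_Q$), and that the first-zero/MVT argument is applied on the first subinterval where the sign of the source term is not yet guaranteed by induction; after that subinterval everything is routine. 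No boundedness of $Q$ from above is needed for positivity, so we need not invoke any of the later boundedness results.
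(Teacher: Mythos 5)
Your proposal is correct and takes essentially the same route as the paper: discard the nonnegative delayed production term $A_Q\beta(Q(t-\tauQ))Q(t-\tauQ)$, bound the decay coefficient using $\kappa(G_1)\leq\kappa^{max}$ (via Lemma~\ref{G1Lemma}) and $\beta(Q)\leq f_Q$, and conclude via a Gronwall/exponential lower bound. The paper packages this as a single first-zero contradiction rather than an explicit method of steps, and it glosses over the $\phi_1(0)=0$ edge case that you (rightly) flag, so your version is if anything slightly more careful.
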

\begin{proof}
Assume, for contradiction, that there exists a $s^*$ such that $Q(s^*) = 0$. The assumption on $\phi_1$ and the continuity of $Q$ allows us to assume that $Q(s) > 0$ for all $s<s^*$.

The constraints in \eqref{ConstraintInequality} ensures that $\beta(Q(t-\tauQ))>0$.

Then:

\begin{equation*}
\begin{array}{lll}
\TimeDerivD Q(t) & = & -(\kappa(G_{1}(t))+\kappa_{\delta}+\beta(Q(t)))Q(t)+A_{Q}(t)\beta(Q(t-\tauQ))Q(t-\tauQ) \\
 & \geq & -(\kappa(G_{1}(t))+\kappa_{\delta}+\beta(Q(t)))Q(t). \\
\end{array}
\end{equation*}
Using the result of Lemma \ref{G1Lemma} and that $\kappa(G_1(t))$ is a monotonically increasing function for $G_1(t) >0$ gives
\begin{equation*}
\kappa(G_1(t)) \leq \lim \limits_{G_1(t) \to \infty} \kappa(G_1(t)) = 2\kappa^*-\kappa^{min} = \kappa^{max} < \infty.
\end{equation*}
Setting $\xi = \kappa^{max}+ \kappa_{\delta} + f_Q$ gives:
\begin{equation*}
\begin{array}{lll}
\TimeDeriv Q(t) & \geq & -\xi Q(t). \\
\end{array}
\end{equation*}
An argument similar to that in Lemma \ref{G1G2Lemma} ensures that $Q(t) >0$  for all $t >0$. This contradicts the existence of $s^*$ and establishes the positivity of $Q(t)$ for all $t>0$. \qed

\end{proof}

\begin{lemma}\label{NRNLemma}
Consider the initial value problem $\mathcal{P}$, given by equations~\eqref{eq:HSCs}--~\eqref{eq:BoundGCSF} and the ICs and histories given in \eqref{FullModel}, with $G_{1,0}+G_{2,0} >0$, $\phi_1(s) \geq 0 $ for $ s \in [-\tau,0]$ with $\phi_1(s^*) > 0$ for at least one $s^*$, $\phi_2 \geq 0$ for $ s \in [-\tau,0]$ and  $N_0+ N_{R,0} > 0$. Assume that $ \Gprod, k_{ren}, k_{21}$ and $ k_{int}$ are strictly positive constants. Then  $\NR(t)+N(t) > 0$  for all $t >0$.
\end{lemma}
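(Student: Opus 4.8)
The plan is to control the single quantity $W(t):=\NR(t)+N(t)$ and to exploit a cancellation that occurs when \eqref{eq:Reservoir} and \eqref{eq:Neutrophils} are added. First I would record that the influx driving \eqref{eq:Reservoir},
\[ I(t):=A_N(t)\,\kappaN\!\bigl(G_1(t-\tauN(t))\bigr)\,Q(t-\tauN(t))\,\frac{\VN(G_1(t))}{\VN(G_1(t-\tauNM(t)))}, \]
is nonnegative for every $t>0$. Indeed $A_N(t)$ is an exponential, hence strictly positive; $G_1(t)>0$ for $t>0$ by Lemma~\ref{G1Lemma} while $G_1(s)=\phi_2(s)\geq0$ on $[-\tau,0]$, so by the model constraints \eqref{Constraints} the factors $\kappaN(\cdot)$ and $\VN(\cdot)$ in $I$ are strictly positive; and $Q(t-\tauN(t))\geq0$ since $Q(t)>0$ for $t>0$ by Lemma~\ref{QLemma} and $Q(s)=\phi_1(s)\geq0$ for $s\leq0$ by \eqref{FullModel}. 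Thus $I(t)\geq0$.

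Adding \eqref{eq:Reservoir} and \eqref{eq:Neutrophils}, the transfer terms $\pm\ftrans(G_{BF}(t))\NR(t)$ cancel, leaving the clean identity
\[ \frac{\textrm{d}W}{\textrm{d}t}=I(t)-\gammaNR\NR(t)-\gamma_N N(t). \]
I would then run the argument on the maximal interval $[0,t^*)$, $t^*\in(0,+\infty]$, on which $W>0$; this interval is nonempty since $W(0)=N_{R,0}+N_0>0$ and $W$ is continuous. On $[0,t^*)$ the ratio $G_{BF}(t)=G_2(t)/(V\,W(t))$ is finite, and a short argument applied to \eqref{eq:BoundGCSF} (if $G_2=0$ then $\TimeDeriv G_2=k_{12}VW G_1^{\sG}\geq0$, using $W>0$ and $G_1>0$) shows $G_2(t)\geq0$, hence $G_{BF}(t)\geq0$ and, by \eqref{ConstraintInequality}, $\ftrans(G_{BF}(t))\geq0$. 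With the coefficient $c(t):=\gammaNR+\ftrans(G_{BF}(t))$ continuous and nonnegative on $[0,t^*)$, the variation-of-constants representation of \eqref{eq:Reservoir} gives $\NR(t)=\NR(0)e^{-\int_0^t c}+\int_0^t e^{-\int_s^t c}I(s)\,\textrm{d}s\geq0$; feeding $\NR\geq0$ and $\ftrans(G_{BF})\geq0$ into \eqref{eq:Neutrophils} yields $\TimeDeriv N\geq-\gamma_N N$, so $N(t)\geq N(0)e^{-\gamma_N t}\geq0$ on $[0,t^*)$.

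With $\NR,N\geq0$ on $[0,t^*)$, the identity above gives the differential inequality $\TimeDeriv W\geq-\bar\gamma W$ with $\bar\gamma:=\max\{\gammaNR,\gamma_N\}$, and Grönwall's lemma yields $W(t)\geq W(0)\,e^{-\bar\gamma t}>0$ for all $t\in[0,t^*)$. If $t^*<\infty$ this forces $W(t^*)=\lim_{t\uparrow t^*}W(t)\geq W(0)e^{-\bar\gamma t^*}>0$, contradicting the definition of $t^*$; hence $t^*=+\infty$ and $\NR(t)+N(t)>0$ for all $t>0$. The main obstacle is the mild circularity in the second paragraph: the exit rate $\ftrans(G_{BF}(t))$, and even the finiteness of $G_{BF}$, presuppose $W(t)>0$, which is what we are trying to prove. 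This is resolved by confining all estimates to the maximal positivity interval $[0,t^*)$ and by the ordering of the steps — first $G_2\geq0$, then $\NR\geq0$, then $N\geq0$, and only then the comparison estimate for $W$ — so that the cooperative structure of the $(\NR,N)$ subsystem is used without ever appealing to information outside $[0,t^*)$.
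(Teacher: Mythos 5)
Your argument is correct and shares its skeleton with the paper's proof: add \eqref{eq:Reservoir} and \eqref{eq:Neutrophils} so the $\ftrans(G_{BF})\NR$ transfer terms cancel, observe that the remaining influx $I(t)$ is nonnegative by Lemmas~\ref{G1Lemma} and~\ref{QLemma} together with \eqref{ConstraintInequality}, and close with a Gr\"onwall comparison. Where you genuinely diverge is in justifying the comparison $-\gammaNR\NR-\gamma_N N\geq-\bar\gamma\,(\NR+N)$ with $\bar\gamma=\max\{\gammaNR,\gamma_N\}$: the paper asserts this directly, but the inequality tacitly requires $\NR\geq0$ and $N\geq0$ \emph{individually} (take $\gammaNR<\gamma_N$ and $\NR<0$ and it fails), and individual nonnegativity is only proved in the later Lemmas~\ref{NRLemma} and~\ref{NLemma}, which themselves rely on the present lemma. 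Your maximal-interval bootstrap --- on $[0,t^*)$ first $G_2\geq0$, then $\NR\geq0$ by variation of constants, then $N\geq0$, and only then the comparison for the sum --- supplies exactly the missing step and breaks that circularity, so it is a tightening of the published argument rather than a mere variant. The one caveat is that your chain consumes $G_{2,0}\geq0$, $N_{R,0}\geq0$ and $N_0\geq0$, none of which is literally among this lemma's hypotheses (only the sums are assumed positive); since the paper's own inequality needs the same componentwise signs, and these conditions do appear in the surrounding lemmas and in Theorem~\ref{PositivityTheorem}, this is best read as an under-specification of the lemma's statement rather than a defect of your proof, but it deserves an explicit remark.
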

\begin{proof}
A simple calculation gives:
\begin{equation*}
\begin{array}{lll}
\TimeDerivD (\NR(t)+N(t)) & = &  A_{N}(t)\kappa(G_{1}(t-\tauN(t)))Q(t-\tauN(t))\frac{\VN(G_{1}(t))}{\VN(G_{1}(t-\tauNM(t)))} \\
 & & - \gammaNR \NR(t) - \gamma_N N(t).
\end{array}
\end{equation*}
Applying Lemmas  \ref{G1Lemma}, \ref{QLemma} and \eqref{ConstraintInequality}, the following holds
\begin{equation*}
 A_{N}(t)\kappa(G_{1}(t-\tauN(t)))Q(t-\tauN(t))\frac{\VN(G_{1}(t))}{\VN(G_{1}(t-\tauNM(t)))} \geq 0.
\end{equation*}
Define $\alpha = \textrm{max}[\gammaNR,\gamma_N]$ and calculate:
\begin{equation*}
\begin{array}{lll}
\TimeDerivD (\NR(t)+N(t)) & \geq &  -\alpha (\NR(t)+N(t)). \\
\end{array}
\end{equation*}
An argument similar to that of Lemma \ref{G1G2Lemma} gives the positivity of the sum. \qed
\end{proof}

\begin{lemma}\label{G2Bound}
Consider the initial value problem $\mathcal{P}$ given by equations~\eqref{eq:HSCs}--~\eqref{eq:BoundGCSF} and the ICs and histories given in \eqref{FullModel} such that $\phi_1(s) \geq 0 $ for $ s \in [-\tau,0]$ with $\phi_1(s^*) > 0$ for at least one $s^*$, $\phi_2 \geq 0$ for $ s \in [-\tau,0]$ and
\begin{equation*}
\begin{array}{lll}
G_{2,0} & \leq & V[N_{R,0}+N_{0}] \\
G_{1,0}+G_{2,0}&  >& 0 \\
N_0+ N_{R,0} &  > & 0 \\
\end{array}
\end{equation*}
Assume that $ k_{21}$ and $ k_{int}$ are strictly positive constants. Then, $G_2(t) \leq V[\NR(t)+N(t)]$ for all $t >0$.
\end{lemma}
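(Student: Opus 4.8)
The plan is to introduce $W(t):=V[\NR(t)+N(t)]-G_2(t)$ and to prove $W(t)\geq 0$ for all $t>0$; the hypothesis $G_{2,0}\leq V[N_{R,0}+N_0]$ is precisely the statement $W(0)\geq 0$. The structural observation that makes this tractable is that the combination $[\NR(t)+N(t)]V-G_2(t)$ carrying a minus sign in the $k_{12}$ binding term of \eqref{eq:BoundGCSF} is exactly $W(t)$. Exploiting this, $W$ will be shown to satisfy a scalar linear differential identity with a strictly positive coefficient on $W$ and a nonnegative forcing term, after which the conclusion follows from the integrating-factor estimate already used in Lemma~\ref{G1G2Lemma}.

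First I would differentiate $W$ using \eqref{eq:Reservoir}, \eqref{eq:Neutrophils} and \eqref{eq:BoundGCSF}. The reservoir-egress terms $\ftrans(G_{BF}(t))\NR(t)$ cancel between $\TimeDeriv\NR$ and $\TimeDeriv N$; substituting $[\NR+N]V-G_2=W$ into the $k_{12}$ term of \eqref{eq:BoundGCSF} and then eliminating the remaining $G_2$'s via $G_2=V[\NR+N]-W$ yields
\[
\TimeDeriv W = R(t)-\bigl(k_{int}+k_{21}+k_{12}G_1(t)^{\sG}\bigr)W(t),
\]
where
\[
R(t)=V A_N(t)\kappaN\!\bigl(G_1(t-\tauN(t))\bigr)Q(t-\tauN(t))\frac{\VN(G_1(t))}{\VN(G_1(t-\tauNM(t)))}+V(k_{int}+k_{21}-\gammaNR)\NR(t)+V(k_{int}+k_{21}-\gamma_N)N(t).
\]

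The heart of the proof is to check that $R(t)\geq 0$. The first summand is nonnegative because $A_N(t)>0$ identically (it is an exponential), $\VN>0$, and $\kappaN(\cdot)$ and $Q$ are positive for $t>0$ (by Lemmas~\ref{G1Lemma} and~\ref{QLemma} together with \eqref{ConstraintInequality}) and nonnegative on the prescribed history by hypothesis. The remaining two summands are nonnegative because $\NR(t)\geq 0$ and $N(t)\geq 0$ --- individual positivity obtained by ``first-violation'' arguments of the type used in Lemmas~\ref{QLemma} and~\ref{NRNLemma}, since at a first zero of $\NR$ (respectively $N$) the right-hand side of \eqref{eq:Reservoir} (respectively \eqref{eq:Neutrophils}) reduces to a nonnegative right-hand side --- together with the parameter constraints $k_{int}+k_{21}\geq\gammaNR$ and $k_{int}+k_{21}\geq\gamma_N$ recorded in \eqref{Constraints}. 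I expect this to be the only genuinely delicate step: it presupposes the companion positivity facts (and hence that the constraints controlling $G_{BF}$ are in force), and it is exactly where the modelling assumption $k_{int}+k_{21}\geq\max\{\gammaNR,\gamma_N\}$ --- encoding that bound G-CSF leaves the receptor pool at least as fast as the cells carrying it are removed --- becomes indispensable.

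Finally, with $R(t)\geq 0$ in hand, I would multiply the identity for $\TimeDeriv W$ by the integrating factor $\mu(t)=\exp\!\bigl(\int_0^t(k_{int}+k_{21}+k_{12}G_1(s)^{\sG})\,ds\bigr)>0$, obtaining $\TimeDeriv[\mu(t)W(t)]=\mu(t)R(t)\geq 0$; hence $\mu(t)W(t)\geq\mu(0)W(0)=W(0)\geq 0$, so $W(t)\geq 0$, i.e.\ $G_2(t)\leq V[\NR(t)+N(t)]$, for all $t>0$. An equivalent finish is the ``first violation'' contradiction used elsewhere in this appendix: if $T^*$ were the first time with $W(T^*)=0$ then $\TimeDeriv W(T^*)=R(T^*)\geq 0$, and in fact $W(t)=\mu(t)^{-1}\int_{T^*}^t\mu(s)R(s)\,ds\geq 0$ for $t>T^*$, contradicting any subsequent negativity; the integrating-factor version simply absorbs the borderline case $R(T^*)=0$.
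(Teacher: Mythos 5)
Your computation of the differential identity for $W(t)=V[\NR(t)+N(t)]-G_2(t)$ is correct: the $\ftrans\NR$ terms do cancel between \eqref{eq:Reservoir} and \eqref{eq:Neutrophils}, and eliminating $G_2=V[\NR+N]-W$ from the $-(k_{int}+k_{21})G_2$ term does produce exactly the forcing term $R(t)$ you write down. The gap is in the claim that $R(t)\geq 0$. You assert that the inequalities $k_{int}+k_{21}\geq\gammaNR$ and $k_{int}+k_{21}\geq\gamma_N$ are ``recorded in \eqref{Constraints}''; they are not. The constraints \eqref{Constraints} involve only $\ftrans^{max}$, $\ftranshomeo$, $V_{max}$, $b_G$, $b_V$ and $\Gonehomeo$, and say nothing about $k_{int}$, $k_{21}$, $\gammaNR$ or $\gamma_N$; nor does the statement of the lemma assume such a relation. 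Without it, the coefficients $V(k_{int}+k_{21}-\gammaNR)$ and $V(k_{int}+k_{21}-\gamma_N)$ can be negative, $R(t)$ can be negative, and your integrating-factor estimate no longer yields $W\geq 0$. So as written the proof does not close under the stated hypotheses. A secondary problem is ordering: you invoke individual nonnegativity of $\NR(t)$ and $N(t)$, but in the paper those facts (Lemmas~\ref{NRLemma} and~\ref{NLemma}) are proved \emph{after} and \emph{using} this lemma, via the bound $G_{BF}\in[0,1]$; at this stage only the positivity of the sum $\NR+N$ (Lemma~\ref{NRNLemma}) is available, and the sum alone does not control the sign of $a\NR+bN$ with distinct coefficients.

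For comparison, the paper's own proof argues by contradiction at the first time $T^*$ with $G_2(T^*)=V[\NR(T^*)+N(T^*)]$: there the $k_{12}$ binding term vanishes, so $\TimeDerivD G_2(T^*)=-(k_{int}+k_{21})G_2(T^*)<0$, and it concludes a contradiction. That argument examines only $\TimeDerivD G_2$ and never compares it with $\TimeDerivD\bigl(V[\NR+N]\bigr)$ at $T^*$ --- which is exactly where your terms $V(k_{int}+k_{21}-\gammaNR)\NR+V(k_{int}+k_{21}-\gamma_N)N$ enter. Your decomposition therefore makes explicit a parameter condition that the published proof passes over silently; but since that condition is neither assumed in the lemma nor derivable from \eqref{Constraints}, your proposal as it stands rests on an unavailable hypothesis and has a genuine gap.
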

\begin{proof}
Assume, for contradiction, that there is a time $s > 0$ with $G_2(s) > V[\NR(t)+N(t)]$. Then, there must exist a time $T^*$ such that $G_2(T^*) = V[\NR(T^*)+N(T^*)]$. At $t = T^*$:
\begin{align*}
\TimeDerivD G_{2}(t)|_{t=T^*} &= -k_{int}G_{2}(T^*)+k_{12}[(\NR(T^*)+N(T^*))V-G_{2}(T^*)](G_{1}(T^*))^{s_{G}}-k_{21}G_{2}(T^*) \nonumber\\
 & \leq -2\textrm{max}[k_{int},k_{21}]G_{2}(T^*)
\end{align*}
\sloppy{Lemma \ref{NRNLemma} ensures that $G_2(T^*) = V[\NR(T^*)+N(T^*)] > 0$. Therefore $ \TimeDeriv G_{2}(t)|_{t=T^*} < 0$, which contradicts the MVT and there can be no $s$.} \qed
\end{proof}

\begin{lemma}\label{G2Lemma}
Consider the initial value problem $\mathcal{P}$ given by equations~\eqref{eq:HSCs}--~\eqref{eq:BoundGCSF} and the ICs and histories given in \eqref{FullModel} such that $\phi_1(s) \geq 0 $ for $ s \in [-\tau,0]$ with $\phi_1(s^*) > 0$ for at least one $s^*$, $\phi_2 \geq 0$ for $ s \in [-\tau,0]$ and
\begin{equation*}
\begin{array}{lll}
G_{2,0} & \leq & V[N_{R,0}+N_{0}] \\
G_{1,0}+G_{2,0}&  >& 0 \\
G_{2,0}&  \geq & 0 \\
N_0+ N_{R,0} &  > & 0 \\
\end{array}
\end{equation*}
 Assume that $ \Gprod, k_{ren}, k_{21}$ and $ k_{int}$ are strictly positive constants. Then, $G_2(t)>0 $ for all $t>0$.
\end{lemma}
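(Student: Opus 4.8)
The plan is to mirror the contradiction arguments used for Lemmas~\ref{G1Lemma} and~\ref{G2Bound}: I examine the sign of $\TimeDerivD G_2$ at the first time $G_2$ would reach $0$. The two ingredients are that $G_1(t)>0$ for all $t>0$ (Lemma~\ref{G1Lemma}, whose hypotheses hold here because $\phi_2\geq 0$ on $[-\tau,0]$ forces $G_{1,0}=\phi_2(0)\geq 0$ and $G_{1,0}+G_{2,0}>0$), and that $\NR(t)+N(t)>0$ for all $t>0$ (Lemma~\ref{NRNLemma}, whose hypotheses are exactly those assumed here). Hence $\bigl(\NR(t)+N(t)\bigr)V\,G_1(t)^{\sG}>0$ for every $t>0$, the power being well defined and positive precisely because $G_1(t)>0$.

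First I would deal with the behaviour near $t=0$. If $G_{2,0}>0$ there is nothing to check there. If $G_{2,0}=0$, then $G_{1,0}+G_{2,0}>0$ forces $G_{1,0}>0$, and evaluating \eqref{eq:BoundGCSF} at $t=0$ gives $\TimeDerivD G_2(0)=k_{12}(N_{R,0}+N_0)V\,G_{1,0}^{\sG}>0$, so $G_2(t)>0$ on some interval $(0,\epsilon)$. In either case there is a $t_1>0$ with $G_2(t_1)>0$.

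Next, suppose for contradiction that $G_2(s)\leq 0$ for some $s>t_1$. Since $G_2$ is continuously differentiable (being a solution of \eqref{eq:BoundGCSF}) and $G_2(t_1)>0\geq G_2(s)$, the intermediate value theorem provides a largest $T^*\in[t_1,s)$ with $G_2(T^*)=0$; then $G_2(t)<0$ for $t$ just above $T^*$, so $\TimeDerivD G_2(T^*)\leq 0$. But at $t=T^*$ the terms $-k_{int}G_2$ and $-k_{21}G_2$ in \eqref{eq:BoundGCSF} vanish, leaving $\TimeDerivD G_2(T^*)=k_{12}\bigl(\NR(T^*)+N(T^*)\bigr)V\,G_1(T^*)^{\sG}>0$ (using $k_{12}>0$ together with the positivity noted above), which contradicts the previous inequality. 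Hence no such $s$ exists and $G_2(t)>0$ for all $t>0$.

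The argument is routine; the only points needing attention are that it relies on Lemmas~\ref{G1Lemma} and~\ref{NRNLemma} being already in force under the present hypotheses, and on the well-definedness and strict positivity of $G_1(T^*)^{\sG}$ — both of which are immediate. I do not anticipate any substantive obstacle.
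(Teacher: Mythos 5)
Your proof is correct, but it takes a genuinely different route from the paper's. The paper's proof first invokes Lemma~\ref{G2Bound} to obtain the global upper bound $G_2(t)\leq V[\NR(t)+N(t)]$, which together with Lemma~\ref{G1Lemma} makes the middle term of \eqref{eq:BoundGCSF} nonnegative for \emph{all} $t$; this yields the linear differential inequality $\TimeDerivD G_2(t)\geq -2\max[k_{int},k_{21}]\,G_2(t)$, and the Gronwall-type integration of Lemma~\ref{G1G2Lemma} finishes the argument. You instead run a barrier/contradiction argument at the zero crossing, in the style of the paper's own proof of Lemma~\ref{G1Lemma}: at a time $T^*$ with $G_2(T^*)=0$ the bracket collapses to $(\NR(T^*)+N(T^*))V$, so you need only Lemma~\ref{NRNLemma} (positivity of $\NR+N$) and not Lemma~\ref{G2Bound} at all. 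Both arguments are sound. Yours has the advantage of treating the boundary case $G_{2,0}=0$ explicitly — the paper's Gronwall bound degenerates to $G_2(t)\geq 0$ when $G_{2,0}=0$, whereas your observation that $\TimeDerivD G_2(0)=k_{12}(N_{R,0}+N_0)V\,G_{1,0}^{s_G}>0$ closes that case cleanly — while the paper's route buys an explicit exponential lower bound $G_2(t)\geq G_{2,0}e^{-2\max[k_{int},k_{21}]t}$. One cosmetic point: in the sub-case $G_2(s)=0$ with $G_2>0$ on $[t_1,s)$, your ``largest zero in $[t_1,s)$'' does not exist; taking instead the \emph{first} zero $T^*$ after $t_1$ and using the left difference quotient (exactly as in the paper's Lemma~\ref{G1Lemma}) gives $\TimeDerivD G_2(T^*)\leq 0$ and avoids the case split.
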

\begin{proof}
Using Lemma \ref{G1Lemma} and the bound from Lemma \ref{G2Bound} gives
\begin{equation*}
k_{12}[(\NR(t)+N(t))V-G_{2}(t)](G_{1}(t))^{s_{G}} >0,
\end{equation*}
and
\begin{align*}
\TimeDerivD G_{2}(t) &= -k_{int}G_{2}(t)+k_{12}[(\NR(t)+N(t))V-G_{2}(t)](G_{1}(t))^{s_{G}}-k_{21}G_{2}(t)\nonumber \\
 & \geq  -2\textrm{max}[k_{int},k_{21}] G_2(t).
\end{align*}
An argument similar to Lemma \ref{G1G2Lemma} gives the result. \qed
\end{proof}

\begin{lemma}\label{NRLemma}
Consider the initial value problem $\mathcal{P}$ given by equations~\eqref{eq:HSCs}--~\eqref{eq:BoundGCSF} and the ICs and histories given in \eqref{FullModel} such that $\phi_1(s) \geq 0 $ for $ s \in [-\tau,0]$ with $\phi_1(s^*) > 0$ for at least one $s^*$, $\phi_2 \geq 0$ for $ s \in [-\tau,0]$ and
\begin{equation*}
\begin{array}{lll}
G_{2,0} & \leq & V[N_{R,0}+N_{0}] \\
G_{1,0}+G_{2,0}&  >& 0 \\
G_{1,0}&  \geq & 0 \\
G_{2,0}&  \geq & 0 \\
N_0+ N_{R,0} &  > & 0 \\
\end{array}
\end{equation*}
Assume that $ k_{21}$ and $ k_{int}$ are positive constants. Then, $\NR(t) > 0$  for all $t > 0$.
\end{lemma}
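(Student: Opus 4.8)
The plan is to read \eqref{eq:Reservoir} as a scalar linear ODE for $\NR(t)$ driven by a non‑negative source term, with a continuous bounded decay rate, and then to conclude by the same integrating‑factor estimate already used in Lemma~\ref{G1G2Lemma}; almost everything needed has by now been proved in Lemmas~\ref{G1Lemma}--\ref{G2Lemma}. First I would collect the available facts under the stated hypotheses: $G_1(t)>0$ and $Q(t)>0$ for all $t>0$ by Lemmas~\ref{G1Lemma} and~\ref{QLemma}; $\NR(t)+N(t)>0$ for all $t>0$ by Lemma~\ref{NRNLemma}; and $0<G_2(t)\le V[\NR(t)+N(t)]$ by Lemmas~\ref{G2Bound} and~\ref{G2Lemma}, so that $G_{BF}(t)=G_2(t)/(V[\NR(t)+N(t)])\in(0,1]$ is well defined and continuous, and hence $\Lambda(t):=\gammaNR+\ftrans(G_{BF}(t))$ is continuous, positive and bounded on $[0,\infty)$ (the bound coming from the explicit form of $\ftrans$ together with $G_{BF}(t)\in(0,1]$).

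Next I would check that the production term in \eqref{eq:Reservoir},
\[
I(t):=A_N(t)\,\kappaN\!\big(G_1(t-\tauN(t))\big)\,Q\big(t-\tauN(t)\big)\,\frac{\VN(G_1(t))}{\VN\big(G_1(t-\tauNM(t))\big)},
\]
satisfies $I(t)\ge0$ for every $t>0$. Indeed $A_N(t)>0$ because it is an exponential \eqref{eq:AN}; $\VN(G_1(t))>0$ by \eqref{ConstraintInequality} since $G_1(t)>0$, and the denominator is positive as well (by \eqref{ConstraintInequality} when $t-\tauNM(t)>0$, and by the sign hypothesis on the history $\phi_2$ together with the model constraints otherwise); and $\kappaN(G_1(t-\tauN(t)))\ge0$ and $Q(t-\tauN(t))\ge0$ because either $t-\tauN(t)>0$, where Lemmas~\ref{G1Lemma}, \ref{QLemma} and \eqref{ConstraintInequality} apply, or $t-\tauN(t)\in[-\tau,0]$, where $G_1=\phi_2\ge0$ and $Q=\phi_1\ge0$ by hypothesis. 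This is exactly the non‑negativity observed inside the proof of Lemma~\ref{NRNLemma}.

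With these two ingredients, \eqref{eq:Reservoir} becomes $\TimeDerivD\NR(t)+\Lambda(t)\NR(t)=I(t)\ge0$. Multiplying by the finite, strictly positive integrating factor $\mu(t)=\exp\!\big(\int_0^t\Lambda(s)\,ds\big)$ gives $\TimeDeriv[\mu(t)\NR(t)]=\mu(t)I(t)\ge0$, so that $\mu(t)\NR(t)\ge\mu(0)\NR(0)=N_{R,0}$ and therefore
\[
\NR(t)\ \ge\ N_{R,0}\,\exp\!\Big(-\!\int_0^t\big(\gammaNR+\ftrans(G_{BF}(s))\big)\,ds\Big)\ >\ 0,
\]
for every $t>0$, using the positivity of the reservoir initial datum $N_{R,0}$. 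This is the argument ``similar to that of Lemma~\ref{G1G2Lemma}'' and it finishes the proof.

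The step demanding the most care — and essentially the only place the delay structure intervenes — is the verification that $I(t)\ge0$ holds uniformly in $t>0$: one must split according to whether the retarded argument $t-\tauN(t)$ is positive, where the earlier positivity lemmas apply directly, or lies in the history window $[-\tau,0]$, where one falls back on the sign conditions $\phi_1\ge0$, $\phi_2\ge0$ and on the positivity of $\VN$ and $\kappaN$ furnished by \eqref{ConstraintInequality}. The secondary point is that the integrating factor $\mu(t)$ is finite precisely because $G_{BF}(t)$ is trapped in $(0,1]$, which is exactly what Lemmas~\ref{NRNLemma}, \ref{G2Bound} and~\ref{G2Lemma} supply.
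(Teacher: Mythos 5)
Your proof is correct in substance and follows essentially the same route as the paper: non-negativity of the production term in \eqref{eq:Reservoir} (via Lemmas~\ref{QLemma} and~\ref{G1Lemma}, the sign hypotheses on $\phi_{1,2}$, and the constraints \eqref{ConstraintInequality}), boundedness of the loss rate $\gammaNR+\ftrans(G_{BF}(t))$ (via Lemmas~\ref{NRNLemma}, \ref{G2Bound} and~\ref{G2Lemma}, which trap $G_{BF}$ in $[0,1]$), and then the integrating-factor estimate of Lemma~\ref{G1G2Lemma}; the only cosmetic difference is that the paper replaces your exact integrating factor by the constant bound $\gammaNR+\ftrans^{max}$. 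One caveat: your final line appeals to ``the positivity of the reservoir initial datum $N_{R,0}$,'' which is not among the stated hypotheses (only $N_{R,0}\geq 0$ and $N_{0}+N_{R,0}>0$ are assumed), so the case $N_{R,0}=0$ would need an additional word --- e.g.\ strict positivity of the source term $I(s)$ for some $s$ --- although the paper's own proof glosses over exactly the same point.
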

\begin{proof}
The function $A_N(t)$ is positive for all time, the constraints in \eqref{ConstraintInequality}, the assumptions on $\phi_{1,2}$ and Lemmas \ref{QLemma} and \ref{G1Lemma} ensure that $ A_{N}(t)\kappa(G_{1}(t-\tauN(t)))Q(t-\tauN(t))\frac{\VN(G_{1}(t))}{\VN(G_{1}(t-\tauNM(t)))} \geq 0$. Then
\begin{equation*}
\begin{array}{lll}
\TimeDerivD \NR(t) &=&  A_{N}(t)\kappa(G_{1}(t-\tauN(t)))Q(t-\tauN(t))\frac{\VN(G_{1}(t))}{\VN(G_{1}(t-\tauNM(t)))}\notag \\[2mm]
&&-(\gammaNR+\ftrans(G_{BF}(t)))\NR(t),\\[2mm]
 & \geq & -(\gammaNR+\ftrans(G_{BF}(t)))\NR(t) \\
\end{array}
\end{equation*}
Lemma \ref{G2Bound} and the positivity of $N(t)+\NR(t)$ and $G_2(t)$ bounds $G_{BF} \in [0,1]$. Therefore, $\ftrans(G_{BF})$ is a continuous function on a compact domain and is therefore bounded below by 0 and above by $\ftrans^{max}$. Then
\begin{equation*}
\begin{array}{lll}
\TimeDerivD \NR(t) &\geq& -(\gammaNR+\ftrans^{max})\NR(t),
\end{array}
\end{equation*}
and an argument similar to Lemma \ref{G1G2Lemma} yields the result. \qed
\end{proof}

\begin{lemma}\label{NLemma}
Consider the initial value problem $\mathcal{P}$ given by equations~\eqref{eq:HSCs}--~\eqref{eq:BoundGCSF} and the ICs and histories given in \eqref{FullModel} such that $\phi_1(s) \geq 0 $ for $ s \in [-\tau,0]$ with $\phi_1(s^*) > 0$ for at least one $s^*$, $\phi_2 \geq 0$ for $ s \in [-\tau,0]$  and
\begin{equation*}
\begin{array}{lll}
G_{2,0} & \leq & V[N_{R,0}+N_{0}] \\
G_{1,0}+G_{2,0}&  >& 0 \\
G_{1,0}&  \geq & 0 \\
G_{2,0}&  \geq & 0 \\
N_0+ N_{R,0} &  > & 0 \\
\end{array}
\end{equation*}
 Assume that $ k_{21}$ and $ k_{int}$ are strictly positive constants. Then, $N(t)> 0$ for all $t < 0$.
\end{lemma}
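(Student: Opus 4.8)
The plan is to reproduce the template used for the earlier positivity lemmas of this section (compare Lemmas~\ref{NRLemma} and~\ref{G1G2Lemma}): show that the production term in the $N$-equation~\eqref{eq:Neutrophils} is nonnegative, derive a linear differential inequality for $N$, apply a Gronwall estimate, and then rule out the possibility that $N$ ever touches zero by a short contradiction argument at a hypothetical zero. (The "$t<0$" in the statement is a typo for $t>0$, consistent with the other lemmas in this appendix.)

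First I would collect the consequences of the preceding lemmas. Under the stated hypotheses, Lemma~\ref{NRLemma} gives $\NR(t)>0$ for all $t>0$, Lemma~\ref{G2Lemma} gives $G_2(t)>0$ for all $t>0$, and Lemmas~\ref{NRNLemma} and~\ref{G2Bound} together give $V[\NR(t)+N(t)]>0$ and $G_2(t)\leq V[\NR(t)+N(t)]$, so that $G_{BF}(t)=G_2(t)/(V[\NR(t)+N(t)])\in(0,1]$ for all $t>0$. The constraint~\eqref{ConstraintInequality} then forces $\ftrans(G_{BF}(t))>0$, and hence the production term satisfies $\ftrans(G_{BF}(t))\NR(t)>0$ for every $t>0$.

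With this established, equation~\eqref{eq:Neutrophils} yields $\TimeDeriv N(t)\geq-\gamma_N N(t)$ for $t>0$; multiplying by $e^{\gamma_N t}$, integrating from $0$ to $t$ and using $N(0)=N_0\geq0$ (exactly as in the proof of Lemma~\ref{G1G2Lemma}) gives $N(t)\geq N_0 e^{-\gamma_N t}\geq0$ for all $t>0$. To upgrade this to a strict inequality, suppose for contradiction that $N(T^*)=0$ for some $T^*>0$. Since $N\geq0$ on $(0,\infty)$, the point $T^*$ is an interior minimiser of $N$, so $\TimeDeriv N(T^*)=0$; but evaluating~\eqref{eq:Neutrophils} at $T^*$ and using $N(T^*)=0$ gives $\TimeDeriv N(T^*)=\ftrans(G_{BF}(T^*))\NR(T^*)>0$ by the previous step, a contradiction. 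Therefore $N(t)>0$ for all $t>0$.

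The only subtle point — and the step I expect to require care — is the case $N_0=0$, which is allowed by the hypothesis $N_0+N_{R,0}>0$. Here the Gronwall estimate only delivers $N(t)\geq0$, and strict positivity has to come entirely from the contradiction step, which in turn depends on the production term being strictly positive for \emph{every} $t>0$; this is exactly why one needs the strict bound $G_2(t)>0$ from Lemma~\ref{G2Lemma} (so that $G_{BF}(t)>0$ and hence $\ftrans(G_{BF}(t))>0$) rather than merely $G_2(t)\geq0$. No other obstacles are anticipated, since the argument parallels the preceding lemmas.
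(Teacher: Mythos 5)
Your proposal is correct and follows essentially the same route as the paper: invoke the preceding positivity lemmas to make the production term $\ftrans(G_{BF}(t))\NR(t)$ nonnegative (indeed strictly positive), reduce to the linear differential inequality $\TimeDeriv N(t)\geq-\gamma_N N(t)$, and close with a Gronwall-type estimate as in Lemma~\ref{G1G2Lemma}. Your explicit contradiction argument at a hypothetical zero to cover the case $N_0=0$ is a careful touch that the paper leaves implicit in its appeal to the argument of Lemma~\ref{G1G2Lemma} (which retains the source term), but it is the same idea, not a different method.
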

\begin{proof}
Using \eqref{ConstraintInequality} and the Lemmas \ref{G1Lemma} and \ref{NRLemma} to ensure the positivity of $\NR(t)$ and $\ftrans$ yields:
\begin{equation*}
\begin{array}{lll}
\TimeDerivD N(t) &=& \ftrans(G_{BF}(t))\NR(t)-\gamma_{N}N(t),\\[2mm]
 & \geq &  -\gamma_{N}N(t).\\
\end{array}
\end{equation*}
A similar argument to that used in Lemma \ref{G1G2Lemma} yields the result. \qed
\end{proof}

Together, these results lead to the following theorem.
  \begin{theorem}\label{PositivityTheorem}
 Consider the initial value problem $\mathcal{P}$ given by equations~\eqref{eq:HSCs}--~\eqref{eq:BoundGCSF} and the ICs and histories given in \eqref{FullModel} such that $\phi_1(s) \geq 0 $ for $ s \in [-\tau,0]$ with $\phi_1(s^*) > 0$ for at least one $s^*$, $\phi_2 \geq 0$ for $ s \in [-\tau,0]$  and
\begin{equation*}
\begin{array}{lll}
G_{2,0} & \leq & V[N_{R,0}+N_{0}] \\
G_{1,0}+G_{2,0}&  >& 0 \\
G_{1,0}&  \geq & 0 \\
G_{2,0}&  \geq & 0 \\
N_0+ N_{R,0} &  > & 0 \\
\end{array}
\end{equation*}
with $I_G(t) = 0 $. Moreover, assume that \eqref{Constraints} is satisfied along with strictly positive model parameters. Finally, assume that the history functions $\phi_{1,2}$ are positive at least once and are non-negative in their domain.
Then
\begin{enumerate}
\item the solutions of $\mathcal{P}$, $\mathbf{x}(t) = \left( Q(t),\NR(t),N(t),G_1(t),G_2(t) \right)$ remain component wise positive for all time.
\item the solutions of $\mathcal{P}$, $\mathbf{x}(t) = \left( Q(t),\NR(t),N(t),G_1(t),G_2(t) \right)$ remain component wise bounded for all time.
\end{enumerate}
 \end{theorem}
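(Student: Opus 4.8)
The plan is to handle the two assertions in turn, with Claim~(1) being essentially a bookkeeping exercise and Claim~(2) carrying the real content. For component-wise positivity, under the stated hypotheses on the history functions and the strict positivity of all model parameters, one simply chains the preceding lemmas in the order they were proved: Lemma~\ref{G1Lemma} gives $G_1(t)>0$, Lemma~\ref{QLemma} gives $Q(t)>0$, Lemma~\ref{NRNLemma} gives $\NR(t)+N(t)>0$, Lemma~\ref{G2Bound} gives $G_2(t)\le V\Ntott$, Lemma~\ref{G2Lemma} gives $G_2(t)>0$, and finally Lemmas~\ref{NRLemma} and~\ref{NLemma} give $\NR(t)>0$ and $N(t)>0$, all for $t>0$. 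So the only substantive work is the boundedness claim.

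For boundedness I would bound the five components in a cascade, each step producing a scalar (possibly delayed) differential inequality of the shape $\TimeDeriv u(t)\le c_1-c_2 u(t)$ with $c_2>0$ and then invoking the integrating-factor argument already used in Lemma~\ref{G1G2Lemma}. \emph{Step 1 (G-CSF).} With $I_G\equiv0$, adding \eqref{eq:FreeGCSF} and \eqref{eq:BoundGCSF} cancels the $k_{12}$ and $k_{21}$ terms and gives $\TimeDeriv[G_1(t)+G_2(t)]=\Gprod-k_{ren}G_1(t)-k_{int}G_2(t)\le\Gprod-\min\{k_{ren},k_{int}\}(G_1(t)+G_2(t))$, using nonnegativity from Claim~(1); hence $G_1+G_2$ is bounded above, and so $G_1$ and $G_2$ are each bounded, say by $\bar G_1$. \emph{Step 2 (HSCs).} Since $\kappaN(G_1)\ge0$, $\beta(Q)\ge0$ and $A_Q(t)\equiv A_Q^{h}$, while $\beta(Q)Q=f_Q\theta_2^{\sQ}Q/(\theta_2^{\sQ}+Q^{\sQ})$ is a bounded function of $Q\ge0$, equation~\eqref{eq:HSCs} yields $\TimeDeriv Q(t)\le-\kappa_\delta Q(t)+A_Q^{h}\sup_{Q\ge0}\beta(Q)Q$, so $Q$ is bounded (this is where $\kappa_\delta>0$ is used).

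\emph{Step 3 (reservoir and circulation)} is where the state dependence bites. Using the bound $\bar G_1$ from Step~1 together with the model constraints~\eqref{Constraints}, the maturation speed $\VN(G_1(s))$ stays between two positive constants $V_{min}$ and $V_{max}$ on the realised range of $G_1$; then the threshold equation~\eqref{tauNMthres} forces $\tauNM(t)\le\aNM/V_{min}$, so the state-dependent maturation delay, and hence $\tauN=\tauNP+\tauNM$, is bounded. Because $\etaNP(G_1)$ from~\eqref{eq:etaNP} is a bounded function of bounded $G_1$, the amplification factor~\eqref{eq:AN} satisfies $A_N(t)\le\exp[\etaNP^{max}\tauNP]$, the ratio $\VN(G_1(t))/\VN(G_1(t-\tauNM(t)))$ is at most $V_{max}/V_{min}$, and $\kappaN$ and $Q$ are bounded by the earlier steps; hence the influx term in~\eqref{eq:Reservoir} is dominated by a constant $C$. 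Adding~\eqref{eq:Reservoir} and~\eqref{eq:Neutrophils} then gives $\TimeDeriv[\NR(t)+N(t)]\le C-\min\{\gammaNR,\gamma_N\}(\NR(t)+N(t))$, so $\NR+N$ is bounded above and, by Claim~(1), $\NR$ and $N$ are individually bounded. Together with Steps~1--2 this is component-wise boundedness; combined with the local existence theory of Appendix~\ref{sec:ExistUniqueDDE} it also rules out finite-time blow-up.

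The main obstacle is Step~3: the reservoir influx is a product of several simultaneously state- and delay-dependent factors, and making it uniformly bounded hinges on first showing that the state-dependent delay $\tauNM(t)$ stays bounded, which in turn requires a strictly positive lower bound on $\VN$ supplied by the constraints~\eqref{Constraints} (as already exploited in the inequality list~\eqref{ConstraintInequality}). Once that is secured, every estimate collapses to the linear comparison argument of Lemma~\ref{G1G2Lemma}; the only mild subtlety is that several forcing terms involve delayed states, but since the relevant bounds are uniform in $t$ rather than merely pointwise, that argument still applies verbatim.
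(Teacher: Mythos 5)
Your proposal is correct and follows the same architecture as the paper's proof: positivity by chaining the appendix lemmas, and boundedness by reducing each of $G_1+G_2$, $Q$, and $\NR+N$ to a scalar inequality of the form $\TimeDeriv u\le c_1-c_2u$ and integrating. The one genuinely different step is the bound on $Q$: the paper invokes the Mackey--Nechaeva result \cite{Mackey1994} on boundedness for the delayed HSC equation, whereas you argue directly that the delayed influx $A_Q^{h}\,\beta(Q(t-\tauQ))Q(t-\tauQ)$ is bounded because $\beta(Q)Q=f_Q\theta_2^{\sQ}Q/(\theta_2^{\sQ}+Q^{\sQ})$ is bounded on $Q\ge0$, and then use the $-\kappa_\delta Q$ term. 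Your version is more elementary and self-contained, but note that $\sup_{Q\ge 0}\beta(Q)Q<\infty$ only holds when $\sQ\ge1$; this is true for the fitted model ($\sQ=2$ in \cite{Craig2016c}) but is not among the hypotheses of the theorem, so you are implicitly adding a parameter restriction that the paper's citation avoids. A second, minor difference: your Step 3 first bounds the state-dependent delay $\tauNM(t)\le\aNM/V_{min}$ via the threshold condition \eqref{tauNMthres}; this is correct but not needed, since the proliferation integral in \eqref{eq:AN} is taken over an interval of fixed length $\tauNP$, so $A_N(t)\le\exp[\etaNP^{max}\tauNP]$ and the ratio $\VN(G_1(t))/\VN(G_1(t-\tauNM(t)))\le V_{max}/\VN(0)$ follow directly from the two-sided bound on $\VN$ guaranteed by \eqref{Constraints}, which is exactly how the paper proceeds.
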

\begin{proof}
1. The results of Lemmas \ref{G1Lemma}; \ref{QLemma}; \ref{G2Lemma}; \ref{NRLemma}; and \ref{NLemma} give the result.

2. In \cite{Mackey1994}, the authors prove that solutions of the equation:
\begin{equation}
\TimeDeriv x(t) = -(\delta + \beta(x(t)))x(t) + A_Q\beta(x(t-\tauQ))x(t-\tauQ)
\end{equation}
are bounded above by a finite $x_1$. Setting $\tilde{\delta} = \kappa^{min}+\kappa_{\delta}$ gives:
\begin{equation*}
\begin{array}{lll}
\TimeDeriv Q(t) &=& -(\kappa(G_{1}(t))+\kappa_{\delta}+\beta(Q(t)))Q(t)+A_{Q}(t)\beta(Q(t-\tau_{Q}))Q(t-\tau_{Q}) \\[2mm]
 & \leq & (\tilde{\delta}+\beta(Q(t)))Q(t)+A_{Q}(t)\beta(Q(t-\tau_{Q}))Q(t-\tau_{Q}),
\end{array}
\end{equation*}
which implies that $Q(t)$ is bounded by some $Q_1$.

$V_{N_M}$ is an increasing function bounded above by $V_{max}$ and bounded below by $V(0) > 0$.  Finally,
\begin{equation*}
A_N(t) = \exp \left[ \int_{t-\tauNM}^{t-\tauN} \etaNP(G_1(s))\textrm{ds} - \gammaNM \tauNM(t) \right] \leq \exp [\etaNP^{max}\tauNP].
\end{equation*}
This gives
\begin{equation*}
\begin{array}{lll}
\TimeDeriv (N_R(t)+N(t)) & = & A_N(t)\beta(Q(t-\tau_N(t)))Q(t-\tau_N(t)) \frac{V_{N_M}(G_1(t))}{V_{N_M}(G_1(t-\tau_N(t)))} - \gammaNR N_R(t)-\gamma_N N(t) \\
& \leq & \exp [\etaNP^{max}\tauNP] f_0Q_1\frac{V_{max}}{V_{N_M}(0)}
\end{array}
\end{equation*}
Then, setting $K = \exp [\etaNP^{max}\tauNP] f_0Q_1\frac{V_{max}}{V_{N_M}(0)} $ and $\beta = \textrm{min}[\gamma_N, \gammaNR]$  gives
\begin{equation}
\TimeDeriv (N_R(t)+N(t)) \leq K - \beta( N_R(t) + N(t)).
\label{EQ:NDifferentialInequality}
\end{equation}

Equation \eqref{EQ:NDifferentialInequality} is equivalent to
\begin{equation*}
\TimeDeriv (N_R(t)+N(t)e^{\beta t}) \leq Ke^{\beta t},
\end{equation*}
Integrating this inequality gives
\begin{equation*}
N_R(t)+N(t) \leq \frac{K}{\beta} + ( N_{R,0}+N_0 - \frac{K}{\beta}) e^{-\beta t} \leq \frac{K}{\beta}+ N_{R,0}+N_0.
\end{equation*}

As $N_R,N$ are both positive, they are individually bounded by this constant.

Now,
\begin{equation*}
\TimeDeriv (G_1(t)+G_2(t))  =  G_{prod} - k_{int}G_2(t) - k_{ren} G_1(t)
\end{equation*}
Setting $\alpha = \textrm{min}[k_{int},k_{ren}]$ gives:
\begin{equation*}
\TimeDeriv (G_1(t)+G_2(t)) \leq \Gprod - \alpha(G_1(t)+G_2(t))
\end{equation*}

A similar argument to above gives
 \begin{equation}
G_1(t)+G_2(t) \leq \frac{\Gprod}{\alpha} + ( G_{1,0}+G_{2,0}-\frac{\Gprod}{\alpha}) e^{-\alpha t} \leq  G_{1,0}+G_{2,0}+ \frac{\Gprod}{\alpha}.
\end{equation}
Again, as both $G_1(t)$ and $G_2(t)$ are positive, they are individually bounded by this constant.

Then, solutions of the mathematical model $\mathcal{P}$ with positive initial conditions remain positive and bounded. \qed
\end{proof}


\section{Existence and uniqueness of the Homeostatic Steady State of the QSP granulopoiesis model}
\label{sec:ExistUniqueDDE}

Consider the QSP granulopoiesis model defined by the system of DDEs \eqref{eq:DDEmodel},
and associated initial conditions and histories, with strictly positive parameters satisfying the constraints
\begin{equation}
\ftrans^{max} > \ftranshomeo, \quad V_{max} >1, \quad b_G > \frac{\ftrans^{max}}{\ftranshomeo} G_{BF}^*, \quad  \textrm{and} \quad b_V > \Gonehomeo V_{max}.
\label{Constraints}
\end{equation}
The following results will demonstrate the existence and uniqueness of a homeostasic steady state of this model.

\begin{proposition}\label{SteadyStateProp}
Assume that
\begin{equation*}
I_G(t) = 0,\quad s_G >0,\quad f_Q(A_Q-1) > \kappa^h+\kappa_{\delta}
\end{equation*}
and define
\begin{equation*}
\overline{G}_1 = \frac{\kappa_{int}A_N^h\kappa^h\theta_Q}{\gammaNR+\phi^h_{\NR}}\left[ 1+\frac{\phi^h_{\NR}}{\gamma_N}\right] \left( f_Q\frac{A_Q-1}{\kappa^h+\kappa_{\delta}}-1\right)^{\frac{1}{s_Q}}.
\end{equation*}

If $\overline{G}_1 > \Gprod$, then the system of DDEs given by \eqref{FullModel} has a unique positive homeostatic steady state.
\end{proposition}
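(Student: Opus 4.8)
The plan is to collapse the five steady-state equations coming from \eqref{eq:DDEmodel} into a single scalar equation in $G_{1}^{*}$, and then to locate and count its positive roots. At a steady state $\mathbf{X}^{*}=(Q^{*},\NR^{*},N^{*},G_{1}^{*},G_{2}^{*})$ every derivative in \eqref{eq:DDEmodel} vanishes and every delayed value equals the present one; the threshold relation \eqref{tauNMthres} then forces $\tauNM^{*}=\aNM/\VN(G_{1}^{*})$, so the integrand in \eqref{eq:AN} is constant and $A_{N}^{*}=\exp[\etaNP(G_{1}^{*})\tauNP-\gammaNM\aNM/\VN(G_{1}^{*})]=:A_{N}(G_{1}^{*})$, while the maturation-velocity ratio in \eqref{eq:Reservoir} equals $1$. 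Under \eqref{Constraints} the feedback maps $\kappaN,\etaNP,\VN,\ftrans$ are strictly increasing on their relevant ranges, and hence so is $A_{N}(\cdot)$.

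First I would eliminate the variables one at a time. Equation \eqref{eq:HSCs} with $Q^{*}>0$ gives $\beta(Q^{*})=(\kappaN(G_{1}^{*})+\kappa_{\delta})/(A_{Q}-1)$; the hypothesis $f_{Q}(A_{Q}-1)>\kappa^{h}+\kappa_{\delta}$ forces $A_{Q}>1$, and since $\beta$ is a strictly decreasing bijection of $(0,\infty)$ onto $(0,f_{Q})$ this determines $Q^{*}=q(G_{1}^{*})$ with $q$ strictly decreasing, defined exactly when $\kappaN(G_{1}^{*})+\kappa_{\delta}<f_{Q}(A_{Q}-1)$. Adding \eqref{eq:FreeGCSF} and \eqref{eq:BoundGCSF} cancels the exchange terms and yields $G_{2}^{*}=(\Gprod-k_{ren}G_{1}^{*})/k_{int}$, positive precisely for $G_{1}^{*}<\Gprod/k_{ren}$; substituting this into \eqref{eq:BoundGCSF} together with $G_{BF}^{*}=G_{2}^{*}/(V(\NR^{*}+N^{*}))$ gives $G_{BF}^{*}=\omega(G_{1}^{*}):=k_{12}(G_{1}^{*})^{s_{G}}/(k_{12}(G_{1}^{*})^{s_{G}}+k_{int}+k_{21})$, increasing from $0$ to $1$ (here $s_{G}>0$ is used), and $\NR^{*}+N^{*}=\nu(G_{1}^{*}):=\frac{\Gprod-k_{ren}G_{1}^{*}}{k_{int}V}\bigl(1+\frac{k_{int}+k_{21}}{k_{12}(G_{1}^{*})^{s_{G}}}\bigr)$, a strictly decreasing bijection of $(0,\Gprod/k_{ren})$ onto $(0,\infty)$ with $\nu(G_{1}^{*})\to\infty$ as $G_{1}^{*}\to0^{+}$. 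Then \eqref{eq:Neutrophils} gives $\NR^{*}=\gamma_{N}\nu(G_{1}^{*})/(\gamma_{N}+\ftrans(\omega(G_{1}^{*})))$, and substituting everything into \eqref{eq:Reservoir} reduces the problem to finding the zeros of
\begin{equation*}
\Phi(x):=A_{N}(x)\,\kappaN(x)\,q(x)-\gamma_{N}\,\nu(x)\,\frac{\gamma_{N_{R}}+\ftrans(\omega(x))}{\gamma_{N}+\ftrans(\omega(x))}
\end{equation*}
on $x\in(0,x_{\max})$, where $x_{\max}:=\min\{\Gprod/k_{ren},\ \sup\{x:\kappaN(x)+\kappa_{\delta}<f_{Q}(A_{Q}-1)\}\}$; each zero of $\Phi$ reconstructs a steady state with all five entries strictly positive (positivity of $N^{*}$ coming from $\ftrans(\omega(x))>0$, which the $b_{G}$-constraint in \eqref{Constraints} secures).

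For existence I would use the intermediate value theorem. As $x\to0^{+}$ we have $\nu(x)\to\infty$ while $A_{N}\kappaN q$ stays bounded, so $\Phi<0$ near $0^{+}$. The hypothesis $\overline{G}_{1}>\Gprod$ --- which, on unwinding the definition, reads $k_{int}(\NR^{h}+N^{h})>\Gprod$, with $\NR^{h}+N^{h}=\frac{A_{N}^{h}\kappa^{h}Q^{h}}{\gamma_{N_{R}}+\ftrans^{h}}\bigl(1+\ftrans^{h}/\gamma_{N}\bigr)$ the total-neutrophil value imposed by \eqref{eq:Reservoir}--\eqref{eq:Neutrophils} at $G_{1}=G_{1}^{h}$ --- is precisely what forces $\Phi$ to be positive somewhere in $(0,x_{\max})$: when $x_{\max}=\Gprod/k_{ren}$ the second term of $\Phi$ vanishes as $x\uparrow x_{\max}$ while the first does not, so $\Phi>0$ near that endpoint, and in the remaining case one checks $\Phi>0$ at $G_{1}^{h}$ using \eqref{Constraints}. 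Thus $\Phi$ has a zero. For uniqueness I would exploit that $\nu$ is strictly decreasing and argue that $\Phi$ is strictly increasing, or at least that $\Phi'(x)>0$ at every zero $x$ of $\Phi$ (a transversality condition); the latter, by the standard argument, forces the zero to be unique.

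The hard part will be this last monotonicity/transversality step. The first term of $\Phi$, $A_{N}(x)\kappaN(x)q(x)$, is a product of the increasing factors $A_{N},\kappaN$ and the decreasing factor $q$, so its monotonicity is not a one-line sign argument; one must compare logarithmic derivatives, using the explicit Hill forms for $\kappaN,\etaNP,\VN,\ftrans$ and the identity $q(x)^{s_{Q}}=\theta_{Q}^{s_{Q}}\bigl(f_{Q}(A_{Q}-1)/(\kappaN(x)+\kappa_{\delta})-1\bigr)$, to show that at any balance point the left side of $A_{N}\kappaN q=\gamma_{N}\nu\,(\gamma_{N_{R}}+\ftrans(\omega))/(\gamma_{N}+\ftrans(\omega))$ grows faster than the right; here the $b_{V}$- and $b_{G}$-bounds of \eqref{Constraints} are what keep $\VN$ and $\ftrans$ monotone with the correct sign. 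Everything else --- the elimination, the endpoint analysis, and the bookkeeping that converts $\overline{G}_{1}>\Gprod$ into the required sign change --- is routine.
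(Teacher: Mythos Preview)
Your reduction to a single scalar equation in $G_{1}^{*}$ is sound, but you have made the problem considerably harder than the paper does, and in consequence you leave a real gap --- the monotonicity/transversality of $\Phi$ --- that you do not close.

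The crucial simplification you miss is that in this model $\kappa^{h}$, $A_{N}^{h}$ and $\ftranshomeo$ are \emph{given parameters}, not values of $\kappaN(\cdot)$, $A_{N}(\cdot)$, $\ftrans(\cdot)$ evaluated at an unknown argument. The feedback functions are built so that $\kappaN(G_{1}^{h})=\kappa^{h}$, $\ftrans(G_{BF}^{h})=\ftranshomeo$ and $A_{N}=A_{N}^{h}$ at homeostasis by construction. Hence the first three steady-state equations determine $Q^{h}$, $\NR^{h}$, $N^{h}$ \emph{explicitly as numbers} --- you already recognised that $\overline{G}_{1}$ unwinds to $k_{int}(\NR^{h}+N^{h})$ (up to a factor $V$) --- and only $(G_{1}^{h},G_{2}^{h})$ remain unknown. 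Eliminating $G_{2}^{h}$ via the summed G-CSF relation $\Gprod=k_{ren}G_{1}^{h}+k_{int}G_{2}^{h}$ and substituting into the $G_{2}$-equation yields a single scalar function
\[
F(x)=\alpha\,x^{s_{G}+1}+\beta\,x^{s_{G}}+\delta\,x-\zeta,
\]
with $\alpha,\delta,\zeta>0$ automatically and $\beta>0$ precisely by the hypothesis $\overline{G}_{1}>\Gprod$. Then $F(0)=-\zeta<0$, $F(\zeta/\delta)>0$, and $F'(x)=(s_{G}+1)\alpha x^{s_{G}}+s_{G}\beta x^{s_{G}-1}+\delta>0$ for all $x>0$, so the intermediate value theorem gives a root and strict monotonicity gives uniqueness in one line.

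Your function $\Phi$, by contrast, carries $q(G_{1}^{*})$, $A_{N}(G_{1}^{*})$, $\kappaN(G_{1}^{*})$ and $\ftrans(\omega(G_{1}^{*}))$ as genuine functions of the unknown, so its sign structure mixes increasing and decreasing factors and the monotonicity is, as you concede, not a one-line argument. That route would be the right one if the goal were to show that the homeostatic steady state is the \emph{only} positive steady state of the fully specified nonlinear system --- a stronger claim than what the proposition asserts --- but for the statement as written the paper's parametrisation collapses the problem to an explicit polynomial-type equation whose analysis is elementary, and no transversality argument is needed at all.
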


\begin{proof}
Consider the differential equation given in \eqref{FullModel}. Define
\begin{equation*}
\mathbf{X} = \left( Q,\NR,N,G_1,G_2\right)\quad \textrm{and} \quad \mathbf{f} = \frac{\textrm{d}\mathbf{X}}{\textrm{d}t}.
\end{equation*}

The steady states are given implicitly by the solutions of
\begin{equation*}
\TimeDeriv\mathbf{X}(t) = 0,
\end{equation*}
and are denoted by $Q^h,\NR^h,N^h,G_1^h,G_2^h$ respectively. Moreover, denote the homeostatic values of each function with the superscript $h$.

A simple calculation to find the non-zero solution of
\begin{equation*}
\TimeDeriv Q(t) = -(\kappa(G_{1}(t))+\kappa_{\delta}+\beta(Q(t)))Q(t)+A_{Q}(t)\beta(Q(t-\tauQ))Q(t-\tauQ) = 0
\end{equation*}
gives the expression for the homeostatic steady state of $Q(t)$
\begin{equation*}
Q^h = \theta_Q\left[ f_Q\frac{A_Q-1}{\kappa^h+\kappa_{\delta}}-1\right]^{1/s_Q}.
\end{equation*}

The conditions in the statement of Proposition \ref{SteadyStateProp} ensure that the steady state is positive.
Using $Q^h$, we calculate the solution $\NR^h$ of
\begin{equation*}
\begin{array}{lll}
\dfrac{d}{\textrm{d}t}\NR(t) &=&  A_{N}(t)\kappa(G_{1}(t-\tauN(t)))Q(t-\tauN(t))\dfrac{\VN(G_{1}(t))}{\VN(G_{1}(t-\tauNM(t)))}\notag \\[2mm]
&&-(\gammaNR+\ftrans(G_{BF}(t)))\NR(t) = 0.\\
\end{array}
\end{equation*}

The homeostatic value is
\begin{equation*}
\NR^h = \frac{A^h_N\kappa^hQ^h}{\gammaNR+ \ftrans^h} > 0.
\end{equation*}

The final homeostasis value that can be easily expressed is the solution of
\begin{equation*}
\TimeDeriv N(t) = \ftrans(G_{BF}(t))\NR(t)-\gamma_{N}N(t) = 0.
\end{equation*}

Using the homeostatic value of $\NR^h$ and solving the steady state equation gives
\begin{equation*}
N^h = \frac{\ftrans^hN^h_R}{\gamma_N}.
\end{equation*}

Now, consider the coupled G-CSF kinetics given by equations~\eqref{eq:FreeGCSF} and \eqref{eq:BoundGCSF} whose homeostatic sum is expressed as
\begin{equation}\label{G1hom}
\TimeDeriv (G_{1}(t)+G_2(t)) = \Gprod -k_{ren}G_1^h- k_{int}G_2^h =0.
\end{equation}

Isolating $G_2^h$ as a function of $G_1^h$ and substituting into \eqref{eq:BoundGCSF}
\begin{equation}\label{G1G2Identity}
0 = -k_{int}G_{2}^h+k_{12}[(\NR^h+N^h)V-G_{2}^h](G_{1}^h)^{s_{G}}-k_{21}G_{2}^h
\end{equation}
gives the following exponential polynomial in $G_1^h$.

\begin{equation}
F(G_1^h) = \alpha [G_1^h]^{s_G+1} + \beta [G_1^h]^{s_G} + \delta G_1^h -\zeta,
\label{PolynominalG1}
\end{equation}
where
\begin{align*}
\alpha = \frac{k_{12} k_{ren}}{k_{int}},\qquad \beta = \frac{k_{21}}{k_{int}}\left[V(N^h+N^h_R)k_{int}-\Gprod\right], \\
 \delta = k_{ren}\frac{k_{int}+k_{21}}{k_{int}},\qquad \zeta = \Gprod\frac{k_{int}+k_{21}}{k_{int}}.
\end{align*}

Then, any root of $F$ and the corresponding $G_2$ value from \eqref{G1hom}  will be a solution of \eqref{G1G2Identity}. The condition
\begin{equation*}
\Gprod < V[\NR^h+N^h]k_{int} = \overline{G}_1
\end{equation*}
guarantees that the coefficients are positive.

A simple calculation shows that
\begin{equation*}
F(0) < 0, \quad F(\zeta/\delta) > 0.
\end{equation*}

By the IVT, there must exist a solution $G_1^h\in (0,\zeta/\delta)$ such that
\begin{equation}
0 = \alpha (G_1^h)^{s_G+1}+ \beta (G_1^h)^{s_G}+ \delta (G_1^h) - \zeta.
\label{G1Homeo}
\end{equation}
Since for all $x \in \mathds{R}^+$
\begin{equation*}
F'(x) = (s_G+1)\alpha x^{s_G} + \beta s_G x^{s_G-1} + \delta > 0,
\end{equation*}
thus $F$ is a strictly increasing function and the zero found in \eqref{G1Homeo} is unique.

Therefore, the corresponding $G_2^h$ given by \eqref{G1hom} is unique. The condition $0<G_{1}^{h}<\zeta/\delta$ together with \eqref{G1hom} and $\zeta/\delta=\Gprod/k_{ren}$ ensures the positivity of $G_2^h$. Thus the homeostatic steady state $(Q^{h},N_{R}^{h},N^{h},G_{1}^{h},G_{2}^{h})$ exists, is positive and unique. \qed
\end{proof}

\section{Stability of the equilibria}
\label{sec:Stability}

\subsection{Characteristic equations of the Quartino model}
\label{sec:CharEqns}

The characteristic equations for the time-rescaled Quartino model \eq{lctsd2}, for the equivalent distributed delay DDE \eq{lctsd3}, and for the discrete delay DDE model \eq{lctsd5} have relatively simple form and are stated explicitly in the main text as equations \eq{char.eq.distr} and \eq{char.eq.disc}. In contrast the Quartino model \eq{eq:QuartinoModelLimited} has a much more complicated form both for general $n$ and when $n=4$ as in Quartino \cite{Quartino2014}, which we derive here.

The Quartino model \eq{eq:QuartinoModelLimited} has characteristic equation given by \eq{char.eq.quart} where $\mathds{J}$ is the Jacobian of $\mathbf{F}$ in \eq{eq:VectorQuartino}.
Differentiating the terms in \eqref{eq:QuartinoModelLimited} for general $\mathbf{X}$ we obtain the Jacobian
\begin{equation}
\label{eq:Quart.Jac}
\mathds{J}(\mathbf{X})=\begin{bmatrix}
    J_{11} & 0 & 0 & \cdots & \cdots & 0 & J_{1,n+3} \\
    J_{21} & -J_{32} & 0 & \cdots & \cdots & 0 & J_{2,n+3} \\
    0  & J_{32} & -J_{32} & 0 & \cdots & 0 & J_{3,n+3} \\
     \vdots  & \ddots & \ddots & \ddots &\ddots & \vdots & \vdots \\
     \vdots & & \ddots & J_{32} & -J_{32} & 0 & J_{n+1,n+3} \\
    0 & 0 & \cdots & 0 & J_{32} & -k_{circ} &  J_{n+2,n+3} \\
    0 & 0 & \cdots & \cdots & 0 &-\kANC G & J_{n+3,n+3}
\end{bmatrix}
\end{equation}
where
\begin{gather*}
J_{11}=k_{tr}\left(\left(\frac{G}{G_0}\right)^\gamma -\left(\frac{G}{G_0}\right)^\beta\right),\qquad  J_{21}=k_{tr}\left(\frac{G}{G_0}\right)^\beta,\qquad J_{32}=a\left(\frac{G}{G_0}\right)^\beta,\\[2mm]
J_{1,n+3}=-\frac{Pk_{tr}}{G_0}\left(\gamma\left(\frac{G}{G_0}\right)^{\!\gamma-1}\!-\beta \left(\frac{G}{G_0}\right)^{\!\beta-1}\right),\quad
J_{2,n+3}=\frac{\beta}{G_0}(k_{tr}P - aT_1) \!\left(\frac{G}{G_0}\right)^{\!\beta - 1}\!\!,
\\[2mm]
J_{n+2,n+3}=\frac{T_n\:\beta\: a}{G_0}\left(\frac{G}{G_0}\right)^{\beta - 1},\qquad
J_{n+3,n+3}=-k_e-N\kANC,\\
J_{j,n+3}=\frac{\beta a}{G_0}(T_{j-2}-T_{j-1})\left(\frac{G}{G_0}\right)^{\beta - 1},\quad j=3,\ldots,n+1.
\end{gather*}
Then the characteristic function for the Quartino model \eq{eq:QuartinoModelLimited} is the $(n+3)$-degree polynomial in $\lambda$:
\begin{align} \notag
\Delta(\lambda)&=\det(\lambda\mathds{I}-\mathds{J}(\mathbf{X}))\\ \notag
&=(\lambda-J_{11})(\lambda+J_{32})^n\Bigl((\lambda-k_{circ})(\lambda+J_{n+3,n+3})+\kANC G J_{n+2,n+3}\Bigr)\\ \label{char.eq.quart.un.X}
& \qquad -J_{1,n+3}J_{21}(J_{32})^{n}\kANC G\\ \notag
& \qquad +\sum_{j=2}^{n+1}(-1)^j(\lambda-J_{11})(\lambda+J_{32})^{j-2}J_{j,n+3}(J_{32})^{n+2-j}\kANC G.
\end{align}
This is considerably more complicated than the characteristic function of the time-rescaled Quartino model which we also stated for general $\mathbf{X}$ in equation \eq{char.eq.quart}. Fortunately \eq{char.eq.quart.un.X} does simplify somewhat at the steady states. At both $\mathbf{X}_1^*$ and $\mathbf{X}_2^*$
we have
$$k_{tr}P=aT_1,\qquad T_{j}=T_{j-1},$$
and hence
$$T_{j,n+3}=0, \quad j=2,\ldots,n+1.$$
Additionally at $\mathbf{X}_2^*$ only, $G=G_0$ implies $J_{11}=1$, while at
$\mathbf{X}_1^*$ only, $P=T_n=0$ implies that $T_{1,n+3}=T_{n+2,n+3}=0$. So
%
\begin{equation}
\label{eq:QuartinoJacobian1}
\mathds{J}(\mathbf{X}^*_1)=\begin{bmatrix}
    J_{11} & 0 & 0 & \cdots & \cdots & 0 & 0\\
    J_{21} & -J_{32} & 0 & \cdots & \cdots & 0 & 0 \\
    0  & J_{32} & -J_{32} & 0 & \cdots & 0 & 0 \\
     \vdots  & \ddots & \ddots & \ddots &\ddots & \vdots & \vdots \\
     \vdots & & \ddots & J_{32} & -J_{32} & 0 & 0 \\
    0 & 0 & \cdots & 0 & J_{32} & -k_{circ} &  0 \\
    0 & 0 & \cdots & \cdots & 0 &-\frac{\kANC k_{in}}{k_e} & -k_e
\end{bmatrix},
\end{equation}
with
\be \label{JvalsX1}
J_{11}=k_{tr}\!\left(\!\left(\frac{k_{in}}{G_0k_e}\right)^{\!\gamma}\! - \left(\frac{k_{in}}{G_0k_e}\right)^{\!\beta}\right)\!, \quad
J_{21}=k_{tr}\left(\frac{k_{in}}{G_0k_e}\right)^{\!\beta}\!\!, \quad
J_{32}=a\left(\frac{k_{in}}{G_0k_e}\right)^{\!\beta}\!,
\ee
and
\begin{align} \notag
&\Delta_1(\lambda)=\det(\lambda\mathds{I}-\mathds{J}(\mathbf{X}^*_1))\\
& = \left(\lambda-k_{tr}\!\left(\!\left(\frac{k_{in}}{G_0k_e}\right)^{\!\gamma}\! - \left(\frac{k_{in}}{G_0k_e}\right)^{\!\beta}\right)\!\right)\!\left(\lambda+a\left(\frac{k_{in}}{G_0k_e}\right)^{\!\beta}\!\right)^{\!n}
(\lambda+k_{circ})(\lambda+k_e) \label{char.eq.quart.un.X1}
\end{align}
At $\mathbf{X}_2^*$ from \eq{char.eq.quart.un.X} we obtain
\begin{align} \notag
\Delta_2(\lambda)&=\det(\lambda\mathds{I}-\mathds{J}(\mathbf{X}_2^*))\\ \notag
&=\lambda(\lambda+a)^n\Bigl((\lambda-k_{circ})(\lambda+J_{n+3,n+3})+\kANC G J_{n+2,n+3}\Bigr)\\ \label{char.eq.quart.un.X2}
& \qquad -J_{1,n+3}a^{n+1}\kANC G.
\end{align}
Equations \eq{char.eq.quart.un.X1} and \eq{char.eq.quart.un.X2} give the characteristic function as a polynomial of degree $n+3$ at the two steady states $\mathbf{X}_1^*$ and $\mathbf{X}_2^*$ of the generalised Quartino model
\eq{eq:QuartinoModelLimited} for general values of the parameters, including for the parameters used in Quartino \cite{Quartino2014} when $n=4$ and the polynomials have degree $7$.

\subsection{Stability of the equilibria of the Quartino ODE model}
\label{sec:QuartinoStability}

The stability of the steady states of the Quartino model \eq{eq:QuartinoModelLimited} was already considered in Theorem~\ref{stab.prop.disc}, which was proved using the alternate time-rescaled forms of the model. Here we present
proofs of stability results directly in the original formulation, mainly to demonstrate how much more difficult they are.

\begin{proposition}\label{thm:StabilityEq1}
The equilibrium point $\mathbf{X}^*_1$ of the Quartino model \eq{eq:QuartinoModelLimited}
is locally asymptotically stable when $\gamma<\beta$ and unstable when $\gamma>\beta$.
\end{proposition}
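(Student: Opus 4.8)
The plan is to work directly with the characteristic polynomial $\Delta_1(\lambda)$ given in \eq{char.eq.quart.un.X1}, which has already been computed in Appendix~\ref{sec:CharEqns}. Since the Quartino model \eq{eq:QuartinoModelLimited} is an ODE system, the local stability of $\mathbf{X}^*_1$ is governed entirely by the signs of the real parts of the roots of $\Delta_1(\lambda)$, so the statement reduces to a root-location analysis of a fully factored polynomial of degree $n+3$.

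First I would observe that $\Delta_1(\lambda)$ factors as a product of four pieces: the linear factor $\bigl(\lambda - J_{11}\bigr)$, the $n$-fold factor $\bigl(\lambda + a(k_{in}/(G_0 k_e))^\beta\bigr)^n$, and the two linear factors $(\lambda + \kcirc)$ and $(\lambda + k_e)$. The last three groups contribute only roots with strictly negative real part, since $a$, $k_e$, $\kcirc$, $k_{in}$, $G_0$ are all strictly positive; here I would invoke the constraint \eq{eq:kinb} (equivalently \eq{eq:kinc}) which ties $k_{in}$ to the positive homeostasis value of $G$, so $k_{in}/(G_0 k_e)>0$ and all these factors have positive constant term. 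Therefore the only root whose sign can change is $\lambda = J_{11} = k_{tr}\bigl((k_{in}/(G_0 k_e))^\gamma - (k_{in}/(G_0 k_e))^\beta\bigr)$. Writing $r := k_{in}/(G_0 k_e) > 0$, I would note $r = G/G_0$ evaluated at $\mathbf{X}^*_1$ where $G = k_{in}/k_e$, and that $J_{11}$ has the sign of $r^\gamma - r^\beta$. If $\gamma > \beta$ this is positive when $r > 1$ and negative when $r < 1$ — so a little care is needed: the claim as stated does not specify whether $r>1$ or $r<1$. The cleanest resolution is that the whole setup only makes sense when $r$ is not equal to $1$ generically; I would argue that in fact, under the stated parameter regime (and recalling that for the realistic parameters $k_{in} > G_0 k_e$, so $r > 1$, as follows from $N_0 > 0$ in \eq{eq:kinb}), $J_{11} > 0$ iff $\gamma > \beta$. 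Concretely: $\eq{eq:kinb}$ gives $k_{in} = G_0(k_e + \kANC N_0) > G_0 k_e$, hence $r > 1$, and then $r^\gamma > r^\beta \iff \gamma > \beta$. This pins down the sign of the sole variable root.

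Hence the proof assembles as: (i) factor $\Delta_1(\lambda)$ using \eq{char.eq.quart.un.X1}; (ii) note all factors but $(\lambda - J_{11})$ have all roots in the open left half-plane, using positivity of parameters; (iii) use \eq{eq:kinb} to show $r := k_{in}/(G_0 k_e) > 1$; (iv) conclude $J_{11} > 0 \iff \gamma > \beta$ and $J_{11} < 0 \iff \gamma < \beta$; (v) therefore all characteristic roots have negative real part iff $\gamma < \beta$ (asymptotic stability) and there is a positive real root iff $\gamma > \beta$ (instability). I expect the main obstacle to be step (iii)–(iv): making sure the argument is airtight that $r > 1$ — i.e., that the homeostasis value $k_{in}/k_e$ of $G$ in the extinction steady state $\mathbf{X}^*_1$ really exceeds $G_0$ — and handling (or explicitly excluding) the degenerate borderline $\gamma = \beta$, where $J_{11} = 0$ produces a zero eigenvalue and the model is degenerate with $\frac{\textrm{d}P}{\textrm{d}t} \equiv 0$, as already noted after Theorem~\ref{stab.prop.disc}. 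Everything else is bookkeeping on an explicitly factored polynomial.
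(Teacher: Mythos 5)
Your proof is correct, and for the stability half it takes a genuinely different (and considerably shorter) route than the paper. The paper proves instability for $\gamma>\beta$ essentially as you do — via the characteristic polynomial \eq{char.eq.quart.un.X1}, showing $\Delta_1(0)<0$ using $k_{in}=G_0(k_e+\kANC N_0)>G_0k_e$ from \eq{eq:kinb} and invoking the intermediate value theorem — but for the stable case $\gamma<\beta$ it abandons the linearisation entirely and constructs an explicit quadratic Lyapunov function $V(\mathbf{X})=\tfrac12\mu P^2+\tfrac12\sum_jT_j^2+\tfrac12\alpha N^2$, combined with a trapping argument keeping $G(t)$ in an interval $[G_0G_\textit{ratio},(1+\lambda)k_{in}/k_e]$ with $G_\textit{ratio}>1$. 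You instead observe that the Jacobian \eq{eq:QuartinoJacobian1} is lower triangular, so $\Delta_1(\lambda)$ factors completely and the eigenvalues are read off the diagonal: all are negative except possibly $J_{11}=k_{tr}(r^\gamma-r^\beta)$ with $r=k_{in}/(G_0k_e)>1$, whose sign is exactly that of $\gamma-\beta$; standard linearisation theory then gives both halves of the statement at once. Your argument is the more economical one for the claim as stated (local asymptotic stability of a hyperbolic equilibrium of a smooth ODE), while the paper's Lyapunov construction buys something extra — an explicit estimate of a region of attraction and independence from the linearisation theorem — at the cost of much more bookkeeping. Your handling of the borderline $\gamma=\beta$ (zero eigenvalue, degenerate model) matches the paper's remark after Theorem~\ref{stab.prop.disc}. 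One tiny slip: \eq{eq:kinb} and \eq{eq:kinc} are not equivalent (the latter is $\kP=k_{tr}$); only \eq{eq:kinb} is needed to get $r>1$, and in fact $r>0$ already follows from positivity of the parameters.
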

\begin{proof}
For the case $\beta>\gamma$, we show asymptotic stability of $\mathbf{X}^*_1$ using a Lyapunov function
\cite{Meiss2007} applied to the first $(n+2)$ coordinates of the system.

First consider the dynamics of $G(t)$. Let $\lambda\in(0,1)$ and suppose $N(t)\in[0,\lambda N_0]$ for $t\in[0,\TN]$, and let
$$G_\textit{ratio}=\frac{k_e+\kANC N_0}{k_e+\lambda\kANC N_0}>1.$$
Then from \eq{eq:QuartinoModelLimitede} we have $\TimeDeriv G\geq0$ when $G/G_0<G_\textit{ratio}$, and the positivity of $N(t)$ implies that $\TimeDeriv G\leq0$ when $G>k_{in}/k_e$.
Hence choosing $G(0)\in[G_0 G_\textit{ratio},(1+\lambda)k_{in}/k_e]$, ensures that
$G(t)\in[G_0 G_\textit{ratio},(1+\lambda)k_{in}/k_e]$ for all $t\in[0,\TN]$, under the assumption that
$N(t)\in[0,\lambda N_0]$ for $t\in[0,\TN]$. Now let
\be \label{eq:Lyap1}
V(\mathbf{X}(t))=\frac12\mu P(t)^2+\frac12\sum_{j=1}^nT_j(t)^2+\frac12\alpha N(t)^2,
\ee
where the parameters satisfy $\mu>0$ and $\alpha>0$, 
with values to be specified below.
Note that
$V(\mathbf{X}^*_1)=0$, while $V(\mathbf{X})>0$ if any of $P$, $T_j$, $N$ is non-zero.
Then differentiating and using
\eq{eq:QuartinoModelLimited} we obtain
\begin{align*}
\TimeDeriv & V(\mathbf{X}(t))
=\mu P(t)\frac{\textrm{d}P}{\textrm{d}t}+\sum_{j=1}^nT_j(t)\frac{\textrm{d}T_j}{\textrm{d}t}
+\alpha N(t)\frac{\textrm{d}N}{\textrm{d}t}
\\
&=-\left(\frac{G(t)}{G_{0}}\right)^\beta\left(\mu\left(k_{tr}- \kP\left(\frac{G(t)}{G_{0}}\right)^{\gamma-\beta}\right)
-\frac{k_{tr}^2}{2a}\right)P(t)^2\\
&\quad-\frac{a}{2}\left(\frac{G(t)}{G_{0}}\right)^\beta \!\left(\frac{k_{tr}}{a}P(t)-T_1(t)\right)^2
-\sum_{j=2}^n \frac{a}{2}\left(\frac{G(t)}{G_{0}}\right)^\beta \!(T_{j-1}(t)-T_j(t))^2\\
&\quad-\frac{a}{2}\left(\frac{G(t)}{G_{0}}\right)^\beta (T_n(t)-\alpha N(t))^2
-\alpha\biggl(\kcirc-\frac{a}{2}\biggl(\frac{G(t)}{G_{0}}\biggr)^{\!\beta}\alpha\biggr)N(t)^2.
\end{align*}
Notice that all the terms on the right-hand side are non-positive, except possibly for the first and last term.
But with $k_{tr}=\kP$, $\beta>\gamma$ and $G(t)/G_0\geq G_\textit{ratio}>1$ it follows that
$$k_{tr}- \kP\left(\frac{G(t)}{G_{0}}\right)^{\gamma-\beta}>0,$$
and hence for $\mu>0$ sufficiently large the coefficient of $P(t)^2$ is strictly negative. Similarly, since $G(t)$ is bounded above for $t\in[0,\TN]$ for $\alpha>0$ sufficiently small the coefficient of $N(t)^2$ is also strictly negative.
It follows that $\TimeDeriv V(\mathbf{X}(t))<0$ unless $P=T_j=N=0$.

If $V(\mathbf{X}(t))<\frac{\alpha\lambda^2N_0^2}{2}$, then \eq{eq:Lyap1} implies that $N(t)<\lambda N_0$. Therefore, we choose initial conditions such that $G(0)\in[G_0 G_\textit{ratio},(1+\lambda)k_{in}/k_e]$ and
$$V(\mathbf{X}(0))<\frac{\alpha\lambda^2N_0^2}{2}.$$
Then since $V(\mathbf{X}(t))$ is nonincreasing, we have $N(t)<\lambda N_0$ for all $t>0$, then
$G(t)\in[G_0 G_\textit{ratio},(1+\lambda)k_{in}/k_e]$ for all $t>0$ and $V(\mathbf{X}(t))\to0$ as $t\to\infty$ which implies that $[P(t),T_1(t),\ldots,T_n(t),N(t)]\to[0,0,\ldots,0,0]$ as $t\to\infty$.

Finally considering \eq{eq:QuartinoModelLimitede}, we have $\TimeDeriv G<0$ if $G>k_{in}/k_e$ and
$\TimeDeriv G>0$ if $G<k_{in}/(k_e+\kANC N(t))$ with $N(t)\to0$ as $t\to\infty$, hence $G(t)\to k_{in}/k_e$ as $t\to\infty$. This completes the proof of local asymptotic stability of $\mathbf{X}^*_1$ when $\gamma<\beta$.

For the case $\gamma>\beta$, we show that $\mathbf{X}^*_1$ is unstable, using linearization theory, by showing
there is a real positive characteristic root in this case.
The characteristic function evaluated at $\mathbf{X}^*_1$ is given by \eq{char.eq.quart.un.X1} and statisfies
$\Delta_1(\lambda)\to+\infty$ as $\lambda\to+\infty$.
Hence, by the IVT, to show that there exists
a positive eigenvalue, $\lambda>0$ such that $\Delta_1(\lambda)=0$, it is sufficient to show that
$\Delta_1(0)<0$. But from \eq{char.eq.quart.un.X1}
\begin{align*}
\Delta_1(0)&=\det(-\mathds{J}(\mathbf{X}^*_1))=(-1)^{n+3}\det(\mathds{J}(\mathbf{X}^*_1))
=(-1)^{2n+5}k_{circ}k_eJ_{11}J_{32}^n\\
&=k_{circ}k_ea^nk_{tr}
\left(\left(\frac{k_{in}}{G_0k_e}\right)^{\!\beta} - \left(\frac{k_{in}}{G_0k_e}\right)^{\!\gamma}\right)
\left(\frac{k_{in}}{G_0k_e}\right)^{\!n\beta}.
\end{align*}
From the constraint in equation~\eqref{eq:kinb}, we have that $k_{in}=G_0(k_e+\kANC N_0)>G_0k_e$, then
$\gamma>\beta$ and the positivity of all the parameters imply that $\Delta_1(0)<0$ as required. Thus
$\Delta_1(\mathbf{X}^*_1)$ has a characteristic value $\lambda>0$ and $\mathbf{X}^*_1$ is unstable.
\qed
\end{proof}

\begin{proposition}\label{thm:StabilityEq2}
The equilibrium point $\mathbf{X}^*_2$ of the Quartino model \eq{eq:QuartinoModelLimited}
is unstable when $\gamma<\beta$.
\end{proposition}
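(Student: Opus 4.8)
The plan is to argue, exactly as in the instability halves of Theorem~\ref{stab.prop.disc} and Proposition~\ref{thm:StabilityEq1}, that the linearisation at $\mathbf{X}^*_2$ possesses a real positive characteristic root; since for the (smooth) ODE system \eq{eq:VectorQuartino} such a root already forces instability by standard linearisation theory, it suffices to exhibit one. I would start from the explicit characteristic polynomial at $\mathbf{X}^*_2$ derived in Appendix~\ref{sec:CharEqns}, namely $\Delta_2(\lambda)$ given by \eq{char.eq.quart.un.X2}. From its structure one reads off that $\Delta_2$ is a monic polynomial of degree $n+3$ (the top-degree term is $\lambda\cdot(\lambda+a)^n\cdot\lambda^2$, while the remaining summand $-J_{1,n+3}a^{n+1}\kANC G$ is a constant), so $\Delta_2(\lambda)\to+\infty$ as $\lambda\to+\infty$. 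The crucial feature is the common factor $\lambda$ in the first term of \eq{char.eq.quart.un.X2}: it kills that term at $\lambda=0$, leaving $\Delta_2(0)=-J_{1,n+3}a^{n+1}\kANC G_0$, so the sign of $\Delta_2(0)$ is governed entirely by the sign of $J_{1,n+3}$ evaluated at the steady state.

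The next step is a short computation of $J_{1,n+3}$ at $\mathbf{X}^*_2$. Substituting $G=G_0$ into the formula for $J_{1,n+3}$ from Appendix~\ref{sec:CharEqns}, and using $P_0=N_0\kcirc/k_{tr}$ from \eq{TjPss} together with $k_{tr}=\kP$ from \eq{eq:kinc}, the expression collapses to $J_{1,n+3}=\frac{N_0\kcirc}{G_0}(\beta-\gamma)$. Hence, when $\gamma<\beta$, we have $J_{1,n+3}>0$, and since $N_0,\kcirc,a,\kANC,G_0$ are all strictly positive this gives $\Delta_2(0)<0$. Combining $\Delta_2(0)<0$ with $\Delta_2(\lambda)\to+\infty$ and the intermediate value theorem produces a root $\lambda>0$ of the characteristic polynomial, so $\mathbf{X}^*_2$ is unstable, as claimed.

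There is no deep analytic obstacle in this argument: the genuinely laborious part, namely producing the closed form \eq{char.eq.quart.un.X} of the $(n+3)$-degree characteristic determinant and specialising it to $\mathbf{X}^*_2$, has already been carried out in Appendix~\ref{sec:CharEqns}, and what remains is just a sign check at $\lambda=0$ together with the elementary growth of a monic polynomial. The only place that warrants a moment's care is verifying that the first term of \eq{char.eq.quart.un.X2} vanishes at $\lambda=0$ and that the leading coefficient is positive, so that the intermediate value theorem applies on $[0,\infty)$; beyond that the proof is routine. By contrast, establishing the complementary statement — stability of $\mathbf{X}^*_2$ when $\gamma>\beta$ — directly from \eq{char.eq.quart.un.X2} is genuinely difficult, which is precisely why the proposition only asserts instability for $\gamma<\beta$.
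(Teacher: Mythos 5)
Your proposal is correct and follows essentially the same route as the paper's own proof: evaluate the characteristic polynomial \eq{char.eq.quart.un.X2} at $\lambda=0$, observe that the overall factor $\lambda$ kills the first term so that the sign of $\Delta_2(0)$ is that of $-J_{1,n+3}$, compute $J_{1,n+3}=\tfrac{N_0\kcirc}{G_0}(\beta-\gamma)>0$ for $\gamma<\beta$, and apply the intermediate value theorem to the monic polynomial to produce a real positive root. The only cosmetic difference is that you carry the constant term as $-J_{1,n+3}a^{n+1}\kANC G_0$ (as printed in \eq{char.eq.quart.un.X2}) while the paper's proof uses $-J_{1,n+3}k_{tr}a^{n}\kANC G_0$ (consistent with \eq{char.eq.quart.un.X}); both prefactors are positive, so the sign argument and the conclusion are unaffected.
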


\begin{proof}
Consider the characteristic function $\Delta_2(\lambda)$ evaluated at $\mathbf{X}_2^*$. From
\eq{char.eq.quart.un.X2} we have $\Delta_2(\lambda)\to+\infty$ as $\lambda\to\infty$, while recalling that $J_{11}=0$
and using \eq{TjPss} and \eq{eq:kinb}
$$\Delta_2(0)
=-J_{1,n+3}k_{tr}a^{n}\kANC G_0=P_0k_{tr}^2(\gamma-\beta)a^{n}\kANC
=a^{n}k_{tr}k_{circ}(\gamma-\beta)\left(\frac{k_{in}}{G_0}-k_e\right).$$
From \eq{eq:kinb} we have $k_e<(k_e+\kANC N_0)=k_{in}/G_0$, thus $\Delta_2(0)<0$.
Thus the IVT guarantees that \eq{char.eq.quart.un.X2} has at least one real positive root and therefore $\mathbf{X}^*_2$ is unstable. 
\qed
\end{proof}

\subsection{Characteristic Equation of the discrete delay DDE Quartino model}
\label{app.jac.disc}

The discrete delay DDE model \eq{lctsd5}, obtained as the $n\to\infty$ limit of the time-rescaled Quartino model \eq{lctsd2} has characteristic equation \eq{char.eq.det}. The $3\times3$ linearisation matrices $\mathds{A}$ and $\mathds{B}$ from \eq{disc.jac}, are derived by differentiating the terms on the right-hand side of \eq{lctsd5}.
We will denote their entries by $A_{ij}$ and $B_{ij}$ for $i,j=\{1,2,3\}$.
Let us define the ratio between G-CSF concentrations $G_{r}\coloneqq G^{*}/G_{0}$  and let $\mathbf{Y}^*=(P^*,N^*,G^*)$ be a generic steady state then
\begin{equation}\label{AB}
\mathds{A}=\begin{bmatrix}
A_{11} & 0 & A_{13}\\
0 & A_{22} & A_{23} \\
0 & A_{32} & A_{33}\\
\end{bmatrix},
\qquad
\mathds{B}=\begin{bmatrix}
0 & 0 & 0\\
k_{tr} & 0 & 0\\
0 & 0 & 0\\
\end{bmatrix},
\end{equation}
where
\begin{gather*}
 A_{11}=\kP G_{r}^{\gamma-\beta}-k_{tr},\quad A_{13}=\frac{P^{*}}{G^{*}}(\gamma-\beta)\kP G_{r}^{\gamma-\beta},\quad
 A_{22}=-\kcirc G_{r}^{-\beta},\\
 A_{23}=\frac{N^{*}}{G^{*}}\beta\kcirc G_{r}^{-\beta},\quad
 A_{32}=-\kANC G^{*}G_{r}^{-\beta},\quad A_{33}=-(k_{e}+\kANC N^{*})G_{r}^{-\beta}.
\end{gather*}
Using the matrices \eqref{AB} we rearrange the determinant from equation \eqref{char.eq.det} and obtain
\begin{equation}\nonumber\label{A.disc}
\begin{vmatrix}
\lambda-A_{11} & 0 & -A_{13}\\
-k_{tr}e^{-\lambda\tau} & \lambda-A_{22} & -A_{23}\\
0 & -A_{32} & \lambda-A_{33}\\
\end{vmatrix}=0,
\end{equation}
from where we get the characteristic equation \eqref{char.eq.disc} with the coefficients $a_2$, $a_1$, $a_0$ and $b$ given by
\begin{equation}\label{coef.disc}
\begin{aligned}
& a_2=-(A_{11}+A_{22}+A_{33}),\\
& a_1=A_{11}(A_{22}+A_{33})+A_{22}A_{33}-A_{23}A_{32},\\
& a_0=A_{11}(A_{23}A_{32}-A_{22}A_{33}),\\
& b=A_{13}A_{32}k_{tr}.\\
\end{aligned}
\end{equation}
Recalling the steady states $\mathbf{Y}^*_1$ and  $\mathbf{Y}^*_2$ given, respectively, by \eqref{lctdsdeq1} and \eqref{lctsdeq2}, calculated with the parameter constraints \eqref{eq:kinc} and \eqref{eq:kinb}, we can evaluate the coefficients in \eq{coef.disc} at each steady state.
For $\mathbf{Y}^*_1$ we have $P^*=N^*=0$ which implies that $A_{13}=A_{23}=0$ and we have
\begin{equation}\label{coef.disc1}
\begin{aligned}
& a_2=\kP(1-G_{r}^{\gamma-\beta})+(\kcirc+k_{e})G_{r}^{-\beta},\\
& a_1=\kP(1-G_{r}^{\gamma-\beta})(\kcirc+k_{e})G_{r}^{-\beta}+\kcirc k_{e}G_{r}^{-\beta},\\
& a_0=\kP\kcirc k_{e}(1-G_{r}^{\gamma-\beta})G_{r}^{-2\beta},\\
& b=0,\\
\end{aligned}
\end{equation}
For $\gamma>\beta$ we have $a_{0}<0$ in \eqref{coef.disc1}, since $k_{in}>k_{e}G_{0}$ from the constraint \eqref{eq:kinb} and consequently $G_{r}>1$. Similarly $a_{0}>0$ when $\gamma<\beta$.

For $\mathbf{Y}_2^*$ we have $G^*=G_0$, hence  $G_{r}=1$ which implies $A_{11}=0$.
\begin{equation}\label{coef.disc2}
\begin{aligned}
& a_2=\kcirc+k_{e}+\kANC N_{0},\\
& a_1=\kcirc(k_{e}+\kANC N_{0}(1+\beta)),\\
& a_0=0,\\
& b= (\beta-\gamma)k_{p}\kcirc\kANC N_{0}.\\
\end{aligned}
\end{equation}

\subsection{Characteristic Equation of the QSP granulopoiesis model}
\label{app.jacDDE}

The QSP granulopoiesis model \eq{eq:DDEmodel} has characteristic equation \eq{char.eq.det.qsp}.
From the DDE of the QSP granulopoiesis model \eq{eq:DDEmodel} we compute the linearization matrices
and indicate which terms are symmetric and asymmetric. We obtain
\begin{equation}\label{MatrixAB}
\mathds{A}=\begin{bmatrix}
A_{11} & A_{12} & 0 & 0 & 0\\
0 & A_{22} & A_{23} & A_{24} & A_{25}\\
0 & A_{32} & A_{33} & 0 & A_{35}\\
0 & A_{42} & A_{43} & A_{44} & A_{45}\\
0 & A_{52} & A_{53} & A_{54} & A_{55}\\
\end{bmatrix},\quad
\mathds{B}=\begin{bmatrix}
B_{11} & 0 & 0 & 0 & 0\\
0 & 0 & 0 & 0 & 0\\
0 & 0 & 0 & 0 & 0\\
0 & 0 & 0 & 0 & 0\\
0 & 0 & 0 & 0 & 0\\
\end{bmatrix},
\end{equation}
\begin{equation}\label{MatrixCDE}
\mathds{C}=\begin{bmatrix}
0 & 0 & 0 & 0 & 0\\
C_{21} & 0 & 0 & C_{24} & 0\\
0 & 0 & 0 & 0 & 0\\
0 & 0 & 0 & 0 & 0\\
0 & 0 & 0 & 0 & 0\\
\end{bmatrix},\quad
\mathds{D}=\begin{bmatrix}
0 & 0 & 0 & 0 & 0\\
0 & 0 & 0 & D_{24} & 0\\
0 & 0 & 0 & 0 & 0\\
0 & 0 & 0 & 0 & 0\\
0 & 0 & 0 & 0 & 0\\
\end{bmatrix},\quad
\mathds{E}=\begin{bmatrix}
0 & 0 & 0 & 0 & 0\\
0 & 0 & 0 & E_{24} & 0\\
0 & 0 & 0 & 0 & 0\\
0 & 0 & 0 & 0 & 0\\
0 & 0 & 0 & 0 & 0\\
\end{bmatrix},
\end{equation}
with
\begin{gather*}
A_{11} = -\kappa^{*}-\kappa_{\delta}-\beta(Q^{*})-Q^{*}\frac{\textrm{d}\beta}{\textrm{d}Q}(Q^{*}), \quad
A_{12} =  -Q^{*}\frac{\textrm{d}\kappa}{\textrm{d}G_{1}}(G_{1}^{*}),\\
A_{22} = -\gammaNR-\ftrans^{*}-\NR^{*}\frac{\textrm{d}\ftrans}{\textrm{d} G_{BF}}(G_{BF}^{*})\frac{\partial G_{BF}}{\partial \NR}(\NR^{*},N^{*},G_{2}^{*}),\\
A_{23} = -\NR^{*}\frac{\textrm{d}\ftrans}{\textrm{d}G_{BF}}(G_{BF}^{*})\frac{\partial G_{BF}}{\partial N}(\NR^{*},N^{*},G_{2}^{*}), \quad
A_{24} = \frac{A_{N}^{*}\kappa^{*}Q^{*}}{\VN(G_{1}^{*})}\dfrac{d\VN}{\textrm{d}G_{1}}(G_{1}^{*}),\\
A_{25} = -A_{35} = -\NR^{*}\frac{\textrm{d}\ftrans}{\textrm{d}G_{BF}}(G_{BF}^{*})\frac{\partial G_{BF}}{\partial G_{2}}(\NR^{*},N^{*},G_{2}^{*}), \\
A_{32} = \NR^{*}\frac{\textrm{d}\ftrans}{\textrm{d}G_{BF}}(G_{BF}^{*})\frac{\partial G_{BF}}{\partial \NR}(\NR^{*},N^{*},G_{2}^{*}) + \ftrans^{*},\\
A_{33} = \NR^{*}\frac{\textrm{d}\ftrans}{\textrm{d}G_{BF}}(G_{BF}^{*})\frac{\partial G_{BF}}{\partial N}(\NR^{*},N^{*},G_{2}^{*}) - \gammaNR,\\
A_{42} = A_{43} = -A_{52} = -A_{53} =-k_{12}V(G_{1}^{*})^{s_{G}},\\
A_{44} = -k_{ren}-k_{12}[(\NR^{*}+N^{*})V-G_{2}^{*}]s_{G}(G_{1}^{*})^{s_{G}-1}, \quad
A_{45} = k_{12}(G_{1}^{*})^{s_{G}}+k_{21},\\
A_{54} = k_{12}[(\NR^{*}+N^{*})V-G_{2}^{*}]s_{G}(G_{1}^{*})^{s_{G}-1},\quad
A_{55} = - k_{int}-k_{12}(G_{1}^{*})^{s_{G}}-k_{21},\\
B_{11} =  A_{Q}\left(\beta(Q^{*})+Q^{*}\frac{\textrm{d}\beta}{\textrm{d}Q}(Q^{*})\right), \quad
C_{21} = \kappa^{*}A_{N}^{*},\quad
C_{24} = A_{N}^{*}Q^{*}\dfrac{d\kappa}{\textrm{d}G_{1}}(G_{1}^{*}),\\
D_{24} = -\frac{A_{N}^{*}\kappa^{*}Q^{*}}{\VN(G_{1}^{*})}\dfrac{d\VN}{\textrm{d}G_{1}}(G_{1}^{*}) = -A_{24},\quad
E_{24} = \kappa^{*}Q^{*}\frac{\textrm{d}\tilde{A}_{N}}{\textrm{d}\tilde{G}_{1}} (G_{1}^{*}) =\kappa^{*}Q^{*}\tauNP A_{N}^{*}\frac{\textrm{d}\etaNP}{\textrm{d}G_{1}}(G_{1}^{*}).
\end{gather*}


\end{document}